 \definecolor{darkblue}{rgb}{0,0,.5}
\newcommand{\C}[1]{{\cal{#1}}}
\newcommand{\bb}[1]{\textbf{#1}}
\newcommand{\bs}[1]{\boldsymbol{#1}}
\newcommand{\lr}[1]{{\left\langle {#1}\right\rangle}}
\begin{document}

\title{Non-Markovianity and negative entropy production rates}

\author{Philipp Strasberg}
\author{Massimiliano Esposito}
\affiliation{Physics and Materials Science Research unit, University of Luxembourg, L-1511 Luxembourg, Luxembourg}

\date{\today}

\begin{abstract}
 Entropy production plays a fundamental role in nonequilibrium thermodynamics to quantify the irreversibility of open 
 systems. Its positivity can be ensured for a wide class of setups, but the entropy production \emph{rate} can become 
 negative sometimes. This is often taken as an indicator of non-Markovianity. We make this link precise by showing 
 under which conditions a negative entropy production rate implies non-Markovianity and when it does not. 
 For a system coupled to a single heat bath this can be established within a unified language for two setups: 
 (i)~the dynamics resulting from a coarse-grained description of a Markovian master equation and 
 (ii)~the classical Hamiltonian dynamics of a system coupled to a bath. 
 The quantum version of the latter result is shown not to hold despite the fact that the integrated thermodynamic 
 description is formally equivalent to the classical case. The instantaneous fixed point of a non-Markovian dynamics 
 plays an important role in our study. Our key contribution is to provide a consistent theoretical framework to 
 study the finite-time thermodynamics of a large class of dynamics with a precise link to its non-Markovianity.
\end{abstract}

\maketitle

\newtheorem{mydef}{Definition}[section]
\newtheorem{lemma}{Lemma}[section]
\newtheorem{thm}{Theorem}[section]
\newtheorem{crllr}{Corollary}[section]
\newtheorem*{thm*}{Theorem}
\theoremstyle{remark}
\newtheorem{rmrk}{Remark}[section]

\section{Introduction}

The theory of stochastic processes provides a powerful tool to describe the dynamics of open systems. Physically, 
the noise to which these systems are subjected results from the fact that the system is coupled to an environment 
composed of many other degrees of freedom about which we have only limited information and control. This coarse-grained 
description of the system -- as opposed to the microscopic description involving the composite system \emph{and} 
environment -- is particularly appealing and tractable, when the Markovian approximation is applied. Therefore, Markovian 
stochastic dynamics are nowadays very commonly used to describe small open systems ranging from biochemistry (e.g., 
enzymes, molecular motors) to quantum systems (e.g., single atoms or molecules)~\cite{HillBook1977, SpohnRMP1980, 
VanKampenBook2007, BreuerPetruccioneBook2002}. Due to their outstanding importance for many branches of science, an 
entire branch of mathematics is also devoted to their study~\cite{KemenySnellBook1976}. 

A common feature of all Markovian processes is their \emph{contractivity}, i.e., the volume of accessible states shrinks 
monotonically during the evolution. This statement can be made mathematically precise by considering two arbitrary 
preparations, $p_\alpha(0)$ and $q_\alpha(0)$, describing different probabilities to find the system in state $\alpha$ 
at the initial time $t=0$. Their distance, as measured by the relative entropy 
$D[p_\alpha\|q_\alpha] \equiv \sum_\alpha p_\alpha \ln\frac{p_\alpha}{q_\alpha}$, 
monotonically decreases over time $t$, i.e., for all $t \ge 0$
\begin{equation}\label{eq contractivity}
 \frac{\partial}{\partial t} D[p_\alpha(t)\|q_\alpha(t)] \le 0.
\end{equation}
In other words, the ability to distinguish between any pair of initial states monotonically shrinks in time due to a 
continuous loss of information from the system to the environment. We note that also other distance quantifiers than the 
relative entropy fulfill Eq.~(\ref{eq contractivity}) and an analogue of Eq.~(\ref{eq contractivity}) also 
holds in the quantum regime where its violations has been proposed as an indicator of 
non-Markovianity~\cite{BreuerLainePiiloPRL2009, RivasHuelgaPlenioRPP2014, BreuerEtAlRMP2016}. 

The contractivity property~(\ref{eq contractivity}) of Markov processes gets another interesting physical 
interpretation in quantum and stochastic thermodynamics. In these fields, a nonequilibrium thermodynamics is 
systematically build on top of Markovian dynamics typically described by (quantum) master or Fokker-Planck 
equations~\cite{SchnakenbergRMP1976, JiangQianQianBook2004, EspositoHarbolaMukamelRMP2009, SekimotoBook2010, 
SeifertRPP2012, KosloffEntropy2013, SchallerBook2014, VandenBroeckEspositoPhysA2015}. In addition to being Markovian, 
the rates entering the dynamics must also satisfy local detailed balance. For a system coupled to a single heat bath, 
this ensures that the Gibbs state of the system is a null eigenvector of the generator of the dynamics at all times $t$. 
For autonomous dynamics, this implies that the fixed point of the dynamics is an equilibrium Gibbs state. 
For nonautonomous (also called \emph{driven}) dynamics, i.e., when some parameters are changed in time according to a 
prescribed protocol $\lambda_t$, the system in general does not reach a steady state, but the Gibbs state remains a null 
eigenvector of the generator of the dynamics at all times $t$. 
We call this an \emph{instantaneous} fixed point of the dynamics in the following. If we denote the Gibbs state of 
the system by $e^{-\beta E_\alpha(\lambda_t)}/\C Z(\lambda_t)$ with the energy $E_\alpha(\lambda_t)$ of state $\alpha$ 
and the equilibrium partition function $\C Z(\lambda_t) = \sum_\alpha e^{-\beta E_\alpha(\lambda_t)}$, the second law of 
thermodynamics for a driven system in contact with a single heat bath at inverse temperature $\beta$ can be expressed as 
\begin{equation}\label{eq 2nd law intro}
\dot\Sigma(t) = -\left.\frac{\partial}{\partial t}\right|_{\lambda_t} D\left[p_\alpha(t)\left\|\frac{e^{-\beta E_\alpha(\lambda_t)}}{\C Z(\lambda_t)}\right.\right] \ge 0.
\end{equation}
Here, the derivative is evaluated at fixed $\lambda_t$, i.e., $E_\alpha(\lambda_t)$ and $\C Z(\lambda_t)$ are treated as 
constants, which only depend parametrically on time. The quantity $\dot\Sigma(t)$ is the entropy production rate. 
Its positivity follows from the fact that the dynamics is Markovian \emph{and} that the Gibbs state is an instantaneous 
fixed point of the dynamical generator at all times. Within the conventional weak coupling and Markovian 
framework~\cite{SchnakenbergRMP1976, JiangQianQianBook2004, EspositoHarbolaMukamelRMP2009, SekimotoBook2010, 
SeifertRPP2012, KosloffEntropy2013, SchallerBook2014, VandenBroeckEspositoPhysA2015}, 
the entropy production rate can be rewritten as $\dot\Sigma(t) = \beta[\dot W(t) - d_t F(t)] \ge 0$, where $\dot W$ is 
the rate of work done on the system and $d_t F(t)$ denotes the change in non-equilibrium free energy (see 
Sec.~\ref{sec thermodynamics microlevel} for microscopic definitions of these quantities). The intimate connection between 
relative entropy and the second law was noticed some time ago in Ref.~\cite{ProcacciaLevineJCP1976} for undriven systems. 
In the undriven case, the precise form of Eq.~(\ref{eq 2nd law intro}) seems to appear first in Ref.~\cite{SpohnJMP1978} 
for quantum systems and it is discussed as a Lyapunov function in Ref.~\cite{VanKampenBook2007} for classical systems. 
The generalization to driven systems was given in Ref.~\cite{LindbladBook1983} and a similar form of 
Eq.~(\ref{eq 2nd law intro}) also holds for a system in contact with multiple heat 
baths~\cite{SpohnLebowitzAdvChemPhys1979}, see also Ref.~\cite{AltanerJPA2017} for a recent approach where 
Eq.~(\ref{eq 2nd law intro}) plays a decisive role. In this paper we will only focus on a single heat bath. 

While the Markovian assumption is widely used due to the enormous simplifications it enables, it is not always justified. 
Especially in stochastic thermodynamics an implicit but crucial assumption entering the Markovian description 
is that the degrees of freedom of the environment are always locally equilibrated with a well-defined associated 
temperature. This is in general only valid in the limit of time-scale separation where the environmental degrees of 
freedom can be adiabatically eliminated~\cite{EspositoPRE2012}. There is currently no consensus about the correct 
thermodynamic description of a system when the local equilibrium assumption for the environment is not met, i.e., when the 
system dynamics are non-Markovian. Especially, while different interesting results were obtained in 
Refs.~\cite{AndrieuxGaspardJSM2008, EspositoLindenbergPRE2008, RoldanParrondoPRL2010, RoldanParrondoPRE2012, 
LeggioEtAlPRE2013, BylickaEtAlSciRep2016} by \emph{starting} from a non-Markovian description of the system, the 
\emph{emergence} of non-Markovianity and its link to an underlying Markovian description of the microscopic degrees 
of freedom (system \emph{and} bath) was not yet established. 

The first main contribution of this paper is to provide a systematic framework for that situation able to investigate 
the influence of an environment, which is not locally equilibrated. While there has been recently great progress 
in the \emph{integrated} thermodynamic description of such systems~\cite{SeifertPRL2016, JarzynskiPRX2017, 
MillerAndersPRE2017, StrasbergEspositoPRE2017}, the instantaneous thermodynamic properties at the rate level were only 
studied in Ref.~\cite{StrasbergEspositoPRE2017}. We will here see that a remarkably similar framework to the conventional 
one above arises with the main difference that the entropy production rate $\dot\Sigma(t)$ can be negative sometimes. 
We then precisely link the occurence of $\dot\Sigma(t) < 0$ to underlying dynamical properties of the environment, thereby 
connecting the abstract mathematical property of (non-) Markovianity to an important physical observable. 

Our second main contribution is to establish a quantum counterpart for the classical strong coupling scenario studied by 
Seifert~\cite{SeifertPRL2016}. We find that the integrated thermodynamic description is very similar, but the 
instantaneous rate level description is \emph{not}. This hinders us to connect the occurence of negative entropy 
production rates to the non-Markovianity of the system evolution. We also provide an explicit example to show that recent 
claims in the literature about non-Markovianity, negative entropy production rates and steady states of dynamical maps do 
not hold. 

\emph{How to read this paper.---} This paper covers a wide range of applications from (i) rate master equations over (ii) 
classical Hamiltonian dynamics to (iii) quantum systems. We will keep this order in the narrative because it demonstrates 
beautifully the similarities and discrepancies of the different levels of description. We will start with a purely 
mathematical description of classical, non-Markovian systems, which arise from an arbitrary coarse-graining of an 
underlying Markovian network. While Sec.~\ref{sec lumpability Markov chains} reviews known results, 
Sec.~\ref{sec time dependent MEs} establishes new theorems (Appendices~\ref{sec app weak lumpability} 
and~\ref{sec app IFP} give additional technical details). Sec.~\ref{sec coarse-grained dissipative dynamics} can then 
be seen as a direct physical application of the previous section to the coarse-grained dynamics of a Markovian network 
obeying local detailed balance. In Sec.~\ref{sec classical system bath theory} we change the perpective and consider 
classical Hamiltonian system-bath dynamics, but with the help of Appendix~\ref{sec app Hamiltonian dynamics} we will see 
that we obtain identical results to Sec.~\ref{sec coarse-grained dissipative dynamics}. In our last general 
section~\ref{sec thermo quantum} we consider quantum systems. To illustrate the general theory, each subsection of 
Sec.~\ref{sec applications} is used to illustrate a particular feature of one of the previous sections. 
This roadmap of the paper is shown in Fig.~\ref{fig outline} and we wish to emphasize that it is also possible to 
read some sections independently. The paper closes by summarizing our results together with the state of the art of the 
field in Sec.~\ref{sec summary} and by discussing alternative approaches and open questions in Sec.~\ref{sec outlook}. 
We also provide an example to demonstrate that non-Markovian effects can speed up the erasure of a single bit of 
information, thereby showing that the field of non-Markovian finite-time thermodynamics provides a promising research 
direction for the future. 

\begin{figure}
 \centering\includegraphics[width=0.40\textwidth,clip=true]{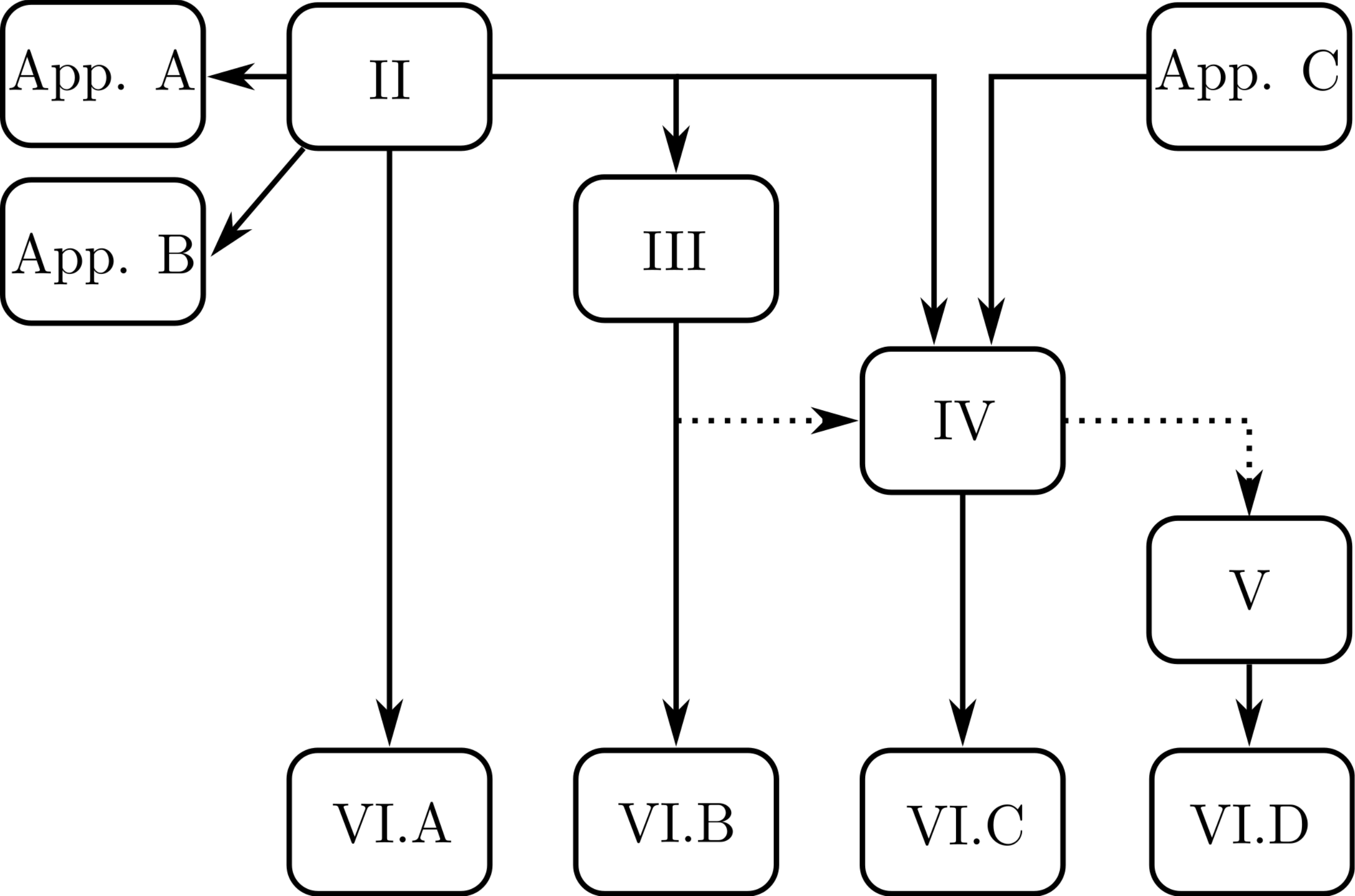}
 \label{fig outline} 
 \caption{``Roadmap'' of the paper with solid (dotted) arrows indicating strong (weak) dependencies. }
\end{figure}

The following abbreviations are used throughout the text: EP (entropy production), IFP (instantaneous fixed point), 
ME (master equation), TM (transition matrix), and TSS (time-scale separation).

\section{Mathematical preliminaries}
\label{sec mathematical results}

\subsection{Coarse-grained Markov chains}
\label{sec lumpability Markov chains}

In this section we establish notation and review some known results about Markov processes under coarse-graining. We will 
start with the description of a discrete, time-homogeneous Markov chain for simplicity, but soon we will move to the 
physically more relevant case of an arbitrary continuous-time Markov process described by a ME. Finally, we also introduce 
the concept of lumpability~\cite{KemenySnellBook1976}. 

\emph{Discrete, homogeneous Markov chains.---} We consider a Markov process on a discrete space $\C X$ with $N$ states 
$x\in\C X$ with a fixed TM $T_{\tau}(x|x')$, which propagates the state of the system such that 
\begin{equation}
 p_x(n\tau+\tau) = \sum_{y} T_{\tau}(x|y) p_{y}(n\tau) ~~~ (n\in\mathbb{N}),
\end{equation}
or in vector notation $\bb p(n\tau+\tau) = T_\tau \bb p(n\tau)$. Here, $p_x(n\tau)$ is the probability to find the 
system in the state $x$ at time $n\tau$, where $\tau > 0$ is an arbitrary but fixed time step (here and in what 
follows we will set the initial time to ${t_0 \equiv 0}$). Probability theory demands that $\sum_x p_x(n\tau) = 1$, 
$p_x(n\tau) \ge 0$ for all $x$, $\sum_x T_{\tau}(x|y) = 1$ and $T_{\tau}(x|y) \ge 0$ for all $x,y$. 
The steady state of the Markov chain is denoted by $\pi_x$ and it is defined via the equation 
$\boldsymbol\pi = T_{\tau}\boldsymbol\pi$. In this section we exclude the case of multiple steady states for 
definiteness, although large parts of the resulting theory can be applied to multiple steady states as 
well.\footnote{The contractivity property of Markov chains, Eqs.~(\ref{eq contractivity}) and~(\ref{eq 2nd law intro}), 
which plays an important role in the following, holds true irrespective of the number of steady states. } 

Next, we consider a partition $\boldsymbol\chi = \{\chi_1,\dots,\chi_M\}$ ($1<M<N$) of the state space such that 
\begin{equation}
 \bigcup_{\alpha=1}^M \chi_\alpha = \C X, ~~~ \chi_\alpha \cap\chi_\beta = \emptyset \text{ for } \alpha \neq \beta. 
\end{equation}
In the physics literature this is known as a coarse-graining procedure where different ``microstates'' $x$ are collected 
together into a ``mesostate'' $\alpha$, whereas in the mathematical literature this procedure is usually called lumping. 
In the following we will use both terminologies interchangeably and we denote a microstate $x$ belonging to the mesostate 
$\alpha$ by $x_\alpha$, i.e., $x_\alpha\in\chi_\alpha$. The idea is illustrated in Fig.~\ref{fig lumping}. 
We remark that tracing out the degrees of freedom of some irrelevant system (usually called the ``bath'') is a special 
form of coarse-graining. We will encounter this situation, e.g., in Sec.~\ref{sec classical system bath theory}. 

\begin{figure}
 \centering\includegraphics[width=0.20\textwidth,clip=true]{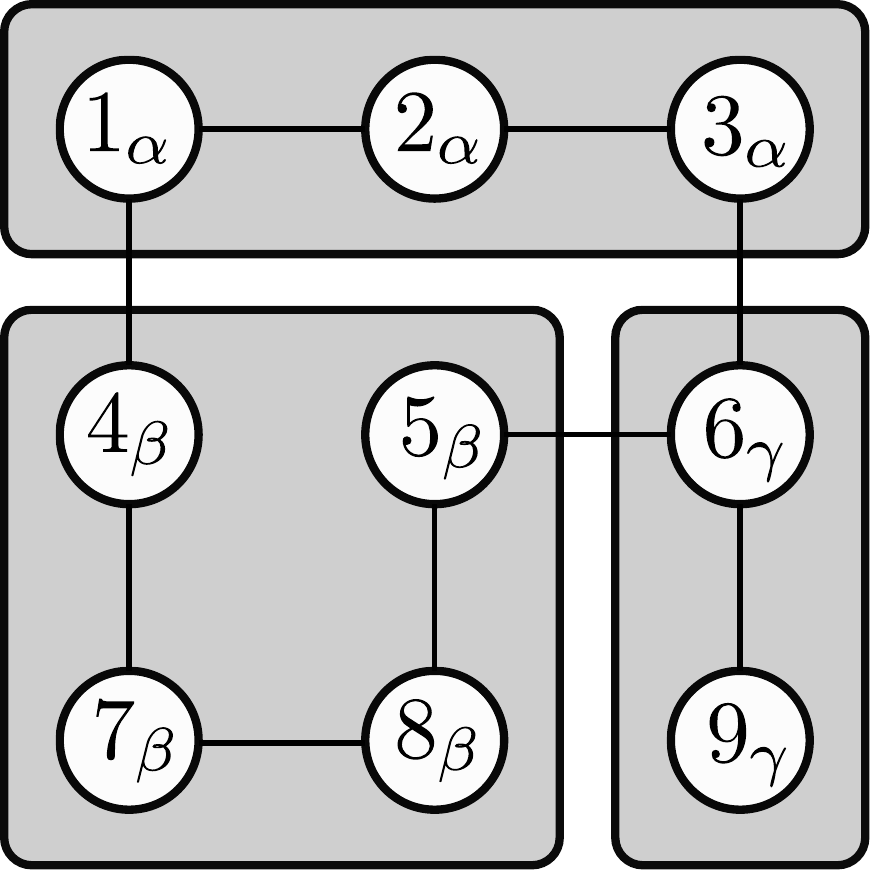}
 \label{fig lumping} 
 \caption{Lumping/coarse-graining of a discrete Markov chain with microstate-space $\C X = \{1,2,3,4,5,6,7,8,9\}$ into 
 three mesostates according to the partition $\boldsymbol\chi = \{\chi_\alpha,\chi_\beta,\chi_\gamma\}$ with 
 $\chi_\alpha = \{1,2,3\}$, $\chi_\beta = \{4,5,7,8\}$ and $\chi_\gamma = \{6,9\}$ (grey areas). Possible transistions 
 for which $T_{\tau}(x|y) \neq 0$ are depicted by a solid line connecting state $x$ and $y$. }
\end{figure}

Any partition $\boldsymbol\chi$ defines a stochastic process on the set of mesostates by considering for a given initial 
distribution $p_{x}(0)$ the probabilities to visit a sequence of mesostates $\alpha,\beta,\gamma,\dots$ at times 
$0,\tau,2\tau,\dots$ with joint probabilities 
\begin{equation}
 \begin{split}
  & p(\beta,\tau;\alpha,0) = \sum_{y_\beta,x_\alpha} T_{\tau}(y_\beta|x_\alpha)p_{x|\alpha}(0)p_\alpha(0),	\\
  & p(\gamma,2\tau;\beta,\tau;\alpha,0) =	\\
  & \sum_{z_\gamma,y_\beta,x_\alpha} T_{\tau}(z_\gamma|y_\beta) T_{\tau}(y_\beta|x_\alpha)p_{x|\alpha}(0)p_\alpha(0),
 \end{split}
\end{equation}
etc., where $p_\alpha(0) = \sum_{x_\alpha} p_{x_\alpha}(0)$ is the marginalized initial mesostate and 
$p_{x|\alpha}(0) = p_{x_\alpha}(0)/p_\alpha(0)$ is the initial microstate conditioned on a certain mesostate $\alpha$. 
The so generated hierarchy of joint probabilities $p(\alpha_n,n\tau;\dots;\alpha_1,\tau;\alpha_0,0)$ completely specifies 
the stochastic process at the mesolevel. It is called Markovian whenever the conditional probabilities 
\begin{equation}
 \begin{split}
  & p(\alpha_n,n\tau|\alpha_{n-1},n\tau-\tau;\dots;\alpha_0,0)	\\
  & \equiv \frac{p(\alpha_n,n\tau;\dots;\alpha_0,0)}{p(\alpha_{n-1},n\tau-\tau;\dots;\alpha_0,0)}
 \end{split}
\end{equation}
satisfy the Markov property~\cite{KemenySnellBook1976, VanKampenBook2007, RivasHuelgaPlenioRPP2014, BreuerEtAlRMP2016} 
\begin{equation}\label{eq cond Markovianity}
 \begin{split}
  & p(\alpha_n,n\tau|\alpha_{n-1},n\tau -\tau;\dots;\alpha_0,0)   \\
  & = p(\alpha_n,n\tau|\alpha_{n-1},n\tau-\tau). 
 \end{split}
\end{equation}
In practice this requires to check infinitely many conditions. But as we will see below, to compute all quantities of 
thermodynamic interest, only the knowledge about the evolution of the \emph{one}-time probabilities 
$p(\alpha_n,n\tau)$ is important for us. 

To see how non-Markovianity affects the evolution of the one time-probabilities, we introduce the following 
matrices derived from the above joint probabilities 
\begin{align}
 G_{\tau,0}(\beta|\alpha)	&=	\frac{p(\beta,\tau;\alpha,0)}{p_\alpha(0)} = \sum_{y_\beta,x_\alpha} T_{\tau}(y_\beta|x_\alpha)p_{x|\alpha}(0),	\label{eq TM mesolevel}	\\
 \tilde G_{2\tau,\tau}(\gamma|\beta)	&=	\frac{p(\gamma,2\tau;\beta,\tau)}{p_\beta(\tau)} = \frac{\sum_\alpha p(\gamma,2\tau;\beta,\tau;\alpha,0)}{\sum_\alpha p(\beta,\tau;\alpha,0)},	\nonumber	\\
 G_{2\tau,0}(\gamma|\alpha)	&=	\frac{p(\gamma,2\tau;\alpha,0)}{p_\alpha(0)}	\nonumber	\\
				&=	\sum_{z_\gamma,x_\alpha}\sum_{\beta,y_\beta} T_{\tau}(z_\gamma|y_\beta) T_{\tau}(y_\beta|x_\alpha)p_{x|\alpha}(0). \nonumber
\end{align}
Formally, these matrices are well-defined conditional probabilities because they are positive and normalized. 
However, we have deliberately choosen a different notation for $\tilde G_{2\tau,\tau}$ because only $G_{\tau,0}$ 
and $G_{2\tau,0}$ can be interpreted as \emph{transition} probabilities (or matrices) as they generate the 
correct time evolution for \emph{any} initial mesostate $p_\alpha(0)$. The matrix $\tilde G_{2\tau,\tau}$ instead 
depends on the specific choice of $p_\alpha(0)$: if we start with a different initial mesostate 
$q_\alpha(0)\neq p_\alpha(0)$, we cannot use $\tilde G_{2\tau,\tau}$ to propagate 
$q_\beta(\tau) = \sum_\beta G_{\tau,0}(\beta|\alpha) q_\alpha(0)$ further in time. This becomes manifest by realizing 
that the so generated hierarchy of conditional probabilities does not in general obey the Chapman-Kolmogorov equation, 
\begin{equation}
 G_{2\tau,0}(\gamma|\alpha) = \sum_\beta\tilde  G_{2\tau,\tau}(\gamma|\beta)G_{\tau,0}(\beta|\alpha).
\end{equation}

A way to avoid this undesired feature is to define the TM from time $\tau$ to $2\tau$ via the inverse of $G_{\tau,0}$ 
(provided it exists)~\cite{HaenggiThomasZPB1977, RivasHuelgaPlenioPRL2010, RivasHuelgaPlenioRPP2014, BreuerEtAlRMP2016} 
\begin{equation}\label{eq def intermediate TM}
 G_{2\tau,\tau} \equiv G_{2\tau,0} G_{\tau,0}^{-1}.
\end{equation}
The TM $G_{2\tau,\tau}$ does not depend on the initial mesostate, preserves the normalization of the state and by 
construction, it fulfills the Chapman-Kolmogorov equation: $G_{2\tau,0} = G_{2\tau,\tau}G_{\tau,0}$. However, 
as the inverse of a positive matrix is not necessarily positive, $G_{2\tau,\tau}$ can have negative entries. This clearly 
indicates that $G_{2\tau,\tau}(\gamma|\beta)$ cannot be interpreted as a conditional probability and hence, the process 
must be non-Markovian. Based on these insights we introduce a weaker notion of Markovianity, which we coin 
1-Markovianity. In the context of open quantum systems dynamics this notion is often simply called 
Markovianity~\cite{RivasHuelgaPlenioRPP2014, BreuerEtAlRMP2016}: 

\begin{mydef}[1-Markovianity]\label{def 1 Markovian}
 A stochastic process is said to be 1-Markovian, if the set of TMs $\{G_{n\tau,m\tau}|n\ge m\ge 0\}$ introduced above 
 fulfill $G_{n\tau,m\tau}(\alpha|\beta) \ge0$ for all $n\ge m\ge 0$ and all $\alpha,\beta$. 
\end{mydef}

It is important to realize that the notion of 1-Markovianity is weaker than the notion of Markovianity: 
if the coarse-grained process is Markovian, then it is also 1-Markovian and the TMs coincide with the conditional 
probabilities in Eq.~(\ref{eq cond Markovianity}). Furthermore, there exist processes which are 1-Markovian but not 
Markovian according to Eq.~(\ref{eq cond Markovianity}) (see, e.g., Ref.~\cite{RivasHuelgaPlenioRPP2014}). 

Before we consider MEs, we introduce some further notation. We let 
\begin{equation}
 \C A(0) \equiv \{p_x(0)|p_\alpha(0) \text{ arbitrary, } p_{x|\alpha}(0) \text{ fixed}\}
\end{equation}
be the set of all \emph{physically admissible initial states} with respect to a partition $\boldsymbol\chi$ (whose 
dependence is implicit in the notation). The reason to keep $p_{x|\alpha}(0)$ fixed is twofold: 
first, in an experiment one usually does not have detailed control over the microstates, and second, the 
TMs~(\ref{eq TM mesolevel}) for the lumped process depend on $p_{x|\alpha}(0)$, i.e., every choice of 
$p_{x|\alpha}(0)$ defines a different stochastic process at the mesolevel and should be treated separately. 
Which of the mesostates $p_\alpha(0)$ we can really prepare in an experiment is another interesting (but for us 
unimportant) question; sometimes this could be only a single state (e.g., the steady state $\pi_\alpha$). Of particular 
importance for the applications later on will be the set 
\begin{equation}\label{eq stationary preparation class}
 \C A_{\pi} \equiv \{p_\alpha\pi_{x|\alpha}|p_\alpha \text{ arbitrary}\}
\end{equation}
where $\pi_{x|\alpha} = \pi_{x_\alpha}/\pi_\alpha$ is the conditional steady state. Experimentally, such a class of 
states can be prepared by holding the mesostate fixed while allowing the microstates to reach steady state. Finally, 
we define the set of time-evolved admissible initial states 
\begin{equation}\label{eq time evolved admissible states}
 \C A(\tau) \equiv \{\bb p(\tau) = T_\tau\bb p(0)| \bb p(0) \in \C A(0)\}.
\end{equation}

\emph{Time-dependent MEs.---} For many physical applications it is indeed easier to derive a ME, which describes the 
continuous time evolution of the system state, compared to deriving a TM for a finite 
time-step~\cite{VanKampenBook2007, BreuerPetruccioneBook2002}. The ME reads in general 
\begin{equation}\label{eq ME general}
 \frac{\partial}{\partial t}p_{x}(t) = \sum_{y} W_{x,y}(\lambda_t) p_{y}(t)
\end{equation}
or in vector notation $\partial_t \bb p(t) = W(\lambda_t)\bb p(t)$. The rate matrix $W(\lambda_t)$ fulfills 
$\sum_{x} W_{x,y}(\lambda_t) = 0$ and $W_{x,y}(\lambda_t) \ge 0$ for $x\neq y$ and it is now also allowed to be 
parametrically dependent on time through a prescribed parameter $\lambda_t$. This situation usually arises by subjecting 
the system to an external drive, e.g., a time-dependent electric or magnetic field. Furthermore, we assume that the 
rate matrix has one IFP, which fulfills $W(\lambda_t) \boldsymbol\pi(\lambda_t) = 0$. 
Clearly, the steady state will in general also parametrically depend on $\lambda_t$. 

We can connect the ME description to the theory above by noting that the TM over any finite time interval ${[t,t+\tau]}$ 
is formally given by 
\begin{equation}\label{eq TM from ME}
 T_{t,t+\tau} = \C T_+ \exp\int_t^{t+\tau} W(\lambda_s) ds,
\end{equation}
where $\C T_+$ is the time-ordering operator. In particular, if we choose $\delta t = \tau/N$ small enough such that 
$\lambda_{t+\delta t} \approx \lambda_t$ (assuming that $\lambda_t$ changes continuously in time), we can approximate 
the TM to any desired accuracy via 
\begin{equation}
 T_{t+\tau,t} \approx \prod_{i=0}^{N-1} T_{t+i\delta t + \delta t,t+i\delta t} \equiv \prod_{i=0}^{N-1} e^{W(\lambda_{t+i\delta t}) \delta t}.
\end{equation}
As a notational convention, whenever the system is undriven (i.e., $\dot\lambda_t = 0$ for all $t$), we will simply drop 
the dependence on $\lambda_t$ in the notation. 

We now fix an arbitrary partition $\boldsymbol\chi$ as before. To describe the dynamics at the mesolevel, one can use 
several formally exact procedures, two of them we mention here. First, from Eq.~(\ref{eq ME general}) we get by direct 
coarse-graining 
\begin{equation}
 \begin{split}\label{eq effective rate matrix CG}
  \frac{\partial}{\partial t} p_\alpha(t)	&=	\sum_\beta R_{\alpha,\beta}[\lambda_t,p_\alpha(0)] p_\beta(t),	\\
  R_{\alpha,\beta}[\lambda_t,p_\alpha(0)]	&\equiv	\sum_{x_\alpha,y_\beta} W_{x_\alpha,y_\beta}(\lambda_t) p_{y|\beta}(t).
 \end{split}
\end{equation}
Here, the matrix $R[\lambda_t,p_\alpha(0)]$ still fulfills all properties of an ordinary rate matrix: 
$\sum_\alpha R_{\alpha,\beta}[\lambda_t,p_\alpha(0)] = 0$ and $R_{\alpha,\beta}[\lambda_t,p_\alpha(0)] \ge 0$ for 
$\alpha\neq\beta$. However, it explicitly depends on the initial mesostate $p_\alpha(0)$, which influences 
$p_{y|\beta}(t)$ for later times $t$. This is analogous to the problem mentioned below Eq.~(\ref{eq TM mesolevel}): 
the TMs computed with Eq.~(\ref{eq effective rate matrix CG}) at intermediate times depend on the initial state of the 
system. This reflects the non-Markovian character of the dynamics and makes it inconvenient for practical applications. 
Note that Eq.~(\ref{eq effective rate matrix CG}) still requires to solve for the full microdynamics and does not 
provide a closed reduced dynamical description. 

A strategy to avoid this undesired feature follows the logic of Eq.~(\ref{eq def intermediate TM}) and only makes use of the 
well-defined transition probability [cf.~Eq.~(\ref{eq TM mesolevel})] 
\begin{equation}\label{eq TM mesolevel general}
 G_{t,0}(\alpha|\beta) \equiv \sum_{x_\alpha,y_\beta} T_{t,0}(x_\alpha|y_\beta) p_{y|\beta}(0).
\end{equation}
Provided that its inverse exists\footnote{Finding a general answer to the question whether the inverse 
of a dynamical map exists, which allows one to construct a time-local ME, is non-trivial. Nevertheless, many open 
systems can be described by a time-local ME and this assumptions seems to be less strict than one might initially guess. 
See Refs.~\cite{AnderssonCresserHallJMO2007, MaldonadoMundoEtAlPRA2012} for further research on this topic.}, 
it allows to define an effective ME independent of the initial 
mesostate~\cite{HaenggiThomasZPB1977, RivasHuelgaPlenioPRL2010, RivasHuelgaPlenioRPP2014, BreuerEtAlRMP2016}, 
\begin{align}
 \frac{\partial}{\partial t} p_\alpha(t)	&=	\sum_\beta V_{\alpha,\beta}(\lambda_t,t) p_\beta(t),	\label{eq ME meso general}	\\
 V(\lambda_t,t)					            &\equiv	\lim_{\delta t\rightarrow0}\frac{G_{t+\delta t,0}G_{t,0}^{-1}-1}{\delta t},	\label{eq meso generator ME}
\end{align}
but where the matrix $V(\lambda_t,t)$ now carries an additional time-dependence, which does not come from the parameter 
$\lambda_t$. Notice that the construction~(\ref{eq meso generator ME}) shares some similarity with the 
time-convolutionless ME derived from the Nakajima-Zwanzig projection operator formalism, which is another formally exact 
ME independent of the initial mesostate~\cite{FulinskiKramarczyk1968, ShibataTakahashiHashitsumeJSP1977, 
BreuerPetruccioneBook2002, DeVegaAlonsoRMP2017}. 
The generator $V(\lambda_t,t)$ preserves normalization and yields to a set of TMs, which fulfill the Chapman-Kolmogorov 
equation, but it can have temporarily negative rates, i.e., $V_{\alpha,\beta}(\lambda_t,t) < 0$ for $\alpha\neq\beta$ is 
possible. This is a clear indicator that the dynamics are not 1-Markovian~\cite{HallEtAlPRA2014}.  

Finally, we note that there are also other MEs to describe the reduced state of the dynamics, e.g., the standard 
Nakajima-Zwanzig equation which is an integro-differential equation~\cite{BreuerPetruccioneBook2002, DeVegaAlonsoRMP2017}. 
This ME is free from the assumption that the inverse of Eq.~(\ref{eq TM mesolevel general}) exists and therefore 
more general. On the other hand, we will see in Sec.~\ref{sec time dependent MEs} that we will need the notion of 
an IFP of the dynamics, which is hard to define for an integro-differential equation. 

\emph{Lumpability.---} In this final part we introduce the concept of lumpability from Sec.~6.3 in 
Ref.~\cite{KemenySnellBook1976}. It will help us to further understand the conditions which ensure Markovianity at the 
mesolevel and it will be occassionally used in the following. In unison with Ref.~\cite{KemenySnellBook1976} we first 
introduce the concept for discrete, time-homogeneous Markov chains before we consider MEs again. Furthermore, we 
emphasize that in the definition below the notion of Markovianity refers to the usual property~(\ref{eq cond Markovianity}) 
and not only to the one-time probabilities. Another related weaker concept (known as ``weak lumpability'') is treated for 
the interested reader in Appendix~\ref{sec app weak lumpability}.  

\begin{mydef}[Lumpability]\label{def strong lumpability}
 A Markov chain with TM $T_\tau$ is lumpable with respect to a partition $\boldsymbol\chi$ if for every initial 
 distribution $p_{x}(0)$ the lumped process is a Markov chain with transition probabilities independent of $p_{x}(0)$. 
\end{mydef}

It follows from the definition that a lumpable process for a given TM $T_\tau$ and partition $\boldsymbol\chi$, 
is also a lumpable process for all larger times, i.e., for all $T_{n\tau} = (T_\tau)^n$ with $n>1$ and the same 
partition $\boldsymbol\chi$. The following theorem will be useful for us: 

\begin{thm}\label{thm strong lumpability}
 A necessary and sufficient condition for a Markov chain to be lumpable with respect to the partition 
 $\boldsymbol\chi$ is that 
 \begin{equation}\label{eq cond lumpability}
  \C G_{\tau}(\alpha|\beta) \equiv \sum_{x_\alpha} T_{\tau}(x_\alpha|y_\beta) = \sum_{x_\alpha} T_{\tau}(x_\alpha|y'_\beta)
 \end{equation}
 holds for any $y_\beta\neq y'_\beta$. The lumped process then has the TM $\C G_{\tau}$. 
\end{thm}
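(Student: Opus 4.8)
The plan is to prove the two implications separately, the common engine in each case being the block sum $\sum_{x_\alpha} T_{\tau}(x_\alpha|y_\beta)$ appearing in~\eqref{eq cond lumpability}. For \emph{sufficiency}, I would assume~\eqref{eq cond lumpability} and show that it forces the entire mesoscopic hierarchy to factorize. Fixing an arbitrary admissible initial distribution $p_x(0)$ and an arbitrary mesostate path $\alpha_0,\dots,\alpha_n$, the joint probability $p(\alpha_0,0;\dots;\alpha_n,n\tau)$ is a sum over microstate paths of products of one-step factors $T_\tau$ weighted by $p_{x_0}(0)$. I would perform the microstate sum over the final block $\chi_{\alpha_n}$ first: by hypothesis $\sum_{x_n\in\chi_{\alpha_n}} T_\tau(x_n|x_{n-1}) = \C G_\tau(\alpha_n|\alpha_{n-1})$ is a constant independent of the preceding microstate $x_{n-1}$, so it pulls out of the remaining sum. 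Iterating this peeling-off down to the initial block yields
\[ p(\alpha_0,0;\dots;\alpha_n,n\tau) = \C G_\tau(\alpha_n|\alpha_{n-1})\cdots\C G_\tau(\alpha_1|\alpha_0)\,p_{\alpha_0}(0). \]
This single identity simultaneously establishes the Markov property~\eqref{eq cond Markovianity} for the lumped conditional probabilities and shows that they equal $\C G_\tau$ independently of $p_x(0)$, which is exactly Definition~\ref{def strong lumpability} together with the identification of the lumped TM; positivity and normalization of $\C G_\tau$ are inherited directly from those of $T_\tau$.

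For \emph{necessity}, I would extract the pointwise condition by testing lumpability on initial distributions concentrated on single microstates. For a fixed source mesostate $\beta$ and any $y_\beta\in\chi_\beta$, choosing $p_x(0)=\delta_{x,y_\beta}$ gives $p_\beta(0)=1$, and the one-step lumped transition probability to any $\alpha$ reduces, through the definition~\eqref{eq TM mesolevel} of $G_{\tau,0}$, to $G_{\tau,0}(\alpha|\beta)=\sum_{x_\alpha} T_\tau(x_\alpha|y_\beta)$. Since lumpability demands this transition probability be independent of the initial distribution, comparing the two choices $p_x(0)=\delta_{x,y_\beta}$ and $p_x(0)=\delta_{x,y'_\beta}$ for microstates $y_\beta\neq y'_\beta$ in the same block forces $\sum_{x_\alpha} T_\tau(x_\alpha|y_\beta)=\sum_{x_\alpha} T_\tau(x_\alpha|y'_\beta)$, which is precisely~\eqref{eq cond lumpability}.

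The hard part will be the bookkeeping in the sufficiency direction: I must be sure that at every level of the iteration the pulled-out factor $\C G_\tau(\alpha_k|\alpha_{k-1})$ depends only on the immediately preceding mesostate and not on the full microscopic prehistory encoded in the nested sums, which is exactly what~\eqref{eq cond lumpability} guarantees and what distinguishes genuine lumpability from the weaker one-time-probability notions discussed around~\eqref{eq TM mesolevel}. The necessity direction is comparatively routine once delta-distribution initial conditions are admitted as legitimate test states; the only subtlety there is to confirm that the object whose initial-state independence is being asserted is indeed the one-step meso-transition built from $G_{\tau,0}$, and not merely the marginal mesostate probability.
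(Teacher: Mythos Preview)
Your proposal is correct and follows the standard Kemeny--Snell argument. Note, however, that the paper does not actually supply a proof of this theorem: it simply refers the reader to Ref.~\cite{KemenySnellBook1976} for the details and remarks that the resulting TMs are obviously independent of the initial state, normalized, positive, and satisfy Chapman--Kolmogorov. Your write-up therefore goes well beyond what the paper provides, but what you have sketched is precisely the textbook proof the citation points to---the ``peeling-off'' of block sums for sufficiency and the delta-function initial states for necessity---so there is no discrepancy in approach, only in level of detail.
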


The details of the proof can be found in Ref.~\cite{KemenySnellBook1976}. However, it is obvious that the so-defined set 
of TMs is independent of the inital state. In addition, one can readily check that they fulfill the 
Chapman-Kolmogorov equation, are normalized and have positive entries. 

The concept of lumpability can be straightforwardly extended to time-dependent MEs by demanding that a lumpable ME 
with respect to the partition $\boldsymbol\chi$ has 
lumpable TMs $T_{t+\delta t,t}$ for any time $t$ and every $\delta t>0$. By expanding Eq.~(\ref{eq cond lumpability}) in 
$\delta t$ and by taking $\delta t\rightarrow0$, we obtain the following corollary (see also Ref.~\cite{NicolisPRE2011}): 

\begin{crllr}\label{thm cor lumpable ME}
 A ME with possibly time-dependent rates is lumpable with respect to the partition $\boldsymbol\chi$ if and only if 
 \begin{equation}\label{eq cond lumpability ME}
  \C V_{\alpha,\beta}(\lambda_t) \equiv \sum_{x_\alpha} W_{x_\alpha,y_\beta}(\lambda_t) = \sum_{x_\alpha} W_{x_\alpha,y'_\beta}(\lambda_t)
 \end{equation}
 for any $y_\beta\neq y'_\beta$ and any $t$. The lumped process is then governed by the rate matrix $\C V(\lambda_t)$. 
\end{crllr}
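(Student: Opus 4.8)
The plan is to reduce the corollary to Theorem~\ref{thm strong lumpability} by viewing the time-dependent ME as a limit of infinitesimal transition matrices. By the extension of lumpability to MEs stated just before the corollary, the ME is lumpable precisely when every finite-time TM $T_{t+\delta t,t}$ from Eq.~(\ref{eq TM from ME}) is lumpable with respect to $\boldsymbol\chi$, so Theorem~\ref{thm strong lumpability} applies to each such TM and characterizes its lumpability through the column-sum condition~(\ref{eq cond lumpability}). The whole task is then to show that this finite condition, demanded for all $t$ and all $\delta t>0$, is equivalent to the infinitesimal rate condition~(\ref{eq cond lumpability ME}).

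For the ``only if'' direction I would assume the ME lumpable and expand the short-time TM as $T_{t+\delta t,t}(x_\alpha|y_\beta) = \delta_{x_\alpha,y_\beta} + W_{x_\alpha,y_\beta}(\lambda_t)\delta t + O(\delta t^2)$. Summing over $x_\alpha\in\chi_\alpha$, the Kronecker contribution equals $1$ if $\alpha=\beta$ and $0$ otherwise, independently of which representative $y_\beta$ is chosen, so it drops out of the lumpability condition at zeroth order. Matching the term linear in $\delta t$ (equivalently differentiating Eq.~(\ref{eq cond lumpability}) at $\delta t=0$) then yields exactly Eq.~(\ref{eq cond lumpability ME}). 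Along the way one checks that $\C V(\lambda_t)$ is a bona fide rate matrix: $\sum_\alpha \C V_{\alpha,\beta}=\sum_x W_{x,y_\beta}=0$ by normalization of $W$, and for $\alpha\neq\beta$ the entries $\C V_{\alpha,\beta}=\sum_{x_\alpha}W_{x_\alpha,y_\beta}\ge0$ are sums of genuine off-diagonal (hence nonnegative) rates.

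The ``if'' direction is the real work, and the hard part is to propagate the rate condition through the time-ordered exponential. The key lemma I would prove is that the column-sum condition is stable under matrix powers: assuming Eq.~(\ref{eq cond lumpability ME}), an induction on $n$ gives $\sum_{x_\alpha}(W^n)_{x_\alpha,y_\beta} = (\C V^n)_{\alpha,\beta}$, because grouping the intermediate summation index by blocks lets each application of $W$ be replaced by $\C V$. Summing the series then shows $\sum_{x_\alpha}(e^{W(\lambda_t)\delta t})_{x_\alpha,y_\beta}=(e^{\C V(\lambda_t)\delta t})_{\alpha,\beta}$, so by Theorem~\ref{thm strong lumpability} each factor $e^{W(\lambda_t)\delta t}$ is lumpable with lumped TM $e^{\C V(\lambda_t)\delta t}$.

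Finally I would assemble the finite-time TM from these factors. Since composing two TMs that are lumpable for the same partition yields a lumpable TM whose lumped TM is the product of the individual lumped TMs (the same observation that gives lumpability of $T_{n\tau}=(T_\tau)^n$ noted after Theorem~\ref{thm strong lumpability}), the ordered product $\prod_i e^{W(\lambda_{t+i\delta t})\delta t}$ approximating $T_{t+\tau,t}$ is lumpable with lumped TM $\prod_i e^{\C V(\lambda_{t+i\delta t})\delta t}$. Taking $\delta t\to0$ shows every $T_{t+\tau,t}$ is lumpable, i.e.\ the ME is lumpable, and identifies the lumped dynamics as the time-ordered exponential generated by $\C V(\lambda_t)$, so the reduced process obeys a ME with rate matrix $\C V(\lambda_t)$ as claimed. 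The main subtlety to be careful about is the interchange of the $\delta t\to0$ limit with the lumping (column-sum) operation, which is justified by the uniform convergence of the ordered product on compact time intervals.
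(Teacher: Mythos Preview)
Your proposal is correct and follows the same approach as the paper, namely reducing the corollary to Theorem~\ref{thm strong lumpability} via the short-time expansion $T_{t+\delta t,t} = \mathbb{1} + W(\lambda_t)\delta t + O(\delta t^2)$. The paper's own justification is the single sentence ``By expanding Eq.~(\ref{eq cond lumpability}) in $\delta t$ and by taking $\delta t\rightarrow0$'', which strictly speaking only covers the ``only if'' direction; your explicit induction $\sum_{x_\alpha}(W^n)_{x_\alpha,y_\beta}=(\C V^n)_{\alpha,\beta}$ and the subsequent composition argument supply the converse direction that the paper leaves implicit.
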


Notice that the dynamical description of a lumpable ME is unambiguous because the generator 
$R[\lambda_t,p_\alpha(0)]$ from Eq.~(\ref{eq effective rate matrix CG}) and $V(\lambda_t,t)$ from 
Eq.~(\ref{eq meso generator ME}) both coincide with $\C V(\lambda_t)$ from the above corollary. For 
$R[\lambda_t,p_\alpha(0)]$ this follows from directly applying Eq.~(\ref{eq cond lumpability ME}) to 
Eq.~(\ref{eq effective rate matrix CG}). For $V(\lambda_t,t)$ this follows from the fact that the 
propagator in Eq.~(\ref{eq def intermediate TM}) coincides for a Markovian process with the transition 
probabilities obtained from Eq.~(\ref{eq cond Markovianity}), which for a lumpable process are identical to the 
TMs introduced in Theorem~\ref{thm strong lumpability}. All generators are then identical and have the same 
well-defined rate matrix. 

In the following we will stop repeating that any concept at the coarse-grained level is always introduced 
``with respect to the partition $\boldsymbol\chi$''. Furthermore, to facilitate the readability, Table~\ref{table} 
summarizes the most important notation used in this section and in the remainder. 

\begin{table}[h]
 \centering
  \begin{tabular}{l|l}
   symbol				&	meaning			\\
   \hline 
   $\C X$				&	full state space	\\
   $\boldsymbol\chi$			&	state space partition	\\
   $x$					&	arbitary microstate	\\
   $\alpha$				&	mesostate		\\
   $x_\alpha$				&	microstate belonging to mesostate $\alpha$	\\
   $\pi_x(\lambda_t)$			&	microlevel IFP	\\
   $\pi_\alpha(\lambda_t)$		&	$= \sum_{x_\alpha} \pi_{x_\alpha}(\lambda_t)$ (no IFP in general!)	\\
   $\C A(0)$				&	set of admissible initial states	\\
   $\C A_\pi(\lambda_t)$		&	$\rightarrow$ Eq.~(\ref{eq stationary preparation class}), in general dependent on $\lambda_t$	\\
   $\C A(t)$				&	$\C A(0)$ time-evolved	\\
   $p_x(t)$ [$\rho(x;t)$]		&	microstate probability discrete [continuous]	\\
   $p_\alpha(t)$ [$\rho(\alpha;t)$]	&	mesostate probability discrete [continuous]	\\
   $W(\lambda_t)$			&	rate matrix for microdynamics	\\
   $D[p_x\|q_x]$			&	relative entropy	\\
  \end{tabular}
  \caption{\label{table} List of symbols frequently used in the text. }
\end{table}

\subsection{Entropy production rates, non-Markovianity and instantaneous fixed points}
\label{sec time dependent MEs}

After having discussed how to describe the dynamics at the mesolevel, we now turn to its thermodynamics. This is still 
done in an abstract way without recourse to an underlying physical model. 
An important concept in our theory is the notion of an IFP, which we define as follows: 

\begin{mydef}[Instantaneous fixed point]\label{def IFP}
 Let $V(\lambda_t,t)$ be the generator of the time-local ME~(\ref{eq ME meso general}). We say that $\tilde{\bs\pi}(t)$ 
 is an IFP of the dynamics if $V(\lambda_t,t)\tilde{\bs\pi}(t) = 0$. 
\end{mydef}

We notice that $\tilde{\bs\pi}(t)$ does not need to be a well-defined probability distribution because $V(\lambda_t,t)$ 
can have negative rates. We also point out that the IFP at time $t$ might not be reachable from any state in the class 
of initially admissible states and it is therefore a purely abstract concept. Hence, while 
$V(\lambda_t,t)\tilde{\bs\pi}(t) = 0$ it need not be true that $R[\lambda_t,p_\alpha(0)]\tilde{\bs\pi}(t) = 0$ for any 
$p_{x_\alpha}(0)\in\C A(0)$. The IFP cannot be computed with the help of the effective rate matrix in 
Eq.~(\ref{eq effective rate matrix CG}). The IFP is only well-defined for a time-local ME with a generator independent 
of the initial mesostate. In Appendix~\ref{sec app IFP} we will show that it also does not matter how we have 
derived the ME as long as it is time-local, formally exact and independent of the initial mesostate. 

In the first part of this section, we introduce the concept of EP rate in a formal way and establish a general theorem. 
In the second part of this section, we will answer the question when does the IFP $\tilde{\bs\pi}(t)$ coincide with the 
marginalized IFP of the microdynamics, 
\begin{equation}\label{eq marginalized IFP}
 \pi_\alpha(\lambda_t) = \sum_{x_\alpha} \pi_{x_\alpha}(\lambda_t).
\end{equation}

\emph{EP rate.---} We define the EP rate for the coarse-grained process by 
\begin{equation}\label{eq ent prod abstract}
 \begin{split}
  \dot\Sigma(t)	&\equiv -\left.\frac{\partial}{\partial t}\right|_{\lambda_t} D[p_\alpha(t)\|\pi_\alpha(\lambda_t)]	\\
		&=	-\sum_\alpha \frac{\partial p_\alpha(t)}{\partial t}[\ln p_\alpha(t) - \ln\pi_\alpha(\lambda_t)],
 \end{split}
\end{equation}
where $\pi_\alpha(\lambda_t)$ was defined in Eq.~(\ref{eq marginalized IFP}).\footnote{We remark that it turns out to be 
important to use in our definition~(\ref{eq ent prod abstract}) the coarse-grained steady state $\pi_\alpha(\lambda_t)$ 
and not the actual IFP $\tilde\pi_\alpha(t)$ of the generator $V(\lambda_t,t)$. In the latter case, the so-defined EP rate 
has only a clear thermodynamic meaning in the Markovian limit, where it was previously identified with the non-adiabatic 
part of the EP rate~\cite{EspositoVanDenBroeckPRE2010, VanDenBroeckEspositoPRE2010}. }
Notice that $\dot\Sigma(t)$ can be defined 
for any stochastic process and \emph{a priori} it is not related to the physical EP rate known from nonequilibrium 
thermodynamics. However, for the systems considered in Secs.~\ref{sec coarse-grained dissipative dynamics} 
and~\ref{sec classical system bath theory} this will turn out to be the case. Having emphasized this point, we decided 
for simplicity to refrain from introducing a new terminology for $\dot\Sigma(t)$ in this section. Furthermore, we remark 
that the definition of $\dot\Sigma(t)$ is experimentally meaningful: it only requires to measure the mesostate 
$p_\alpha(t)$ and the knowledge of $\pi_\alpha(\lambda_t)$. The latter can be obtained by measuring the steady 
state of the system after holding $\lambda_t$ fixed for a long time or by arguments of equilibrium statistical 
mechanics (see Secs.~\ref{sec coarse-grained dissipative dynamics} and~\ref{sec classical system bath theory}). 
Also theoretically, Eq.~(\ref{eq ent prod abstract}) can be evaluated with any method that 
gives the exact evolution of the mesostates. 

The following theorem shows how to connect negative EP rates to non-Markovianity. 
Application of this theorem to various physical situations will be the purpose of the next sections. 

\begin{thm}\label{thm ent prod}
 If $\pi_\alpha(\lambda_t)$ is an IFP of the mesodynamics and if $I$ 
 denotes the time interval in which the mesodynamics are 1-Markovian, then $\dot\Sigma(t) \ge 0$ for all $t\in I$. 
\end{thm}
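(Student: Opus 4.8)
The plan is to exploit the contractivity of the relative entropy under stochastic maps, Eq.~(\ref{eq contractivity}), but applied \emph{instantaneously} to the time-local mesodynamics. The starting point is the second line of Eq.~(\ref{eq ent prod abstract}): since the derivative is taken at fixed $\lambda_t$, the reference $\pi_\alpha(\lambda_t)$ is held constant and only $p_\alpha(t)$ evolves. I would therefore write $\dot\Sigma(t) = \lim_{\delta t\to 0}\delta t^{-1}\{D[p_\alpha(t)\|\pi_\alpha(\lambda_t)] - D[p_\alpha(t+\delta t)\|\pi_\alpha(\lambda_t)]\}$ and propagate the state by the infinitesimal mesolevel propagator $G_{t+\delta t,t} = 1 + V(\lambda_t,t)\,\delta t + O(\delta t^2)$ read off from Eq.~(\ref{eq meso generator ME}).

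The key step is to show that, for $t\in I$, this propagator is a genuine stochastic matrix leaving $\pi_\alpha(\lambda_t)$ invariant. Normalization $\sum_\alpha (G_{t+\delta t,t})_{\alpha,\beta}=1$ follows from $\sum_\alpha V_{\alpha,\beta}(\lambda_t,t)=0$. Positivity of the entries holds for sufficiently small $\delta t$ precisely because the off-diagonal rates satisfy $V_{\alpha,\beta}(\lambda_t,t)\ge 0$ for $\alpha\neq\beta$ on $I$, which is the rate-level content of 1-Markovianity (the diagonal entries $1+V_{\alpha,\alpha}\delta t$ stay positive for small $\delta t$ regardless of the sign of $V_{\alpha,\alpha}$). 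Finally, the IFP condition $V(\lambda_t,t)\,\bs\pi(\lambda_t)=0$ from Definition~\ref{def IFP} gives $G_{t+\delta t,t}\,\bs\pi(\lambda_t) = \bs\pi(\lambda_t) + O(\delta t^2)$, so $\pi_\alpha(\lambda_t)$ is a fixed point of the map to the order in $\delta t$ that controls the derivative.

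With these two facts, Eq.~(\ref{eq contractivity}) yields $D[G_{t+\delta t,t}\,\bb p(t)\,\|\,G_{t+\delta t,t}\,\bs\pi(\lambda_t)] \le D[p_\alpha(t)\|\pi_\alpha(\lambda_t)]$, and since the left-hand side equals $D[p_\alpha(t+\delta t)\|\pi_\alpha(\lambda_t)]$ up to $O(\delta t^2)$, dividing by $\delta t$ and letting $\delta t\to 0$ gives $\dot\Sigma(t)\ge 0$ for all $t\in I$. I expect the main obstacle to be the honest bookkeeping of this instantaneous argument: one must check that the $O(\delta t^2)$ mismatch between $G_{t+\delta t,t}\,\bs\pi(\lambda_t)$ and the exact invariant reference does not feed into the first derivative, and that stochasticity of $G_{t+\delta t,t}$ holds uniformly on a neighborhood so the limit is legitimate. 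A fully self-contained alternative that avoids the propagator language is a direct computation: inserting $\dot p_\alpha = \sum_\beta V_{\alpha,\beta}p_\beta$ and using $\sum_\alpha V_{\alpha,\beta}=0$ together with the IFP condition reduces the rate to $\frac{d}{dt}D[p_\alpha\|\pi_\alpha] = \sum_{\alpha\neq\beta}V_{\alpha,\beta}\pi_\beta(u_\beta-u_\alpha)\ln u_\alpha$ with $u_\alpha \equiv p_\alpha/\pi_\alpha$; bounding each summand via the convexity inequality $(u_\beta-u_\alpha)(\ln u_\alpha+1)\le u_\beta\ln u_\beta - u_\alpha\ln u_\alpha$ for $f(u)=u\ln u$, which is preserved under the sum because $V_{\alpha,\beta}\pi_\beta\ge 0$ on $I$, produces an upper bound that collapses to zero once the normalization and IFP identities are reused, giving $\frac{d}{dt}D\le 0$ and hence $\dot\Sigma(t)\ge 0$.
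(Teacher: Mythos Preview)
Your main argument is correct and essentially identical to the paper's own proof: express $\dot\Sigma(t)$ as a difference quotient, use the IFP hypothesis to replace $\bs\pi(\lambda_t)$ by $G_{t+\delta t,t}\bs\pi(\lambda_t)$ up to $O(\delta t^2)$, and invoke contractivity of relative entropy under the stochastic map $G_{t+\delta t,t}$ (the paper's Lemma~\ref{lemma Markov contractivity}). Your explicit discussion of why $G_{t+\delta t,t}$ is stochastic on $I$ and of the $O(\delta t^2)$ bookkeeping is more careful than the paper, which simply states that ``any possible discrepancy vanishes in the limit $\delta t\rightarrow0$''.

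Your alternative direct computation is a genuinely different and self-contained route that the paper does not take. It trades the abstract monotonicity statement~(\ref{eq contractivity rel ent}) for an explicit convexity bound on $f(u)=u\ln u$, applied termwise after the IFP and normalization identities have reduced $d_tD$ to a sum over off-diagonal pairs with nonnegative weights $V_{\alpha,\beta}\pi_\beta$. This buys a fully elementary argument that never leaves the rate level and makes transparent exactly where each hypothesis enters (positivity of off-diagonal rates for the inequality, IFP and column-sum zero for the telescoping cancellation), at the cost of being less immediately portable to settings where only the map-level contractivity is available (e.g.\ the quantum case discussed later in the paper).
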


To prove this theorem, it is useful to recall the well-known lemma, which we have stated already in 
Eq.~(\ref{eq contractivity}): 

\begin{lemma}\label{lemma Markov contractivity}
 For a 1-Markovian process the relative entropy between any two probability distributions is continuously decreasing in 
 time, i.e., for all $t$ and any pair of intial distributions $p_\alpha(0)$ and $q_\alpha(0)$ 
 Eq.~(\ref{eq contractivity}) holds. 
\end{lemma}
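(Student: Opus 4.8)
The plan is to reduce the differential statement~(\ref{eq contractivity}) to the discrete data-processing inequality for stochastic matrices and then pass to the infinitesimal limit. The crucial input is Definition~\ref{def 1 Markovian}: on the interval where the process is 1-Markovian, the propagator $G_{t+\delta t,t} \equiv G_{t+\delta t,0}G_{t,0}^{-1}$ generated by~(\ref{eq meso generator ME}) is a genuine stochastic matrix, i.e.\ $G_{t+\delta t,t}(\alpha|\beta)\ge0$ together with the normalization $\sum_\alpha G_{t+\delta t,t}(\alpha|\beta)=1$. Since both $p_\alpha$ and $q_\alpha$ obey the \emph{same} dynamics~(\ref{eq ME meso general}), they are propagated by the identical matrix, so that $p_\alpha(t+\delta t)=\sum_\beta G_{t+\delta t,t}(\alpha|\beta)p_\beta(t)$ and likewise for $q$.

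First I would establish the one-step contraction $D[p_\alpha(t+\delta t)\|q_\alpha(t+\delta t)]\le D[p_\alpha(t)\|q_\alpha(t)]$. Writing $p'_\alpha=\sum_\beta G_{t+\delta t,t}(\alpha|\beta)p_\beta$ and $q'_\alpha=\sum_\beta G_{t+\delta t,t}(\alpha|\beta)q_\beta$, I would invoke the log-sum inequality, which for nonnegative $a_\beta,b_\beta$ reads $\left(\sum_\beta a_\beta\right)\ln\frac{\sum_\beta a_\beta}{\sum_\beta b_\beta}\le\sum_\beta a_\beta\ln\frac{a_\beta}{b_\beta}$, with the choice $a_\beta=G_{t+\delta t,t}(\alpha|\beta)p_\beta$ and $b_\beta=G_{t+\delta t,t}(\alpha|\beta)q_\beta$. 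The common factor $G_{t+\delta t,t}(\alpha|\beta)$ cancels inside the logarithm on the right-hand side, yielding $p'_\alpha\ln\frac{p'_\alpha}{q'_\alpha}\le\sum_\beta G_{t+\delta t,t}(\alpha|\beta)p_\beta\ln\frac{p_\beta}{q_\beta}$. Summing over $\alpha$ and using the column-normalization $\sum_\alpha G_{t+\delta t,t}(\alpha|\beta)=1$ collapses the right-hand side to $D[p_\alpha(t)\|q_\alpha(t)]$, which is exactly the claimed contraction.

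To obtain the differential form~(\ref{eq contractivity}), I would rearrange the one-step inequality as $D[p_\alpha(t+\delta t)\|q_\alpha(t+\delta t)]-D[p_\alpha(t)\|q_\alpha(t)]\le0$, divide by $\delta t>0$, and take $\delta t\to0^+$. Because $t$ lies in the interior of the 1-Markovian interval, the propagator stays stochastic for all sufficiently small $\delta t$, so every difference quotient is non-positive and the limit gives $\partial_t D\le0$.

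The main obstacle is not the convexity estimate, which is routine, but the two places where positivity of the propagator is indispensable: the log-sum inequality requires $a_\beta,b_\beta\ge0$, and the final collapse of the sum requires $G_{t+\delta t,t}(\alpha|\beta)\ge0$ with $\sum_\alpha G_{t+\delta t,t}(\alpha|\beta)=1$ so that the matrix acts as an average over $\beta$. Both fail precisely when the dynamics leave the 1-Markovian regime and $G_{t+\delta t,t}(\alpha|\beta)<0$ becomes possible [cf.\ the discussion below Eq.~(\ref{eq meso generator ME})], which is exactly why the lemma is restricted to the interval $I$. A secondary technicality is the usual convention for terms with $q_\beta=0$ in the relative entropy; restricting to pairs with $\mathrm{supp}\,p\subseteq\mathrm{supp}\,q$ keeps $D$ finite, and the contraction itself shows this support condition is preserved under the stochastic evolution, so the estimates remain well-defined along the whole trajectory.
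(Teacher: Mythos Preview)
Your proof is correct and follows essentially the same approach as the paper: the paper invokes the data-processing inequality~(\ref{eq contractivity rel ent}) for stochastic matrices together with the fact that 1-Markovianity makes every $G_{t+\delta t,t}$ stochastic, then passes to the infinitesimal limit. You additionally supply a derivation of~(\ref{eq contractivity rel ent}) via the log-sum inequality, which the paper simply states as a known fact, but otherwise the structure is identical.
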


This lemma follows from the fact that, firstly, for every stochastic matrix $M$ and any pair of distributions 
$p_\alpha$ and $q_\alpha$ one has that 
\begin{equation}\label{eq contractivity rel ent}
 D\left[\sum_\beta M_{\alpha,\beta}p_\beta\left\|\sum_\beta M_{\alpha,\beta}q_\beta\right]\right. \le D[p_\alpha\|q_\alpha],
\end{equation}
and secondly, for a 1-Markovian process the TM at any time $t$ and for every time step $\delta t$ is stochastic. 
We can now prove Theorem~\ref{thm ent prod}: 

\begin{proof}
 By definition of the EP rate we have 
 \begin{align}
  & \dot\Sigma(t) =     \label{eq help 3}	\\
  & -\lim_{\delta t\rightarrow0}\frac{D[G_{t+\delta t,t}\bb p_\text{cg}(t)\|\bs\pi_\text{cg}(\lambda_t)]-D[\bb p_\text{cg}(t)\|\bs\pi_\text{cg}(\lambda_t)]}{\delta t},	\nonumber
 \end{align}
 where $G_{t+\delta t,t}$ is the propagator obtained from the ME~(\ref{eq ME meso general}) [cf.~also 
 Eq.~(\ref{eq def intermediate TM})], $\bb p_\text{cg}(t)$ denotes the vector of the coarse-grained state $p_\alpha(t)$ 
 and likewise for $\bs\pi_\text{cg}(\lambda_t)$. Next, 
 we use the assumption that $\bs\pi_\text{cg}(\lambda_t)$ is an IFP 
 of the ME~(\ref{eq ME meso general}), i.e., we have 
 \begin{equation}
  G_{t+\delta t,t}\bs\pi_\text{cg}(\lambda_t) \approx \bs\pi_\text{cg}(\lambda_t)
 \end{equation}
 and any possible discrepancy vanishs in the limit $\delta t\rightarrow0$. Thus, 
 we can rewrite Eq.~(\ref{eq help 3}) 
 \begin{equation}
  \begin{split}
   \dot\Sigma(t) = -\lim_{\delta t\rightarrow0}\frac{1}{\delta t}\big\{ 	&	D[G_{t+\delta t,t}\bb p_\text{cg}(t)\|G_{t+\delta t,t}\bs\pi_\text{cg}(\lambda_t)]	\\
							&	-D[\bb p_\text{cg}(t)\|\bs\pi_\text{cg}(\lambda_t)]\big\}.
  \end{split}
 \end{equation}
 Now, if the dynamics is 1-Markovian (Definition~\ref{def 1 Markovian}), then $G_{t+\delta t,t}$ 
 is a stochastic matrix and from Eq.~(\ref{eq contractivity rel ent}) it follows that $\dot\Sigma(t) \ge 0$. 
\end{proof}

Whereas the proof of Theorem~\ref{thm ent prod} is straightforward, two things make it a non-trivial statement. 
First, we will show that the EP rate defined in Eq.~(\ref{eq ent prod abstract}) deserves its name because it can be 
linked to \emph{physical} quantities with a \emph{precise thermodynamic interpretation}. This will be done in 
Secs.~\ref{sec coarse-grained dissipative dynamics} and~\ref{sec classical system bath theory}. Second, the essential 
assumption that $\pi_\alpha(\lambda_t)$ is an IFP of the mesodynamics is non-trivial: it is \emph{not} a consequence 
of a 1-Markovian time-evolution and it can also happen for \emph{non}-Markovian dynamics. The details of this crucial 
assumption will be worked out in the remainder of this section, but already at this point we emphasize that 
1-Markovianity alone is \emph{not} sufficient to guarantee that $\dot\Sigma(t) \ge 0$. 
The Venn diagramm in Fig.~\ref{fig Venn} should help to understand the implications of Theorem~\ref{thm ent prod} better. 

\begin{figure}
 \centering\includegraphics[width=0.40\textwidth,clip=true]{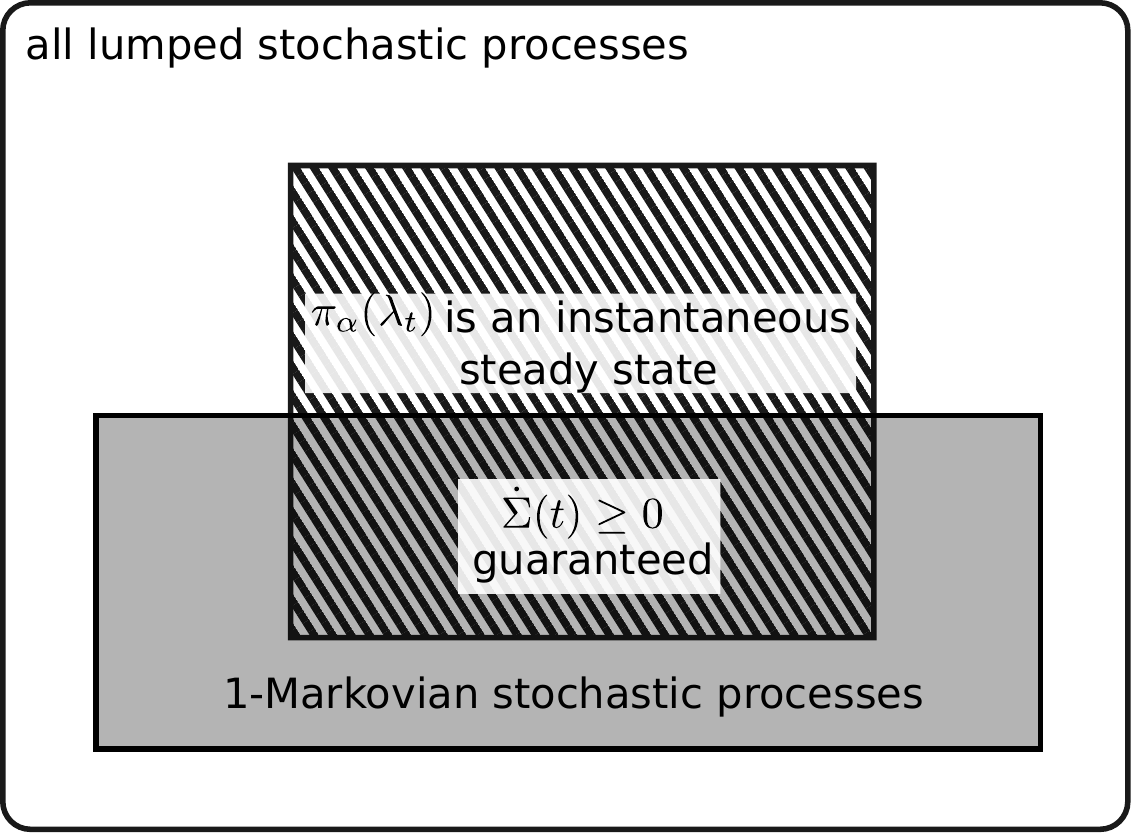}
 \label{fig Venn} 
 \caption{A Venn-diagramm to understand the implications of Theorem~\ref{thm ent prod}. The largest outer box contains 
 all possible lumped stochastic process. One subset of them is 1-Markovian (shaded in grey). For another subset 
 the maginalized microlevel steady state $\pi_\alpha(\lambda_t)$ is an IFP of the dynamics (striped area). Where both 
 sets overlap, $\dot\Sigma(t)\ge0$ is guaranteed, i.e., whenever we observe $\dot\Sigma(t) < 0$ we cannot be 
 simultaneously in the striped and in the shaded grey area. Note that the Venn-diagramm shows the situation for a 
 fixed driving protocol $\lambda_t$ and time interval $I$. Depending on $\lambda_t$ and $I$ the shaded grey and the 
 striped area can parametrically change in time. }
\end{figure}

\emph{IFP of the coarse-grained process.---} To answer the question when is $\tilde\pi_\alpha(\lambda_t) = \pi_\alpha(t)$, 
we start with the simple case and assume that the coarse-grained dynamics are lumpable. Hence, according to 
Corollary~\ref{thm cor lumpable ME} there is a unique and well-defined rate matrix. We then get: 

\begin{thm}\label{thm steady state strong}
 If the stochastic process is lumpable for some time interval $I$, then the IFP of the 
 mesostates is given by the marginal IFP of $W(\lambda_t)$ for all $t\in I$. 
\end{thm}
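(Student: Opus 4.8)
The plan is to work entirely at the level of the coarse-grained generator and reduce the statement to a short computation. Since the process is assumed lumpable on $I$, Corollary~\ref{thm cor lumpable ME} guarantees that the mesodynamics is governed by the well-defined rate matrix $\C V(\lambda_t)$, and the discussion following that corollary establishes that the time-local generator coincides with it, $V(\lambda_t,t) = \C V(\lambda_t)$ for all $t\in I$. By Definition~\ref{def IFP} it therefore suffices to verify that the marginalized microlevel IFP, with components $\pi_\alpha(\lambda_t) = \sum_{x_\alpha}\pi_{x_\alpha}(\lambda_t)$, is annihilated by $\C V(\lambda_t)$.

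First I would write out the action of $\C V(\lambda_t)$ on this candidate,
\begin{equation}
 \sum_\beta \C V_{\alpha,\beta}(\lambda_t)\,\pi_\beta(\lambda_t) = \sum_\beta \C V_{\alpha,\beta}(\lambda_t)\sum_{y_\beta}\pi_{y_\beta}(\lambda_t).
\end{equation}
The decisive step is then to invoke the lumpability condition~(\ref{eq cond lumpability ME}): the common value $\C V_{\alpha,\beta}(\lambda_t)$ equals $\sum_{x_\alpha}W_{x_\alpha,y_\beta}(\lambda_t)$ for \emph{every} representative $y_\beta\in\chi_\beta$, not merely for one. This representative-independence is exactly what licenses moving $\C V_{\alpha,\beta}(\lambda_t)$ inside the sum over $y_\beta$ while replacing it by the microscopic expression evaluated at that same $y_\beta$, giving
\begin{equation}
 \sum_\beta\sum_{y_\beta}\left(\sum_{x_\alpha}W_{x_\alpha,y_\beta}(\lambda_t)\right)\pi_{y_\beta}(\lambda_t).
\end{equation}

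To conclude, I would collapse the double sum $\sum_\beta\sum_{y_\beta}$ into a single unrestricted sum over all microstates $y$ (the partition $\bs\chi$ being exhaustive and disjoint) and exchange the order of summation, obtaining $\sum_{x_\alpha}\sum_y W_{x_\alpha,y}(\lambda_t)\pi_y(\lambda_t)$. Each inner sum vanishes by the defining property of the microlevel IFP, $\sum_y W_{x,y}(\lambda_t)\pi_y(\lambda_t)=0$, applied at $x=x_\alpha$, so the entire expression is zero and $\pi_\alpha(\lambda_t)$ is an IFP of the mesodynamics.

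I do not anticipate a genuine obstacle: once the identification $V(\lambda_t,t)=\C V(\lambda_t)$ is granted, the claim is a one-line manipulation. The only point requiring care — and the real content of the lemma — is recognizing that lumpability is precisely the hypothesis needed to pull the mesoscopic rate inside the microstate sum. Without the representative-independence in~(\ref{eq cond lumpability ME}), the factor $\sum_{x_\alpha}W_{x_\alpha,y_\beta}$ would depend on $y_\beta$, the interchange above would be illegitimate, and the marginal of the microscopic IFP would not in general be stationary for the reduced generator. This also clarifies why the statement can fail outside the lumpable case, foreshadowing the more delicate IFP analysis for non-lumpable dynamics in the remainder of the section.
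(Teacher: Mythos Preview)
Your proposal is correct and follows essentially the same route as the paper: both arguments expand $\pi_\beta(\lambda_t)$ as $\sum_{y_\beta}\pi_{y_\beta}(\lambda_t)$, invoke the representative-independence of Eq.~(\ref{eq cond lumpability ME}) to match the microstate label in $\sum_{x_\alpha}W_{x_\alpha,y_\beta}$ with the summation variable, and then use $W(\lambda_t)\bs\pi(\lambda_t)=0$. Your explicit identification $V(\lambda_t,t)=\C V(\lambda_t)$ from the discussion after Corollary~\ref{thm cor lumpable ME} and your remark on why lumpability is indispensable add useful context but do not alter the argument.
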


\begin{proof}
 We want to show that $\C V(\lambda_t)\bs\pi(\lambda_t) = 0$. By using Corollary~\ref{thm cor lumpable ME} 
 in the first and third equality, we obtain 
 \begin{equation}
  \begin{split}
   \sum_\beta \C V_{\alpha,\beta}(\lambda_t)\pi_\beta(\lambda_t)    &=  \sum_\beta\sum_{x_\alpha} W_{x_\alpha,y_\beta}(\lambda_t)\pi_\beta(\lambda_t)    \\
                                                                    &=  \sum_\beta\sum_{x_\alpha,y'_\beta} W_{x_\alpha,y_\beta}(\lambda_t)\pi_{y'_\beta}(\lambda_t)    \\
                                                                    &=  \sum_\beta\sum_{x_\alpha,y'_\beta} W_{x_\alpha,y'_\beta}(\lambda_t)\pi_{y'_\beta}(\lambda_t),
  \end{split}
 \end{equation}
 which is zero since $\pi_{y'_\beta}(\lambda_t)$ is the IFP at the microlevel. 
\end{proof}

Therefore, together with Theorem~\ref{thm ent prod} we can infer that $\dot\Sigma(t) < 0$ unambiguously shows that the 
dynamics are not lumpable. However, lumpability required the coarse-grained process to fulfill the 
Markov property~(\ref{eq cond Markovianity}) for any initial condition, which is a rather strong property. 
We are therefore interested whether a negative EP rate reveals also insights about the weaker property of 
1-Markovianity. For instance, for undriven processes we intuitively expect that, provided that we start at steady state, 
we always remain at steady state independently of the time-dependence of the generator~(\ref{eq meso generator ME}) or 
even the question whether the inverse of Eq.~(\ref{eq TM mesolevel general}) exists. Then, negative values of the EP rate 
will always indicate non-Markovian dynamics for undriven system. Indeed, the following theorem holds: 

\begin{thm}\label{thm steady state weak}
 Consider an undriven stochastic process described by the ME~(\ref{eq ME meso general}), i.e., we assume $G_{t,0}^{-1}$ 
 to exist for all admissible initial states $\C A(0)$ and all times $t$. If the conditional microstates are initially 
 equilibrated, $\C A(0) \subset\C A_\pi$ [Eq.~(\ref{eq stationary preparation class})], then 
 $\pi_\alpha$ is an IFP of the stochastic process at the mesolevel. 
\end{thm}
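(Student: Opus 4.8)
The plan is to establish the fixed-point property at the level of the explicit mesolevel propagators $G_{t,0}$ and only pass to the generator $V(t)$ at the very end. This is the natural route because $V(t)$ is defined only implicitly through the limit in Eq.~(\ref{eq meso generator ME}) and may carry negative rates, whereas $G_{t,0}$ is given in closed form by Eq.~(\ref{eq TM mesolevel general}). Concretely, I would aim to show the global relation $\sum_\beta G_{t,0}(\alpha|\beta)\pi_\beta = \pi_\alpha$ for every $t$, i.e.\ that the marginalized steady state is a fixed point of the entire propagator family.

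First I would exploit the two hypotheses separately. Since the process is undriven, the microlevel rate matrix $W$ is constant and its IFP obeys $W\boldsymbol\pi = 0$, hence the microlevel propagator preserves it, $\sum_y T_{t,0}(x|y)\pi_y = \pi_x$ for all $t$. Since $\C A(0)\subset\C A_\pi$ [Eq.~(\ref{eq stationary preparation class})], the fixed conditional microstates are precisely the conditional steady states, $p_{y|\beta}(0) = \pi_{y_\beta}/\pi_\beta$, so Eq.~(\ref{eq TM mesolevel general}) reads $G_{t,0}(\alpha|\beta) = \sum_{x_\alpha,y_\beta} T_{t,0}(x_\alpha|y_\beta)\,\pi_{y_\beta}/\pi_\beta$. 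The key computation then follows by applying this to $\pi_\beta$: the factor $\pi_\beta$ cancels the denominator, and the sum over $\beta$ together with the internal sum over $y_\beta$ collapses into an unrestricted sum over all microstates $y$,
\begin{equation}
 \sum_\beta G_{t,0}(\alpha|\beta)\,\pi_\beta = \sum_{x_\alpha}\sum_{y} T_{t,0}(x_\alpha|y)\,\pi_y = \sum_{x_\alpha}\pi_{x_\alpha} = \pi_\alpha ,
\end{equation}
where the middle step uses the preservation of $\boldsymbol\pi$ under $T_{t,0}$. This establishes $G_{t,0}\bs\pi_\text{cg} = \bs\pi_\text{cg}$ for all $t$, with $\bs\pi_\text{cg}$ the vector whose components are the marginals $\pi_\alpha$.

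To conclude, I would pass to the generator. Because $G_{t,0}^{-1}$ is assumed to exist, the fixed-point relation immediately yields $G_{t,0}^{-1}\bs\pi_\text{cg} = \bs\pi_\text{cg}$, so the intermediate propagator also fixes $\bs\pi_\text{cg}$, namely $G_{t+\delta t,t}\bs\pi_\text{cg} = G_{t+\delta t,0}G_{t,0}^{-1}\bs\pi_\text{cg} = G_{t+\delta t,0}\bs\pi_\text{cg} = \bs\pi_\text{cg}$. Substituting into the defining limit of Eq.~(\ref{eq meso generator ME}) gives $V(t)\bs\pi_\text{cg} = \lim_{\delta t\to0}(G_{t+\delta t,t}-1)\bs\pi_\text{cg}/\delta t = 0$, which is exactly the IFP condition of Definition~\ref{def IFP}.

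I expect the main obstacle to be conceptual rather than computational: one must resist the temptation to argue through the effective rate matrix $R[\lambda_t,p_\alpha(0)]$ of Eq.~(\ref{eq effective rate matrix CG}), for which $\bs\pi_\text{cg}$ is generically \emph{not} a zero mode, and instead run the whole argument at the propagator level. The collapse of the double sum in the key computation rests entirely on the initial equilibration $\C A(0)\subset\C A_\pi$; were the conditional microstates prepared otherwise, the weights $p_{y|\beta}(0)$ would not reassemble the full microlevel steady state $\pi_y$, and $G_{t,0}\bs\pi_\text{cg} = \bs\pi_\text{cg}$ would fail. Thus the hypothesis is not a technical convenience but the crux of the result.
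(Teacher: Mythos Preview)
Your proposal is correct and follows essentially the same route as the paper: establish $G_{t,0}\bs\pi_\text{cg}=\bs\pi_\text{cg}$ from the hypothesis $\C A(0)\subset\C A_\pi$, invoke invertibility to get $G_{t,0}^{-1}\bs\pi_\text{cg}=\bs\pi_\text{cg}$, and then pass to the generator via the limit in Eq.~(\ref{eq meso generator ME}). The only difference is that you spell out explicitly the microlevel computation behind the first step, which the paper simply asserts.
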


\begin{proof}
 If $\C A(0) \subset\C A_\pi$, we can conclude that $\sum_\beta G_{t,0}(\alpha|\beta)\pi_\beta = \pi_\alpha$, i.e., 
 if we start with the coarse-grained steady state we also remain in it for all times $t$. Since $G_{t,0}$ was assumed 
 to be invertible, 
 \begin{equation}\label{eq invertible steady state}
  \sum_\beta (G_{t,0}^{-1})_{\alpha,\beta}\pi_\beta = \pi_\alpha.
 \end{equation}
 Hence, by definition~(\ref{eq meso generator ME}) we obtain the chain of equalities 
 \begin{equation}
  \begin{split}\label{eq help thm steady state weak}
   & \sum_{\alpha,\beta}\left(\lim_{\delta t\rightarrow0}\frac{G_{t+\delta t,0}G_{t,0}^{-1} - 1}{\delta t}\right)_{\alpha,\beta} \pi_\beta	\\
   & = \lim_{\delta t\rightarrow0}\frac{1}{\delta t}\left[\sum_{\beta,\gamma} G_{t+\delta t,0}(\alpha|\gamma)(G_{t,0}^{-1})_{\gamma,\beta}\pi_\beta - \pi_\alpha\right]	\\
   & = \lim_{\delta t\rightarrow0}\frac{1}{\delta t}\left[\sum_{\gamma} G_{t+\delta t,0}(\alpha|\gamma)\pi_\gamma - \pi_\alpha\right]	\\
   & = \lim_{\delta t\rightarrow0}\frac{1}{\delta t}\left[\pi_\alpha - \pi_\alpha\right] = 0.
  \end{split}
 \end{equation}
\end{proof}

We recognize a big difference in the characterization of the IFPs for driven and undriven processes. 
Without driving, the right set of initial states suffices already to show that the microlevel steady state induces the 
steady state at the mesolevel, even if the dynamics is non-Markovian. Thus, for this kind of dynamics 
$\dot\Sigma(t) < 0$ unambiguously signifies non-Markovianity. For driven systems instead, we needed the much 
stronger requirement of lumpability, i.e., Markovianity of the lumped process with TMs independent 
of the initial microstate. However, at least formally it is possible to establish the following additional theorem: 

\begin{thm}\label{thm steady state general}
 Consider a driven stochastic process described by the ME~(\ref{eq ME meso general}), i.e., we assume $G_{t,0}^{-1}$ to 
 exist for all initial states and all times $t$. We denote by $I$ the time-interval in which either 
 \begin{enumerate}
  \item all conditional microstates in the set of time-evolved states are at steady state, 
  $\C A(t)\subset\C A_\pi(\lambda_t)$, or 
  \item the IFP of the microdynamics is an admissible time-evolved state, $\pi_x(\lambda_t)\in\C A(t)$. 
 \end{enumerate}
 Then, $\pi_{\alpha}(\lambda_t)$ is an IFP of the lumped process for all $t\in I$. 
\end{thm}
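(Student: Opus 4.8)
The plan is to show in both cases that $\bs\pi_\text{cg}(\lambda_t)$ is annihilated by the generator $V(\lambda_t,t)$ of Eq.~(\ref{eq meso generator ME}), i.e.\ $V(\lambda_t,t)\bs\pi_\text{cg}(\lambda_t)=0$, so that $\pi_\alpha(\lambda_t)$ qualifies as an IFP in the sense of Definition~\ref{def IFP}. Since the two alternatives rest on rather different structural assumptions, I would treat them as separate cases, beginning with the second because it admits the more direct argument.

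For case~2 I would exploit that the reachability $\pi_x(\lambda_t)\in\C A(t)$ means there is an admissible initial state in $\C A(0)$, with mesostate marginal $\bb p_0$, whose microscopic evolution hits the microlevel IFP at time $t$. Coarse-graining this identity gives $G_{t,0}\bb p_0=\bs\pi_\text{cg}(\lambda_t)$, hence $G_{t,0}^{-1}\bs\pi_\text{cg}(\lambda_t)=\bb p_0$ by the assumed invertibility. Inserting this into Eq.~(\ref{eq meso generator ME}) reduces the action of the generator to $\lim_{\delta t\to0}\delta t^{-1}[G_{t+\delta t,0}\bb p_0-\bs\pi_\text{cg}(\lambda_t)]$, so everything hinges on propagating the microlevel IFP over the short interval $[t,t+\delta t]$. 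Here I would use the time-ordered expansion $T_{t+\delta t,t}=1+W(\lambda_t)\delta t+O(\delta t^2)$ from Eq.~(\ref{eq TM from ME}), together with the defining property $W(\lambda_t)\bs\pi(\lambda_t)=0$, to conclude that the microstate remains at $\bs\pi(\lambda_t)$ up to $O(\delta t^2)$; marginalizing and dividing by $\delta t$ then sends the limit to zero.

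For case~1 the equilibration of the conditional microstates throughout $\C A(t)$ makes the conditional factor $p_{y|\beta}(t)$ in the effective rate matrix of Eq.~(\ref{eq effective rate matrix CG}) equal to $\pi_{y|\beta}(\lambda_t)$ for every admissible state. The matrix $R[\lambda_t,p_\alpha(0)]$ then loses its dependence on the initial mesostate and collapses to the state-independent rate matrix $\C V(\lambda_t)=\sum_{x_\alpha,y_\beta}W_{x_\alpha,y_\beta}(\lambda_t)\pi_{y|\beta}(\lambda_t)$, formally identical to the lumped generator of Corollary~\ref{thm cor lumpable ME}. The remaining task is to identify this with $V(\lambda_t,t)$: both reproduce $\partial_t p_\alpha(t)$ for every admissible state, and since $\bb p(0)$ ranges over the full simplex while $G_{t,0}$ is invertible, the time-evolved mesostates span the whole affine space, forcing $V(\lambda_t,t)=\C V(\lambda_t)$ there. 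Acting on $\bs\pi_\text{cg}(\lambda_t)$ I would then repeat verbatim the computation of Theorem~\ref{thm steady state strong}, using $\pi_{y|\beta}(\lambda_t)\pi_\beta(\lambda_t)=\pi_{y_\beta}(\lambda_t)$ and $W(\lambda_t)\bs\pi(\lambda_t)=0$ to obtain zero.

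The main obstacle I anticipate is precisely the identification $V(\lambda_t,t)=\C V(\lambda_t)$ in case~1, because the text stresses (below Corollary~\ref{thm cor lumpable ME} and after Definition~\ref{def IFP}) that the direct coarse-grained generator $R$ and the inverse-based generator $V$ differ in general and that the IFP cannot be read off from $R$. The delicate point is that equilibration of the conditionals is a property of the \emph{states} in $\C A(t)$ rather than of the rate matrix $W$, so one must argue carefully that this state-level property nonetheless renders the generator genuinely state-independent on a full-dimensional set of mesostates; the invertibility of $G_{t,0}$ is what makes that span argument go through. A secondary point to handle with care is the uniformity of the $O(\delta t^2)$ remainder in case~2 when $\lambda_t$ varies continuously, which I would again control through the time-ordered expansion of Eq.~(\ref{eq TM from ME}).
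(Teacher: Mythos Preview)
Your proposal is correct. For case~1 your route is essentially the paper's: both arguments use the invertibility of $G_{t,0}$ to guarantee that the time-evolved mesostates span the full space, and then extend the identity ``$V$ acting on a reachable mesostate equals the coarse-grained microdynamics with equilibrated conditionals'' by linearity to $\bs\pi_\text{cg}(\lambda_t)$. The paper phrases this via Eq.~(\ref{eq help 4}) applied to a linear combination $\pi_x(\lambda_t)=\sum_i\mu_i p_x^{(i)}(t)$ with $p_x^{(i)}(t)\in\C A(t)\subset\C A_\pi(\lambda_t)$, while you phrase it as $R$ becoming state-independent and hence equal to $V$; these are the same span argument in different bookkeeping, and your anticipated ``main obstacle'' is exactly the point the paper handles with that linear combination.

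For case~2 your argument is genuinely different and more direct. The paper proceeds by contradiction: if $V(\lambda_t,t)\bs\pi_\text{cg}(\lambda_t)\neq0$, then there must exist a state $q_y(t)\in\C A(t)$ with marginal $\pi_\alpha(\lambda_t)$ but non-equilibrated conditionals, which together with $\pi_x(\lambda_t)\in\C A(t)$ gives two distinct states in $\C A(t)$ sharing the same mesostate, contradicting the invertibility of $G_{t,0}$. You instead compute $V(\lambda_t,t)\bs\pi_\text{cg}(\lambda_t)$ head-on by identifying the preimage $\bb p_0=G_{t,0}^{-1}\bs\pi_\text{cg}(\lambda_t)$ and using $W(\lambda_t)\bs\pi(\lambda_t)=0$ at the microlevel to kill the $O(\delta t)$ term. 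Your approach is shorter and avoids the contrapositive; the paper's approach has the mild advantage of making explicit \emph{why} invertibility is indispensable (it rules out two admissible states collapsing to the same mesostate), which connects more visibly to the remark after Definition~\ref{def IFP} that the IFP cannot be read off from $R$.
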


\begin{proof}
 First of all, notice that the ME~(\ref{eq ME meso general}) generates the exact time evolution, i.e., for any 
 $p_y(t) = p_\beta(t)p_{y|\beta}(t)\in\C A(t)$ we have 
 \begin{equation}
  \begin{split}\label{eq help 4}
   & \sum_\beta V_{\alpha,\beta}(\lambda_t,t) p_\beta(t)	\\
   & = \sum_{x_\alpha}\sum_{\beta,y_\beta} W_{x_\alpha,y_\beta}(\lambda_t) p_{y|\beta}(t)p_\beta(t).
  \end{split}
 \end{equation}
 For the first condition, if $\pi_x(\lambda_t)\in\C A(t) \subset \C A_\pi(\lambda_t)$, then one immediately verifies that 
 $V(\lambda_t,t)\bs\pi(\lambda_t) = 0$. But one may have that $\C A(t) \subset \C A_\pi(\lambda_t)$, 
 but $\pi_x(\lambda_t)\notin\C A(t)$. This means that there is no admissible initial state, which gets mapped to the IFP 
at time $t$, i.e., $T_{t,0}^{-1}\bs{\pi}(\lambda_t) \notin\C A(0)$. 
 However, by the invertibility of the dynamics there is always a set of states $p_x^{(i)}(t)\in\C A(t)$, 
 which spans the entire mesostate space. Thus, we can always find a linear combination 
 $\pi_x(\lambda_t) = \sum_i \mu_i p_x^{(i)}(t)$ with $\mu_i\in\mathbb{R}$. 
 Then, $V(\lambda_t,t)\bs\pi(\lambda_t) = 0$ follows from the linearity of the dynamics by applying 
Eq.~(\ref{eq help 4}) to each term of the linear combination. 
 
 For the second condition let us assume the opposite, i.e., $V(\lambda_t,t)\bs\pi(\lambda_t) \neq 0$. This implies 
 $\sum_\beta G_{t+\delta t,t}(\alpha|\beta) \pi_\beta(\lambda_t) \neq \pi_\alpha(\lambda_t)$ for a sufficiently 
 small $\delta t$. But as the reduced dynamics are exact, this can only be the case if there is a state 
 $q_y(t) = \pi_\beta(\lambda_t)q_{y|\beta}(t)\in\C A(t)$ with $q_{y|\beta}(t)\neq \pi_{y|\beta}(\lambda_t)$. 
 On the other hand, the theorem assumes that $\pi_x(\lambda_t)\in\C A(t)$ too. Hence, 
 there must be two states $q_y(t)\in\C A(t)$ and $\pi_x(\lambda_t)\in\C A(t)$, which give the 
 same marginal mesostate $\pi_\alpha(\lambda_t)$. Since the ME dynamics in the full space are clearly invertible 
 and since the initial conditional microstate is fixed, this means that there must be two different initial 
 mesostates, which get mapped to the same mesostate at time $t$. Hence, $G_{t,0}$ cannot be invertible, which 
 conflicts with our initial assumption. 
\end{proof}

Theorem~\ref{thm steady state general} plays an important role in the limit of TSS (see 
Sec.~\ref{sec time scale separation}) where the first condition is automatically fulfilled. The second condition 
will be in general complicated to check if the microdynamics are complex. 

It is worthwhile to ask whether milder conditions suffice to ensure that $\pi_\alpha(\lambda_t)$ is an IFP of the 
mesodynamics. In Appendix~\ref{sec app weak lumpability} we show that they can indeed be found if the dynamics fulfills 
the special property of weak lumpability. In general, however, we believe that it will be hard to find milder conditons: 
in Sec.~\ref{sec steady states} we give an example for an ergodic and undriven Markov chain, whose mesodynamics are 
1-Markovian, but $\pi_\alpha$ is not an IFP unless $\C A(0)\subset\C A_\pi$. As any driven process takes the conditional 
microstates out of equilibrium, i.e., $\C A(t) \nsubseteq \C A_\pi(\lambda_t)$ in general, finding useful milder conditions 
to guarantee that $\pi_\alpha(\lambda_t)$ is an IFP seems unrealistic. 

Before we proceed with the physical picture, we want to comment on a mathematical subtlety, which becomes relevant for 
the application considered in Sec.~\ref{sec classical system bath theory}. In there, we will apply our findings from 
above to the case of Hamiltonian dynamics described on the continuous phase space of a collection of classical 
particles. This does not fit into the conventional picture of a finite and discrete state space $\C X$ with $N<\infty$ 
microstates. However, under the assumption that it is possible to approximate the actual Hamiltonian dynamics by using a 
high-dimensional grid of very small phase space cells, we can imagine that we can approximate the true dynamics 
arbitrarily well with a finite, discretized phase space. Nevertheless, in order not to rely on this way of reasoning, we 
briefly re-derive the above theorems for the Hamiltonian setting in Appendix~\ref{sec app Hamiltonian dynamics}.

\section{Coarse-grained dissipative dynamics}
\label{sec coarse-grained dissipative dynamics}

\subsection{Thermodynamics at the microlevel}
\label{sec thermodynamics microlevel}

We now start to investigate the first application of the general framework from Sec.~\ref{sec mathematical results}. 
In this section we consider the ME~(\ref{eq ME general}), which describes a large class of dissipative classical and 
quantum systems, with applications ranging from molecular motors to thermoelectric devices. In addition, we impose the 
condition of local detailed balance, 
\begin{equation}\label{eq local detailed balance}
 \ln\frac{W_{x,y}(\lambda_t)}{W_{y,x}(\lambda_t)} = -\beta[E_x(\lambda_t) - E_y(\lambda_t)],
\end{equation}
where $E_x(\lambda_t)$ denotes the energy of state $x$ and $\beta$ the inverse temperature of the bath. 
Eq.~(\ref{eq local detailed balance}) ensures that the IFP at the microlevel is given by the Gibbs state 
$\pi_x(\lambda_t) = e^{-\beta E_x(\lambda_t)}/Z(\lambda_t)$ with $Z(\lambda_t) = \sum_x e^{-\beta E_x(\lambda_t)}$ and 
it allows us to link energetic changes in the system with entropic changes in the bath. A thermodynamically 
consistent description of the microdynamics follows from the definitions 
\begin{align}
 U_\text{mic}(t)		&\equiv	\sum_x E_x(\lambda_t) p_x(t) ~ (\text{internal energy}),	\label{eq U mic}	\\
 \dot W_\text{mic}(t)		&\equiv \sum_x [d_t E_{x}(\lambda_t)]p_x(t) ~ (\text{work rate}),	\label{eq work}	\\
 \dot Q_\text{mic}(t)		&\equiv	\sum_{x} E_{x}(\lambda_t) \partial_tp_{x}(t) ~ (\text{heat rate}),	\label{eq heat}	\\
 S_\text{mic}(t)		&\equiv -\sum_x p_x(t)\ln p_{x}(t) ~ (\text{Shannon entropy}),	\label{eq S mic}	\\
 F_\text{mic}(t)		&\equiv	U_\text{mic}(t) - S_\text{mic}(t)/\beta ~ (\text{free energy}), \label{eq F mic}	\\
 \dot\Sigma_\text{mic}(t)	&\equiv	-\left.\frac{\partial}{\partial t}\right|_{\lambda_t} D[p_x(t)\|\pi_x(\lambda_t)] \ge 0 ~ (\text{EP rate}).
\end{align}
Here, we used the subscript ``mic'' to emphasize that the above definitions refer to the thermodynamic description 
of the microdynamics, which has to be distinguished from the thermodynamic description at the mesolevel introduced 
below. Using the ME~(\ref{eq ME general}) and local detailed balance~(\ref{eq local detailed balance}) together with the 
definitions provided above, one can verify the first and second law of thermodynamics in the conventional form: 
$d_t U_\text{mic}(t) = \dot W_\text{mic}(t) + \dot Q_\text{mic}(t)$ and 
$\dot\Sigma_\text{mic}(t) = \beta[\dot W_\text{mic}(t) - d_t F_\text{mic}(t)] \ge 0$. 

Since the IFP at the microlevel is the equilibrium Gibbs state, we can parametrize the conditional equilibrium state of 
the microstates belonging to a mesostate $\alpha$ as 
\begin{equation}\label{eq cond eq state CG dynamics}
 \pi_{x|\alpha}(\lambda_t) = e^{-\beta[E_{x_\alpha}(\lambda_t) - F_\alpha(\lambda_t)]},
\end{equation}
where $F_\alpha(\lambda_t) \equiv -\beta^{-1}\ln\sum_{x_\alpha} e^{-\beta E_{x_\alpha}(\lambda_t)}$ plays the role of 
an effective free energy. The reduced equilibrium distribution of a mesostate can then be written as 
\begin{equation}\label{eq state CG dynamics}
 \pi_\alpha(\lambda_t) = \frac{e^{-\beta F_\alpha(\lambda_t)}}{Z(\lambda_t)}.
\end{equation}

In the following we want to find meaningful definitions, which allow us to formulate the laws of thermodynamics at 
a coarse-grained level and which we can connect to the general theory of Sec.~\ref{sec mathematical results}. 
Since the dynamics at the mesolevel will typically be non-Markovian and not fulfill local detailed balance, finding a 
consistent thermodynamic framework becomes non-trivial. We will restrict our investigations here to any initial 
prepartion class which fulfills $\C A(0)\subset\C A_\pi(\lambda_0)$ with $\C A_\pi(\lambda_0)$ defined in 
Eq.~(\ref{eq stationary preparation class}). If the dynamics is driven, we will need one additional assumption 
[see Eq.~(\ref{eq cond bipartite driving})], otherwise our results are general.

\subsection{Thermodynamics at the mesolevel}

With the framework from Sec.~\ref{sec mathematical results} we are now going to study the thermodynamics at the 
mesolevel. This is possible in full generality if the dynamics are undriven. In case of driving, 
$\dot\lambda_t\neq0$, we need to assume that  we can split the time-dependent energy function as 
\begin{equation}\label{eq cond bipartite driving}
 E_{x_\alpha}(\lambda_t) = E_{\alpha}(\lambda_t) + \tilde E_{x_\alpha}.
\end{equation}
Thus, solely the mesostate energies are affected by the driving. This condition naturally arises if we think about the 
complete system as being composed of two interacting systems, $\C X = \C Y\otimes \C Z$, and we trace out the degrees of 
freedom $\C Y$ to obtain a reduced description in $\C Z$. In this case we can split the energy for any value of 
$\lambda_t$ as $E_{yz} = E_y + E_z + V_{yz}$ where $V_{yz}$ describes an interaction energy and $E_y$ ($E_z$) are the 
bare energies associated with the isolated system $\C Y$ ($\C Z$). Condition~(\ref{eq cond bipartite driving}) is then 
naturally fulfilled if we identify $E_z = E_\alpha$ and only $E_z = E_z(\lambda_t)$ is time-dependent (compare also with 
Sec.~\ref{sec classical system bath theory}). Importantly, this condition allows us to identify 
\begin{equation}
 \begin{split}\label{eq work bipartite}
  \dot W_\text{mic}(t)	&=	\sum_x \frac{\partial E_x(\lambda_t)}{\partial t}p_x(t)	\\
			&=	\sum_\alpha \frac{\partial E_\alpha(\lambda_t)}{\partial t} p_\alpha(t) \equiv \dot W(t).
 \end{split}
\end{equation}
Therefore, the exact rate of work can be computed from the knowledge about the mesostate alone. Furthermore, 
Eq.~(\ref{eq cond bipartite driving}) implies that the conditional equilibrium state of the 
bath~(\ref{eq cond eq state CG dynamics}) does not depend on $\lambda_t$ and hence, we can write 
$\C A_\pi(\lambda_0) = \C A_\pi$. 

The thermodynamic analysis starts from our central definition~(\ref{eq ent prod abstract}) 
\begin{equation}
 \dot\Sigma(t) = -\left.\frac{\partial}{\partial t}\right|_{\lambda_t} D[p_\alpha(t)\|\pi_\alpha(\lambda_t)]
\end{equation}
with $\pi_\alpha(\lambda_t)$ given in Eq.~(\ref{eq state CG dynamics}). Using Eq.~(\ref{eq work bipartite}) 
and noting that $d_t F_\alpha(\lambda_t) = d_t E_\alpha(\lambda_t)$, it is not hard to confirm that 
\begin{equation}
 \dot\Sigma(t) = \beta\dot W(t) - \beta\frac{d}{dt}\sum_\alpha p_\alpha(t)\left[F_\alpha(\lambda_t) + \frac{1}{\beta}\ln p_\alpha(t)\right].
\end{equation}
This motivates the definition of the nonequilibrium free energy 
\begin{equation}
 F(t) \equiv \sum_\alpha p_\alpha(t)\left[F_\alpha(\lambda_t) + \frac{1}{\beta}\ln p_\alpha(t)\right],	\label{eq free energy mesolevel}
\end{equation}
such that the EP rate is given by the familiar form of phenomenological non-equilibrium thermodynamics: 
$\dot\Sigma(t) = \beta[\dot W(t) - d_t F(t)]$. The EP over a finite time interval becomes 
\begin{equation}\label{eq ent prod CG dynamics}
 \Sigma(t) = \beta[W(t) - \Delta F(t)]
\end{equation}
and for a proper second law it remains to show that this quantity is positive. This follows from: 

\begin{thm}
 For any $p_x(0) \in \C A_\pi$ and any driving protocol we have 
 \begin{equation}
  \Sigma(t) \ge \Sigma_\text{\normalfont mic}(t) \ge 0.
 \end{equation}
\end{thm}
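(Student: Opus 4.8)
The plan is to subtract the two integrated entropy productions and recognize the remainder as a manifestly non-negative relative entropy that the initial preparation class $\C A_\pi$ is tailored to control. First I would write both quantities in their integrated ``work minus free energy'' form. The microscopic second law integrates to $\Sigma_\text{mic}(t) = \beta[W_\text{mic}(t) - \Delta F_\text{mic}(t)]$ with $\Delta F_\text{mic}(t) \equiv F_\text{mic}(t) - F_\text{mic}(0)$, while the mesoscopic result~(\ref{eq ent prod CG dynamics}) gives $\Sigma(t) = \beta[W(t) - \Delta F(t)]$. Because Eq.~(\ref{eq work bipartite}) already established $\dot W_\text{mic}(t) = \dot W(t)$, the work contributions cancel on subtraction and
\begin{equation}
 \Sigma(t) - \Sigma_\text{mic}(t) = \beta[\Delta F_\text{mic}(t) - \Delta F(t)].
\end{equation}
Everything therefore reduces to controlling the difference of free-energy changes at the two levels of description.

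Next I would convert each free energy into a relative entropy to its Gibbs/IFP state. Using $\pi_x(\lambda_t) = e^{-\beta E_x(\lambda_t)}/Z(\lambda_t)$ one obtains $\beta F_\text{mic}(t) = D[p_x(t)\|\pi_x(\lambda_t)] - \ln Z(\lambda_t)$, and using $\pi_\alpha(\lambda_t) = e^{-\beta F_\alpha(\lambda_t)}/Z(\lambda_t)$ together with definition~(\ref{eq free energy mesolevel}) one obtains $\beta F(t) = D[p_\alpha(t)\|\pi_\alpha(\lambda_t)] - \ln Z(\lambda_t)$, with the \emph{same} partition function since $\sum_\alpha e^{-\beta F_\alpha} = Z$. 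The $\ln Z$ terms cancel, and the chain rule for relative entropy, $D[p_x\|\pi_x] = D[p_\alpha\|\pi_\alpha] + \sum_\alpha p_\alpha D[p_{x|\alpha}\|\pi_{x|\alpha}]$, yields
\begin{equation}
 \beta[F_\text{mic}(t) - F(t)] = \sum_\alpha p_\alpha(t)\, D[p_{x|\alpha}(t)\|\pi_{x|\alpha}(\lambda_t)] \ge 0.
\end{equation}

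Finally I would evaluate this identity at times $t$ and $0$ and subtract, so that $\Sigma(t) - \Sigma_\text{mic}(t)$ equals the conditional relative entropy at time $t$ minus the same quantity at $t=0$. Here the hypothesis enters decisively: for $p_x(0)\in\C A_\pi$ the conditional microstates start equilibrated, $p_{x|\alpha}(0) = \pi_{x|\alpha}$, and condition~(\ref{eq cond bipartite driving}) makes $\pi_{x|\alpha}$ independent of $\lambda_t$, so that $\pi_{x|\alpha}(\lambda_0) = \pi_{x|\alpha}$ and every $t=0$ term vanishes. This leaves
\begin{equation}
 \Sigma(t) - \Sigma_\text{mic}(t) = \sum_\alpha p_\alpha(t)\, D[p_{x|\alpha}(t)\|\pi_{x|\alpha}(\lambda_t)] \ge 0,
\end{equation}
which proves $\Sigma(t) \ge \Sigma_\text{mic}(t)$. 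The remaining bound $\Sigma_\text{mic}(t) \ge 0$ follows immediately by integrating the microscopic second law $\dot\Sigma_\text{mic} \ge 0$ from $0$ to $t$.

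The main obstacle is conceptual rather than computational: one has to recognize that the ``lumping penalty'' $\beta[F_\text{mic}(t) - F(t)]$ is exactly the $p_\alpha$-averaged relative entropy of the conditional microstate against its conditional equilibrium, and that restricting to $\C A_\pi$ is precisely what makes this penalty vanish at $t=0$ while staying non-negative for $t>0$. The two points requiring care are checking that the microscopic and mesoscopic partition functions coincide and that $\pi_{x|\alpha}$ is genuinely $\lambda$-independent under~(\ref{eq cond bipartite driving}); without the latter the boundary term at $t=0$ would not cancel and the comparison $\Sigma \ge \Sigma_\text{mic}$ could fail.
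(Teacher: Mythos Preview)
Your proof is correct and follows essentially the same route as the paper: subtract the two integrated entropy productions so that the works cancel via Eq.~(\ref{eq work bipartite}), rewrite $\beta[\Delta F_\text{mic}-\Delta F]$ as the difference of relative entropies, apply the chain rule to identify it with $\sum_\alpha p_\alpha D[p_{x|\alpha}\|\pi_{x|\alpha}]$, and use the hypothesis $p_x(0)\in\C A_\pi$ to kill the $t=0$ boundary term. Your write-up is in fact more explicit than the paper's (you spell out the cancellation of $\ln Z$, the chain rule, and the role of Eq.~(\ref{eq cond bipartite driving}) in making $\pi_{x|\alpha}$ independent of $\lambda$), but the underlying argument is the same.
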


\begin{proof}
 The proof was already given in Ref.~\cite{StrasbergEspositoPRE2017}. In short, one rewrites 
 \begin{equation}
  \Sigma(t) - \Sigma_\text{mic}(t) = \beta[\Delta F_\text{mic}(t) - \Delta F(t)]
 \end{equation}
 and shows that for $p_x(0) \in \C A_\pi$ it follows that 
 \begin{equation}
  \begin{split}
   & \beta[\Delta F_\text{mic}(t) - \Delta F(t)]	\\
   & = D[p_x(t)\|\pi_x(\lambda_t)] - D[p_\alpha(t)\|\pi_\alpha(\lambda_t)] \\
   & = \sum_\alpha p_\alpha(t) D[p_{x|\alpha}(t)\|\pi_{x|\alpha}] \ge 0.
  \end{split}
 \end{equation}
 Since $\Sigma_\text{mic}(t)\ge 0$, this implies $\Sigma(t)\ge 0$.  
\end{proof}

Using the theorems of Sec.~\ref{sec time dependent MEs}, we can now connect the appearance of negative 
EP rates to the following properties of the underlying dynamics: 

\begin{thm}\label{thm ent prod bipartite case}
 Let $p_x(0) \in \C A_\pi$ and 
 let $I$ denote the time interval in which the mesodynamics are 1-Markovian and the dynamics is  
 \begin{enumerate}
  \item undriven, or 
  \item driven and lumpable, or 
  \item driven and such that $\C A(t)\subset\C A_\pi$ or $\pi_x(\lambda_t)\in\C A(t)$. 
 \end{enumerate}
 Then, $\dot\Sigma(t) \ge 0$ for all $t\in I$ and all admissible initial states. 
\end{thm}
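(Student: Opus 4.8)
The plan is to recognize this theorem as an assembly of the structural results from Section~\ref{sec time dependent MEs}: since the physical entropy production rate has just been shown to equal the abstract quantity $\dot\Sigma(t) = -\partial_t|_{\lambda_t} D[p_\alpha(t)\|\pi_\alpha(\lambda_t)]$ of Eq.~(\ref{eq ent prod abstract}), the master tool is Theorem~\ref{thm ent prod}. That theorem already delivers $\dot\Sigma(t)\ge0$ on any interval where the mesodynamics are 1-Markovian \emph{and} the marginalized microlevel IFP $\pi_\alpha(\lambda_t)$ is an IFP of the mesodynamics. Because 1-Markovianity is built into the hypotheses of all three cases, the entire proof reduces to verifying the IFP property $V(\lambda_t,t)\bs\pi(\lambda_t)=0$ separately in each case, and then invoking Theorem~\ref{thm ent prod}.

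First I would dispatch the three cases by matching each to the corresponding steady-state theorem. For case~1 (undriven), the assumption $p_x(0)\in\C A_\pi$ forces the fixed conditional microstates to equal $\pi_{x|\alpha}$, hence $\C A(0)\subset\C A_\pi$, which is exactly the hypothesis of Theorem~\ref{thm steady state weak}; that theorem then yields that $\pi_\alpha$ is an IFP of the mesodynamics. For case~2 (driven and lumpable), I would apply Theorem~\ref{thm steady state strong} directly: lumpability guarantees a well-defined rate matrix $\C V(\lambda_t)$ whose IFP is the marginal $\pi_\alpha(\lambda_t)$. For case~3 (driven with $\C A(t)\subset\C A_\pi$ or $\pi_x(\lambda_t)\in\C A(t)$), the two sub-conditions are precisely conditions~1 and~2 of Theorem~\ref{thm steady state general}, which again gives $V(\lambda_t,t)\bs\pi(\lambda_t)=0$. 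In every case the IFP property plus 1-Markovianity feeds into Theorem~\ref{thm ent prod} to close the argument.

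A point requiring care, rather than a deep obstacle, is the bookkeeping that lets me write $\C A_\pi$ instead of $\C A_\pi(\lambda_t)$ in the driven cases. This is where the bipartite driving assumption of Eq.~(\ref{eq cond bipartite driving}) enters: the split $E_{x_\alpha}(\lambda_t)=E_\alpha(\lambda_t)+\tilde E_{x_\alpha}$ makes the conditional equilibrium state $\pi_{x|\alpha}$ independent of $\lambda_t$, so that the preparation class $\C A_\pi(\lambda_t)$ coincides with the fixed $\C A_\pi$ for all times. I would state this identification explicitly before applying Theorems~\ref{thm steady state strong} and~\ref{thm steady state general}, and I would also flag that those two theorems presuppose invertibility of $G_{t,0}$, which is part of the standing assumptions here. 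The genuinely hard content lives in Section~\ref{sec time dependent MEs}, namely establishing when $\pi_\alpha(\lambda_t)$ is an IFP of a generically non-Markovian generator; once those theorems are in hand, the present statement follows by a case distinction plus one application of Theorem~\ref{thm ent prod}.
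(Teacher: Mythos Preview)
Your proposal is correct and matches the paper's own approach: the paper does not give a separate proof but simply states that the theorem follows ``using the theorems of Sec.~\ref{sec time dependent MEs}'', and your case-by-case invocation of Theorems~\ref{thm steady state weak}, \ref{thm steady state strong}, and~\ref{thm steady state general} to secure the IFP property, followed by Theorem~\ref{thm ent prod}, is exactly the intended assembly. One minor correction: Theorem~\ref{thm steady state strong} (the lumpable case) does not actually require invertibility of $G_{t,0}$, so your remark about invertibility applies only to Theorems~\ref{thm steady state weak} and~\ref{thm steady state general}.
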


Hence, as a corollary, if we observe $\dot\Sigma(t) < 0$ for the undriven case, we know that the dynamics 
is non-Markovian [or that the initial state $p_x(0)\notin\C A_\pi$]. For driven dynamics, noticing a negative 
EP rate, is not sufficient to conclude that the dynamics is non-Markovian, but they are clearly not lumpable. 
In the next section we will show that $\dot\Sigma(t) < 0$ also suffices to conlude that TSS does not apply. 

Furthermore, while the above procedure provides a unique way to define a non-equilibrium free energy at the 
mesolevel, it does not fix the definition of the internal energy and entropy at the mesolevel because the 
prescription $F = U-S/\beta$ entails a certain level of arbitrariness. Via the first law $\Delta U = Q + W$ this would 
also imply a certain arbitrariness for the definition of heat~\cite{TalknerHaenggiPRE2016}. However, a reasonable 
definition of $U, S$ and $Q$ can be fixed by demanding that they should coincide with $U_\text{mic}, S_\text{mic}$ 
and $Q_\text{mic}$ in the limit where the microstates are conditionally equilibrated, which is fulfilled in the 
limit of TSS considered in Sec.~\ref{sec time scale separation}. Then, one is naturally lead to the definitions 
\begin{align}
 U(t)	&\equiv	\sum_\alpha \C U_\alpha(\lambda_t) p_\alpha(t), ~~~ \C U_\alpha \equiv \sum_{x_\alpha} E_{x_\alpha}(\lambda_t)\pi_{x|\alpha},	\label{eq int energy mesolevel}	\\
 S(t)	&\equiv	\sum_\alpha \left\{\beta [\C U_\alpha(\lambda_t) - F_\alpha(\lambda_t)] -\ln p_\alpha(t)\right\} p_\alpha(t).	\label{eq entropy mesolevel}
\end{align}
Heat is then defined as $\dot Q(t) = d_t U(t) - \dot W(t)$ and the EP rate can be equivalently expressed as 
$\dot\Sigma(t) = d_tS(t) - \beta\dot Q(t)$. 

We remark that it is not obvious how to relax condition~(\ref{eq cond bipartite driving}) because the 
work~(\ref{eq work bipartite}) can then not be computed from knowledge of the mesostate alone, which was an 
essential ingredient in our derivation.

\subsection{Time-scale separation and Markovian limits}
\label{sec time scale separation}

Although open systems behave non-Markovian in general, it is important to know in which limits the Markovian 
\emph{approximation} is justified. One such limit is TSS, which is an essential assumption in many branches of 
statistical mechanics in order to ensure that the dynamics at the level of the ``relevant'' degrees of freedom is 
Markovian and hence, easily tractable. It is also essential in order to ensure that we can infer from the coarse-grained 
dynamics the \emph{exact} thermodynamics of the underlying microstate dynamics (under reasonable mild conditions), see 
Refs.~\cite{PuglisiEtAlJSM2010, SeifertEPJE2011, EspositoPRE2012, AltanerVollmerPRL2012, BoCelaniJSM2014, 
StrasbergEspositoPRE2017} for research on this topic. Here, we restrict ourselves to highlight the role of TSS within 
our mathematical framework of Sec.~\ref{sec mathematical results}. Furthermore, at the end of this section we discuss 
another class of systems whose dynamics is Markovian albeit TSS does not apply. 

To study TSS, let us decompose the rate matrix as follows: 
\begin{equation}\label{eq rate matrix TSS}
 W_{x_\alpha,y_\beta}(\lambda_t) = \delta_{\alpha\beta} R_{x_\alpha,y_\alpha}(\lambda_t) + (1-\delta_{\alpha\beta})r_{x_\alpha,y_\beta}(\lambda_t).
\end{equation}
Next, we assume that $R_{x_\alpha,y_\alpha}(\lambda_t) \gg r_{x_\alpha,y_\beta}(\lambda_t)$, 
i.e., there is a strong separation of time-scales between the mesodynamics and the microdynamics belonging 
to a certain mesostate. As a consequence the microstates rapidly equilibrate to the conditional steady state 
$\pi_{x|\alpha}(\lambda_t)$ for any mesostate $\alpha$ provided that the microstates in each mesostate are fully 
connected (tacitly assumed in the following). This means that condition~1 of 
Theorem~\ref{thm steady state general} is always fulfilled. By replacing $p_{y|\beta}(t)$ by 
$\pi_{y|\beta}(\lambda_t)$ in Eq.~(\ref{eq effective rate matrix CG}), it is easy to see that the effective rate 
matrix is independent of the initial state and describes a proper Markov process, 
$R[\lambda_t,p_\alpha(0)] = R(\lambda_t)$. Another consequence of TSS is that the thermodynamics associated with 
the mesodynamics are identical to the thermodynamics of the microdynamics. 

Strictly speaking the limit of TSS requires 
$R_{x_\alpha,y_\alpha}(\lambda_t)/r_{x_\alpha,y_\beta}(\lambda_t) \rightarrow\infty$. In practice, however, there will 
be always a finite time $\delta t$ associated with the relaxation of the microstates and TSS means that we assume 
\begin{equation}
 \frac{1}{r_{x_\alpha,y_\beta}(\lambda_t)} \gg \delta t \gg \frac{1}{R_{x_\alpha,y_\alpha}(\lambda_t)}.
\end{equation}
Then, within a time-step $\delta t$ the conditional microstates are almost equilibrated while terms 
of the order $\C O(\delta t^2 r_{x_\alpha,y_\beta})$ are still negligible. The TM in this situation becomes 
\begin{equation}
 \begin{split}
  & T_{t+\delta t,t}(x_\alpha|y_\beta) \approx	\\
  & \delta_{\alpha\beta} \pi_{x|\alpha}(\lambda_t)\left(1-\delta t\sum_{\gamma\neq\alpha} \sum_{z_\gamma} r_{z_\gamma,x_\alpha}(\lambda_t)\right)	\\
  & + \delta t (1-\delta_{\alpha\beta}) \sum_{z_\beta} \pi_{z|\beta}(\lambda_t) r_{x_\alpha,z_\beta}(\lambda_t).
 \end{split}
\end{equation}
The first term describes the probability for a transition within two microstates of the same mesostate: to lowest 
order this is simply given by the conditional steady state minus a small correction term of $\C O(\delta t)$, which 
takes into account the possibility that one leaves the given mesostate to another mesostate. The second term gives the 
probability to reach a microstate lying in a different mesostate, which is given by the sum of all possible rates which 
connect to this microstate from the given mesostate multiplied by the respective conditional steady state probability. 
One immediately checks normalization of $T_{t+\delta t,t}(x_\alpha|y_\beta)$ and positivity follows by assuming 
that $r_{z_\gamma,x_\alpha}(\lambda_t)\delta t \ll 1$. Furthermore, also the condition~(\ref{eq cond lumpability}) of 
lumpability is fulfilled. Indeed, we can even confirm the stronger property 
\begin{equation}
 T_{t+\delta t,t}(x_\alpha|y_\beta) = T_{t+\delta t,t}(x_\alpha|y'_\beta)
\end{equation}
for all $y'_\beta\neq y_\beta$. Hence, in the idealized limit yielding to an instantaneous equilibration of the 
conditional microstates, the TMs do not even depend on the particular microstate anymore. We conclude: 

\begin{thm}\label{thm TSS}
 If TSS applies, then the process is lumpable and $p_x(t) \in \C A_\pi$ for all $t$. 
 Conversely, if $\dot\Sigma(t) < 0$, then TSS does not apply. 
\end{thm}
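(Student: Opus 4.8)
The plan is to prove the two halves separately, since essentially all of the computational content has already been assembled in the paragraphs preceding the statement; what remains is to invoke the correct earlier results in the correct order. The forward implication (TSS $\Rightarrow$ lumpable and $p_x(t)\in\C A_\pi$) I would read off the explicit TSS transition matrix computed just above, and the converse I would obtain as a contrapositive by chaining Theorem~\ref{thm steady state strong} with Theorem~\ref{thm ent prod}.

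For the forward direction I would first note that the TSS transition matrix displayed above satisfies the strong identity $T_{t+\delta t,t}(x_\alpha|y_\beta)=T_{t+\delta t,t}(x_\alpha|y'_\beta)$, which in particular implies the lumpability condition~(\ref{eq cond lumpability}). By Theorem~\ref{thm strong lumpability} each infinitesimal step is therefore lumpable, with a well-defined mesoscopic transition matrix independent of the originating microstate. To promote this from a single step to the whole interval $I$ under a genuine drive, I would pass to the rate level and invoke Corollary~\ref{thm cor lumpable ME}: in the idealized limit the generator of Eq.~(\ref{eq effective rate matrix CG}), obtained by replacing $p_{y|\beta}(t)$ with $\pi_{y|\beta}(\lambda_t)$, reduces to a genuine rate matrix $R(\lambda_t)$ that no longer depends on the initial mesostate, i.e.\ condition~(\ref{eq cond lumpability ME}) holds for every $t\in I$. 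For the claim $p_x(t)\in\C A_\pi$, I would show that $\C A_\pi$ [Eq.~(\ref{eq stationary preparation class})] is invariant under the TSS propagator: feeding any state $p_\beta\pi_{y|\beta}$ into the displayed matrix and using $\sum_{y_\beta}\pi_{y|\beta}=1$ returns a state whose conditional microstate distribution is again pinned at $\pi_{x|\alpha}$, only the mesoscopic weights $p_\alpha$ being reshuffled. Combined with the fact that the fast intra-mesostate relaxation drives an arbitrary initial state into $\C A_\pi$ essentially instantaneously, this gives $p_x(t)\in\C A_\pi$ for all $t$; equivalently, this is exactly condition~1 of Theorem~\ref{thm steady state general}.

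For the converse I would argue by contraposition. Assume TSS applies. The forward direction has already established that the mesodynamics are lumpable and that $p_x(0)\in\C A_\pi$. Lumpability means the lumped process is Markovian in the full sense of Eq.~(\ref{eq cond Markovianity}), and hence in particular 1-Markovian, while by Theorem~\ref{thm steady state strong} the marginal microlevel fixed point $\pi_\alpha(\lambda_t)$ is an IFP of the mesodynamics. Both hypotheses of Theorem~\ref{thm ent prod} are thus met throughout $I$, so $\dot\Sigma(t)\ge0$ there (this is the content of Theorem~\ref{thm ent prod bipartite case}, cases~1 and~2). Contraposing, $\dot\Sigma(t)<0$ rules out TSS.

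I expect the only genuinely delicate point to be the claim $p_x(t)\in\C A_\pi$ \emph{exactly}: the transition matrix written above equilibrates the conditional microstates only up to corrections of order $\C O(\delta t\, r)$, so at finite time-scale ratio the output state sits in $\C A_\pi$ merely to leading order. To make the statement rigorous I would work in the strict limit $R_{x_\alpha,y_\alpha}/r_{x_\alpha,y_\beta}\to\infty$, where intra-mesostate relaxation is instantaneous and the conditional distribution is restored exactly to $\pi_{x|\alpha}$ after every step; the residual $\C O(\delta t)$ terms then affect only the $p_\alpha$ and $\C A_\pi$ becomes strictly invariant. A secondary point requiring care is lifting lumpability from a single infinitesimal step to the full driven interval when the rates carry a genuine $\lambda_t$-dependence, which is why I route that part of the argument through the rate-level Corollary~\ref{thm cor lumpable ME} rather than through a product of finite transition matrices.
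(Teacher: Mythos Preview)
Your proposal is correct and mirrors the paper's own argument, which is laid out in the paragraphs immediately preceding the theorem rather than in a separate proof environment: the explicit TSS transition matrix is shown to satisfy the strong identity $T_{t+\delta t,t}(x_\alpha|y_\beta)=T_{t+\delta t,t}(x_\alpha|y'_\beta)$ (hence lumpability), the conditional equilibration gives $p_x(t)\in\C A_\pi$, and the converse is the contrapositive obtained by chaining lumpability through Theorem~\ref{thm steady state strong} and Theorem~\ref{thm ent prod}. Your explicit caveat that $p_x(t)\in\C A_\pi$ holds exactly only in the strict limit $R/r\to\infty$ is a point the paper also acknowledges but leaves implicit.
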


It was shown in Ref.~\cite{EspositoPRE2012} that $\dot\Sigma(t) = \dot\Sigma_\text{mic}(t)$ in the limit of TSS. 
If only the slightly weaker condition of lumpabibility is fulfilled, then it is not known whether 
$\dot\Sigma(t) = \dot\Sigma_\text{mic}(t)$ still holds. 

While TSS is an important limit, the mesodynamics can be also Markovian without the assumption of TSS. The following 
theorem demonstrates this explicitly: 

\begin{thm}
 If there is a partition $\boldsymbol\chi$ such that the rate matrix can be written as 
 \begin{equation}\label{eq rate matrix decomposition lumpable}
  W_{x_\alpha,y_\beta}(\lambda_t) = \delta_{\alpha\beta} R_{x_\alpha,y_\alpha}(\lambda_t) + (1-\delta_{\alpha\beta})V_{\alpha,\beta}(\lambda_t),
 \end{equation}
 then the process is lumpable independent of any TSS argument. Moreoever, the IFP of the lumped process is 
 $\pi_\alpha(\lambda_t) = \sum_{x_\alpha} \pi_{x_\alpha}(\lambda_t)$ and hence, $\dot\Sigma(t) \ge 0$ always. 
\end{thm}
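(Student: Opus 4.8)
The plan is to establish lumpability via the necessary-and-sufficient criterion of Corollary~\ref{thm cor lumpable ME} and then let the earlier structural theorems do the remaining work. Concretely, I would verify that the block-column sum $\sum_{x_\alpha} W_{x_\alpha,y_\beta}(\lambda_t)$ appearing in Eq.~(\ref{eq cond lumpability ME}) is independent of the representative microstate $y_\beta\in\chi_\beta$ for every pair of mesostates $(\alpha,\beta)$.

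First I would treat the off-diagonal case $\alpha\neq\beta$. Here the decomposition~(\ref{eq rate matrix decomposition lumpable}) gives $W_{x_\alpha,y_\beta}(\lambda_t) = V_{\alpha,\beta}(\lambda_t)$, which does not depend on the specific microstates at all. Summing over $x_\alpha$ yields $\sum_{x_\alpha} W_{x_\alpha,y_\beta}(\lambda_t) = n_\alpha V_{\alpha,\beta}(\lambda_t)$, with $n_\alpha$ the number of microstates in $\chi_\alpha$; this is manifestly independent of the choice of $y_\beta$, so the criterion holds trivially when $\alpha\neq\beta$.

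The one genuinely substantive step is the diagonal case $\alpha=\beta$, where the block-column sum involves the intra-mesostate rates $\sum_{x_\beta} R_{x_\beta,y_\beta}(\lambda_t)$, and there is no \emph{a priori} reason for these to be $y_\beta$-independent. The key observation is to invoke the column-normalization of the full rate matrix, $\sum_x W_{x,y_\beta}(\lambda_t)=0$, stated below Eq.~(\ref{eq ME general}). Splitting this sum into the term with $x\in\chi_\beta$ and the terms with $x\in\chi_\gamma$, $\gamma\neq\beta$, and inserting the off-diagonal result gives $\sum_{x_\beta} R_{x_\beta,y_\beta}(\lambda_t) = -\sum_{\gamma\neq\beta} n_\gamma V_{\gamma,\beta}(\lambda_t)$. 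The right-hand side depends only on the mesostate $\beta$, hence is the same for all $y_\beta\in\chi_\beta$, which is exactly the missing condition. This closes the verification of Eq.~(\ref{eq cond lumpability ME}), so by Corollary~\ref{thm cor lumpable ME} the process is lumpable with no appeal to any time-scale-separation argument.

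With lumpability in hand the rest is immediate. By Theorem~\ref{thm steady state strong} the IFP of the lumped dynamics is the marginal $\pi_\alpha(\lambda_t)=\sum_{x_\alpha}\pi_{x_\alpha}(\lambda_t)$ of Eq.~(\ref{eq marginalized IFP}). Moreover, lumpability in the sense of Definition~\ref{def strong lumpability} entails full Markovianity~(\ref{eq cond Markovianity}) of the lumped chain, which in particular implies 1-Markovianity (Definition~\ref{def 1 Markovian}). Since $\pi_\alpha(\lambda_t)$ is an IFP and the mesodynamics are 1-Markovian for all $t$, Theorem~\ref{thm ent prod} yields $\dot\Sigma(t)\ge0$ for all $t$, completing the argument. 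The only nontrivial point, and the one I would present most carefully, is the normalization trick that rescues the diagonal block; everything else is bookkeeping on top of the already-established theorems.
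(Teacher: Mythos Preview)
Your proposal is correct and follows essentially the same approach as the paper: the normalization identity $\sum_{x_\beta} R_{x_\beta,y_\beta}(\lambda_t) = -\sum_{\gamma\neq\beta} n_\gamma V_{\gamma,\beta}(\lambda_t)$ is precisely the key observation the paper uses, followed by the same appeals to Corollary~\ref{thm cor lumpable ME}, Theorem~\ref{thm steady state strong}, and Theorem~\ref{thm ent prod}. Your write-up is in fact slightly more explicit than the paper's in separating the diagonal and off-diagonal cases and in spelling out the lumpable $\Rightarrow$ 1-Markovian step.
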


\begin{proof}
 We first of all observe that from 
 \begin{equation}
  \begin{split}
   0	&=	\sum_{\alpha,x_\alpha} W_{x_\alpha,y_\beta}(\lambda_t)	\\
	&=	\sum_{x_\beta} R_{x_\beta,y_\beta}(\lambda_t) + \sum_{\alpha\neq\beta}\sum_{x_\alpha} V_{\alpha,\beta}(\lambda_t),
  \end{split}
 \end{equation}
 it follows that 
 $\sum_{x_\alpha} R_{x_\alpha,y_\alpha}(\lambda_t) = -\sum_{\beta\neq\alpha} \#\chi_\beta V_{\beta,\alpha}(\lambda_t)$ 
 for any $\alpha$ (where $\#\chi_\alpha$ denotes the cardinality of the set of microstates belonging to mesostate 
 $\alpha$). By using this property, it becomes straightforward to check that Eq.~(\ref{eq cond lumpability ME}) is 
 fulfilled and hence, the coarse-grained process is Markovian. Due to Theorem~\ref{thm steady state strong} we can 
 also confirm that $\pi_\alpha(\lambda_t)$ is the IFP and from Theorem~\ref{thm ent prod} it follows that 
 $\dot\Sigma(t) \ge 0$. 
\end{proof}

Compared to the decomposition~(\ref{eq rate matrix TSS}) we here did not need to assume any particular scaling of the 
rates, but it was important that the transitions between different mesostates are independent of the microstate. 
In fact, for many mesoscopic systems the details of the microstates might not matter, for instance, the Brownian 
motion of a suspended particle is quite independent from the spin degrees of freedom of its electrons unless strong 
magnetic interactions are present. 
Notice that the ME at the mesolevel resulting from Eq.~(\ref{eq rate matrix decomposition lumpable}) reads 
\begin{equation}
 \begin{split}\label{eq help 1}
  & \frac{\partial}{\partial t} p_\alpha(t) = \\
  &	\sum_{\beta\neq\alpha}\left[\#\chi_\alpha V_{\alpha,\beta}(\lambda_t) p_\beta(t) - \#\chi_\beta V_{\beta,\alpha}(\lambda_t) p_\alpha(t)\right].
 \end{split}
\end{equation}
It shows that the local detailed balance ratio~(\ref{eq local detailed balance}) of the effective rates at the mesolevel 
is shifted by an entropic contribution due to the degeneracy factor $\#\chi_\alpha$; see 
Sec.~\ref{sec strong lumpability without TSS} or Ref.~\cite{HerpichThingnaEspositoPRX2018} for explicit examples.

\section{Classical system-bath theory}
\label{sec classical system bath theory}

In this section we consider the standard paradigm of classical open system theory: a system in contact with a bath 
described by Hamiltonian dynamics as opposed to the rate ME dynamics from 
Sec.~\ref{sec coarse-grained dissipative dynamics}. The microstates (system and bath) therefore describe an isolated 
system and the goal is to find a consistent thermodynamic framework for the mesostate (the system only). The global 
Hamiltonian reads 
\begin{equation}\label{eq Hamiltonian SB}
 H_\text{tot}(\lambda_t) = H(\lambda_t) + V + H_B,
\end{equation}
where the system, bath and interaction Hamiltonian $H(\lambda_t)$, $H_B$ and $V$ are arbitrary. We denote a phase 
space point of the system by $x_S$ and of the bath by $x_B$. Thus, to be very precise, we should write 
$H(x_S;\lambda_t)$, $H_B(x_B)$ and $V(x_S,x_B)$, but we will drop the dependency on $x_S$ and $x_B$ for notational 
simplicity. Deriving the laws of thermodynamics for an arbitrary Hamiltonian~(\ref{eq Hamiltonian SB}) has attracted 
much interest recently~\cite{JarzynskiJSM2004, GelinThossPRE2009, SeifertPRL2016, 
TalknerHaenggiPRE2016, JarzynskiPRX2017, MillerAndersPRE2017, StrasbergEspositoPRE2017, AurellEnt2017} (note that many 
investigations in the quantum domain also have a direct analogue in the classical 
regime~\cite{EspositoLindenbergVandenBroeckNJP2010, MartinezPazPRl2013, PucciEspositoPelitiJSM2013, StrasbergEtAlNJP2016, 
FreitasPazPRE2017, PerarnauLlobetEtAlPRL2018, HsiangEtAlPRE2018}). It will turn out that our basic definitions are 
identical to the ones suggested by Seifert~\cite{SeifertPRL2016}. We here re-derive them in a different way and in 
addition, we focus on the EP \emph{rate} and its relation to non-Markovian dynamics. 

In order to be able to define the EP rate~(\ref{eq ent prod abstract}), we first of all need to know the exact equilibrium 
state of the system, which is obtained from coarse-graining the global equilibrium state 
$\pi_\text{tot}(\lambda_t) = e^{-\beta H_\text{tot}(\lambda_t)}/\C Z_\text{tot}(\lambda_t)$ with 
$\C Z_\text{tot}(\lambda_t) = \int dx_{SB} e^{-\beta H_\text{tot}(\lambda_t)}$. For this purpose we introduce the 
Hamiltonian of mean force $H^*(\lambda_t)$~\cite{KirkwoodJCP1935}. It is defined through the two relations 
\begin{equation}
 \begin{split}\label{eq HMF}
  \pi_S(\lambda_t)	&\equiv	\frac{e^{-\beta H^*(\lambda_t)}}{\C Z^*(\lambda_t)} = \int dx_B \frac{e^{-\beta H_\text{tot}(\lambda_t)}}{\C Z_\text{tot}(\lambda_t)},	\\
  \C Z^*(\lambda_t)	&\equiv	\frac{\C Z_\text{tot}(\lambda_t)}{\C Z_B},
 \end{split}
\end{equation}
where $\C Z_B = \int dx_B e^{-\beta H_B}$ is the equilibrium partition function of the unperturbed bath. We emphasize 
that the equilibrium state of the system is not a Gibbs state with respect to $H(\lambda_t)$ due to the strong coupling. 
More explicitly, the Hamiltonian of mean force reads
\begin{equation}
 H^*(\lambda_t) = H(\lambda_t) - \frac{1}{\beta}\ln\lr{e^{-\beta V}}_B^\text{eq},
\end{equation}
where $\lr{\dots}_B^\text{eq}$ denotes an average with respect to the unperturbed equilibrium state of the bath 
$e^{-\beta H_B}/\C Z_B$. Note that $H^*(\lambda_t)$ also depends on the inverse temperature $\beta$ of the bath. 

We can now use Eq.~(\ref{eq ent prod abstract}) to define the EP rate, which reads in the notation of this section 
\begin{equation}\label{eq ent prod rate Seifert}
 \dot\Sigma(t) = -\left.\frac{\partial}{\partial t}\right|_{\lambda_t} D[\rho_S(t)\|\pi_S(\lambda_t)],
\end{equation}
where $\rho_S(t) = \rho_S(x_S;t)$ denotes the state of the system at time $t$, which can be arbitrarily far from 
equilibrium. Note that we now use the differential relative entropy 
$D[\rho_S(t)\|\pi_S(\lambda_t)] = \int dx_S \rho_S(x_S;t)\ln\frac{\rho_S(x_S;t)}{\pi_S(x_S;\lambda_t)}$. 
Using Eq.~(\ref{eq HMF}), we can rewrite Eq.~(\ref{eq ent prod rate Seifert}) as 
\begin{equation}
 \dot\Sigma(t) = \frac{d}{dt}S[\rho_S(t)] - \beta \int dx_S H^*(\lambda_t) \frac{d}{dt}\rho_S(t)
\end{equation}
with $S[\rho_S(t)] \equiv -\int dx_S \rho(x_S;t)\ln\rho(x_S;t)$. The second term can be cast into the form 
\begin{equation}
 \int dx_S H^*(\lambda_t) \frac{d}{dt}\rho_S(t) = \frac{d}{dt}\lr{H^*(\lambda_t)} - \lr{\frac{dH^*(\lambda_t)}{dt}},
\end{equation}
where $\lr{\dots}$ denotes a phase space average with respect to $\rho_S(t)$. After realizing that 
$d_t H^*(\lambda_t) = d_t H(\lambda_t)$, we see that the last term coincides with the rate of work done on the system 
\begin{equation}\label{eq work rate Hamiltonian}
 \dot W(t) = \int dx_S \frac{dH(\lambda_t)}{dt}\rho_S(t).
\end{equation}
Using 
\begin{equation}
 \begin{split}
  \int dx_S \frac{dH(\lambda_t)}{dt}\rho_S(t)	&=	\int dx_{SB} \frac{dH_\text{tot}(\lambda_t)}{dt}\rho_\text{tot}(t)	\\
						&=	\int dx_{SB} \frac{d}{dt}[H_\text{tot}(\lambda_t)\rho_\text{tot}(t)],
 \end{split}
\end{equation}
this can be integrated to  
\begin{align}
 W(t)	&=	\int_0^t ds \lr{\frac{d H(\lambda_s)}{ds}}	\label{eq def work}	\\
	&=	\int dx_{SB} \left[H_\text{tot}(\lambda_t)\rho_\text{tot}(t) - H_\text{tot}(\lambda_0)\rho_\text{tot}(0)\right],	\nonumber
\end{align}
showing that the work done on the system is given by the total energetic change of the composite system and 
environment. The EP rate can then be expressed as 
\begin{equation}
 \dot\Sigma(t) = \beta\left[\dot W(t) - \frac{d}{dt}\lr{H^*(\lambda_t) + \frac{1}{\beta}\ln\rho_S(t)}\right].
\end{equation}
This motivates again the following definition of the non-equilibrium free energy 
[cf. Eq.~(\ref{eq free energy mesolevel})] 
\begin{equation}
 F(t) \equiv \lr{H^*(\lambda_t) + \frac{1}{\beta}\ln\rho_S(t)}	\label{eq free energy HMF}
\end{equation}
such that $\dot\Sigma(t) = \beta[\dot W(t) - d_t F(t)]$. 

For a useful thermodynamic framework, it now remains to show that the second law as known from phenomenological 
non-equilibrium thermodynamics holds: 
\begin{equation}\label{eq ent prod Seifert}
 \Sigma(t) \equiv \beta[W(t) - \Delta F(t)] \ge 0.
\end{equation}
For this purpose we assume as in the previous section that the initial state $\rho_S(0)$ belongs to the set 
$\C A_\pi$, see Eq.~(\ref{eq stationary preparation class}). The conditional equilibrium state of the 
bath is given by 
\begin{equation}
 \pi_{B|S} \equiv \frac{e^{-\beta(V + H_B)}}{\int dx_B e^{-\beta(V + H_B)}} = \frac{e^{-\beta[H_\text{tot}(\lambda_0) - H(\lambda_0)]}}{\C Z_B}.
\end{equation}
To prove the positivity of the EP, we refer to Ref.~\cite{SeifertPRL2016}, where it was deduced from an integral 
fluctuation theorem, or alternatively, the positivity becomes evident by noting the relation 
$\Sigma(t) = D[\rho_{SB}(t)\|\rho_S(t)\pi_{B|S}]$ and by recalling that the relative entropy is always 
positive~\cite{MillerAndersPRE2017, StrasbergEspositoPRE2017}. It is important to realize, however, that 
$\Sigma(t) \ge 0$ relies crucially on the choice of initial state. If $\rho_\text{tot}(0)\notin\C A_\pi$, we have 
\begin{equation}
 \begin{split}\label{eq ent prod Seifert arb IS}
  & \beta[W(t)-\Delta F(t)] = 	\\
  & D[\rho_{SB}(t)\|\rho_S(t)\pi_{B|S}] - D[\rho_{SB}(0)\|\rho_S(0)\pi_{B|S}],
 \end{split}
\end{equation}
which can be negative. 

After we have established that $\Sigma(t) = \int_0^t ds\dot\Sigma(s) \ge 0$ with the EP rate $\dot\Sigma(t)$ from 
Eq.~(\ref{eq ent prod abstract}), we can use the insights from Sec.~\ref{sec mathematical results} and 
Appendix~\ref{sec app Hamiltonian dynamics}. Then, we can immediately confirm the validity of the following theorem: 

\begin{thm}\label{thm ent prod HMF}
 Let $\rho_\text{tot}(0)\in\C A_\pi$ and let 
 $I$ denote the time interval in which the system dynamics is 1-Markovian and the process is 
 \begin{enumerate}
  \item undriven, or 
  \item driven and lumpable, or 
  \item driven and $\C A(t)\subset\C A_\pi$ or $\pi_\text{tot}(\lambda_t)\in\C A(t)$. 
 \end{enumerate}
 Then, $\dot\Sigma(t) \ge 0$ for all $t\in I$ and all admissible initial states. 
\end{thm}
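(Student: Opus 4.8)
The plan is to reduce the statement entirely to the abstract machinery of Sec.~\ref{sec time dependent MEs}, exploiting the fact that the Hamiltonian problem is nothing but a coarse-graining in which the microstates are the phase-space points $(x_S,x_B)$ of the composite and the mesostates are the system points $x_S$. Under this dictionary the microlevel IFP is the global Gibbs state $\pi_\text{tot}(\lambda_t) = e^{-\beta H_\text{tot}(\lambda_t)}/\C Z_\text{tot}(\lambda_t)$, its marginal is the Hamiltonian-of-mean-force state $\pi_S(\lambda_t)$ of Eq.~(\ref{eq HMF}), and the preparation class $\C A_\pi$ consists of the conditionally equilibrated states $\rho_{SB} = \rho_S\,\pi_{B|S}$. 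With these identifications, Theorem~\ref{thm ent prod} is the only ingredient that actually produces $\dot\Sigma(t)\ge0$: it guarantees positivity as soon as (a) the reduced dynamics are 1-Markovian on $I$ and (b) $\pi_S(\lambda_t)$ is an IFP of the reduced generator $V(\lambda_t,t)$ of Eq.~(\ref{eq meso generator ME}). Since 1-Markovianity is assumed on $I$ by hypothesis, the whole proof collapses to verifying condition~(b) in each of the three cases.

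For that verification I would simply quote the corresponding IFP-characterization theorems, exactly as was done for the rate-ME version in Theorem~\ref{thm ent prod bipartite case}. In the undriven case, $\rho_\text{tot}(0)\in\C A_\pi$ fixes the conditional bath state to equilibrium, so the \emph{entire} admissible class $\C A(0)$ lies in $\C A_\pi$ and Theorem~\ref{thm steady state weak} immediately yields that $\pi_S$ is an IFP; thus $\dot\Sigma(t)\ge0$ follows even when the reduced evolution is genuinely non-Markovian outside the 1-Markovian window. In the driven-and-lumpable case, Theorem~\ref{thm steady state strong} shows that the IFP of the mesodynamics coincides with the marginal of the microlevel IFP, i.e.\ with $\pi_S(\lambda_t)$. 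In the remaining driven case, the two alternatives $\C A(t)\subset\C A_\pi$ and $\pi_\text{tot}(\lambda_t)\in\C A(t)$ are precisely conditions~1 and~2 of Theorem~\ref{thm steady state general}, which again supplies that $\pi_S(\lambda_t)$ is an IFP. In every branch, combining condition~(b) with Theorem~\ref{thm ent prod} closes the argument. I would also record, as a consistency check, that the integrated bound $\Sigma(t)\ge0$ for $\rho_\text{tot}(0)\in\C A_\pi$ was already established above through $\Sigma(t)=D[\rho_{SB}(t)\|\rho_S(t)\pi_{B|S}]$; this holds for \emph{all} protocols and is logically independent of the rate-level sign statement proved here.

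The genuine obstacle is not the logic above but the transfer of the Sec.~\ref{sec time dependent MEs} theorems from a finite, discrete state space to the continuous phase space of a classical many-body system: the sums over microstates become phase-space integrals, and the existence and invertibility of the propagator $G_{t,0}$, the construction of the time-local generator $V(\lambda_t,t)$, and the very notion of an IFP must all be made sense of in that setting. Rather than leaning on a grid-discretization heuristic, I would carry out the argument directly in the Hamiltonian language, which is the purpose of Appendix~\ref{sec app Hamiltonian dynamics}; once the continuous analogues of Theorems~\ref{thm steady state strong}, \ref{thm steady state weak} and~\ref{thm steady state general} are in place, the proof is a line-by-line copy of the reasoning for Theorem~\ref{thm ent prod bipartite case}. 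A secondary subtlety worth flagging is that, as in the discrete case, 1-Markovianity alone is insufficient: condition~(b) is independent and must be supplied separately by one of the three cases, so the theorem should be read as the intersection of the ``1-Markovian'' and ``$\pi_S$ is an IFP'' regions in the spirit of Fig.~\ref{fig Venn}.
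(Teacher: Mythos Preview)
Your proposal is correct and follows essentially the same route as the paper: the paper itself does not give an explicit proof of Theorem~\ref{thm ent prod HMF} but simply states that ``we can use the insights from Sec.~\ref{sec mathematical results} and Appendix~\ref{sec app Hamiltonian dynamics}'' and then notes that the theorem ``is structurally identical to Theorem~\ref{thm ent prod bipartite case}''. Your reduction to Theorem~\ref{thm ent prod} together with the case-by-case invocation of Theorems~\ref{thm steady state weak}, \ref{thm steady state strong} and~\ref{thm steady state general} (transferred to the continuous setting via Appendix~\ref{sec app Hamiltonian dynamics}) is exactly this, spelled out in more detail than the paper itself provides.
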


We can therefore conclude for this setup that $\dot\Sigma(t) < 0$ directly implies non-Markovian dynamics for 
\emph{undriven} systems. For \emph{driven} systems this relation ceases to exist, but similar to Theorem~\ref{thm TSS} 
$\dot\Sigma(t) < 0$ implies that the two assumptions of 1-Markovian dynamics and a bath in a conditional equilibrium state 
cannot be simultaneously fulfilled. Two further remarks are in order: 

First, although it is possible to extend the framework of Ref.~\cite{SeifertPRL2016} to the situation of a time-dependent 
coupling Hamiltonian $V(\lambda_t)$ (see Ref.~\cite{StrasbergEspositoPRE2017}), Theorem~\ref{thm ent prod HMF} then 
ceases to hold because the work~(\ref{eq def work}) cannot anymore be computed from knowledge of the system state 
alone [also compare with Eq.~(\ref{eq work bipartite})]. 

Second, we remark that Theorem~\ref{thm ent prod HMF} is structurally identical to Theorem~\ref{thm ent prod bipartite case}. 
This shows the internal consisteny of our approach: since it is in principle possible to derive a ME from underlying 
Hamiltonian dynamics, we should find parallel results at each level of the description. This structural similarity was also 
found in Ref.~\cite{StrasbergEspositoPRE2017}. 

Also in parallel to Sec.~\ref{sec coarse-grained dissipative dynamics}, we remark that the splitting 
of the free energy $F = U-S/\beta$ does not allow to unambiguously define an internal energy and 
entropy. Hence, also the definition of heat via the first law $\Delta U = Q + W$ becomes 
ambiguous~\cite{TalknerHaenggiPRE2016}. However, the following definitions are appealing 
\begin{align}
 U(t)	&\equiv	\int dx_S \rho_S(t) \left[H^*(\lambda_t) + \beta \partial_\beta H^*(\lambda_t)\right],	\label{eq int energy HMF}	\\
 S(t)	&\equiv	\int dx_S \rho_S(t) \left[-\ln\rho_S(t) + \beta^2 \partial_\beta H^*(\lambda_t)\right],	\label{eq sys entropy HMF}
\end{align}
which can be shown to coincide (apart from a time-independent additive constant) with the global energy and entropy in 
equilibrium~\cite{SeifertPRL2016}. Further support for these definitions was given in 
Ref.~\cite{StrasbergEspositoPRE2017}, see also the discussion in Ref.~\cite{JarzynskiPRX2017}. 

Finally, to gain further insights into our approach, it is useful to reformulate it in terms of expressions which were 
previously derived for classical Hamiltonian dynamics~\cite{KawaiParrondoVandenBroeckPRL2007, 
VaikuntanathanJarzynskiEPL2009, HasegawaEtAlPLA2010, TakaraHasegawaDriebePLA2010, EspositoVandenBroeckEPL2011}. 
It follows from straightforward algebra that 
\begin{equation}\label{eq help 2}
 D[\rho_\text{tot}(t)\|\pi_\text{tot}(\lambda_t)] = \beta[F_\text{tot}(t) - \C F_\text{tot}(\lambda_t)],
\end{equation}
where $F_\text{tot}(t) = \lr{H_\text{tot}(\lambda_t)} + \lr{\ln\rho_\text{tot}(t)}/\beta$ is the non-equilibrium free 
energy associated to the global state $\rho_\text{tot}(t)$ and $\C F_\text{tot}(\lambda_t)$ is the equilibrium free 
energy associated to the thermal state $\pi_\text{tot}(\lambda_t)$. Due to Eq.~(\ref{eq help 2}) we can write the 
global EP rate as 
\begin{equation}
 \begin{split}\label{eq irreversible work}
  \dot\Sigma_\text{tot}(t)	&=	-\left.\frac{\partial}{\partial t}\right|_{\lambda_t} D[p_x(t)\|\pi_x(\lambda_t)]	\\
				&=	\beta\left[\dot W_\text{irr}(t) - \frac{d}{dt} D[p_x(t)\|\pi_x(\lambda_t)]\right] = 0,
 \end{split}
\end{equation}
which is zero for Hamiltonian dynamics. Here, $\dot W_\text{irr}(t) \equiv \dot W - d_t \C F_\text{tot}(\lambda_t)$ is 
the irreversible work and thus, Eq.~(\ref{eq irreversible work}) recovers (parts of) the earlier results from 
Refs.~\cite{KawaiParrondoVandenBroeckPRL2007, VaikuntanathanJarzynskiEPL2009, HasegawaEtAlPLA2010, 
TakaraHasegawaDriebePLA2010, EspositoVandenBroeckEPL2011}. Especially for an initially equilibrated microstate we 
immediately get the well-known dissipation inequality $W_\text{irr}(t) = D[p_x(t)\|\pi_x(\lambda_0)] \ge 0$. Now, from 
our findings above we see that we obtain an identical structure at the coarse-grained level: by using the 
identity~(\ref{eq help 2}) for the system, $D[\rho_S(t)\|\pi_S(\lambda_t)] = \beta[F(t) - \C F(\lambda_t)]$, we obtain 
\begin{equation}
 \dot\Sigma(t) = \beta\left[\dot W_\text{irr}(t) - \frac{d}{dt} D[p_\alpha(t)\|\pi_\alpha(\lambda_t)]\right].
\end{equation}
This expression can in general be negative and the conditions which ensure non-negativity are stated in 
Theorem~\ref{thm ent prod HMF}.

\section{Strong coupling thermodynamics of quantum systems}
\label{sec thermo quantum}

So far we have only treated classical systems, but the question of how to obtain a meaningful thermodynamic description 
for quantum systems beyond the weak coupling and Markovian approximation is of equal importance. Whereas in 
Sec.~\ref{sec classical system bath theory} we could resort to an already well-developed framework, no general finite-time 
thermodynamic description for a driven quantum system immersed in an arbitrary single heat bath has been presented yet. 
Based on results obtained at equilibrium~\cite{GelinThossPRE2009, HsiangHuEntropy2018}, we first of all develop in 
Sec.~\ref{sec integrated description} the quantum extension of the framework introduced in Ref.~\cite{SeifertPRL2016}. 
Afterwards, in Sec.~\ref{sec breakdown} we prove that the relation worked out between non-Markovianity and 
a negative EP rate for classical systems cannot be established for quantum systems. The latter point is further 
studied in Sec.~\ref{sec quantum example} for the commonly used assumption that the system and bath are initially 
decorrelated; an assumption which is not true for the class of initial states considered in this section.

\subsection{Integrated description}
\label{sec integrated description}

As in Sec.~\ref{sec classical system bath theory} our starting point is a time-dependent system-bath Hamiltonian 
of the form $\hat H_\text{tot}(\lambda_t) = \hat H(\lambda_t) + \hat V + \hat H_B$, where we used a hat to explicitly 
denote operators. The Hamiltonian of mean force in the quantum case is formally given by 
\begin{equation}
 \hat H^*(\lambda_t) = -\frac{1}{\beta}\ln\frac{\mbox{tr}_B\{e^{-\beta[\hat H(\lambda_t) + \hat V + \hat H_B]}\}}{Z_B}
\end{equation}
and it shares the same meaning as in the classical case, cf.~Eq.~(\ref{eq HMF}): it describes the exact reduced state of 
the system if the system-bath composite is in a global equilibrium state. Motivated by equilibrium considerations and by 
Sec.~\ref{sec classical system bath theory}, we define the three key thermodynamic quantities internal energy, system 
entropy and free energy for an arbitrary system state $\hat\rho_S(t)$ as follows: 
\begin{align}
 U(t)	&\equiv	\mbox{tr}_S\left\{\hat\rho_S(t)\left[\hat H^*(\lambda_t) + \beta\partial_\beta\hat H^*(\lambda_t)\right]\right\}, \label{eq def U quantum}	\\
 S(t)	&\equiv	\mbox{tr}_S\left\{\hat\rho_S(t)\left[-\ln \hat\rho_S(t) + \beta^2\partial_\beta\hat H^*(\lambda_t)\right]\right\}, \label{eq def S quantum}	\\
 F(t)	&\equiv	\mbox{tr}_S\left\{\hat\rho_S(t)\left[\hat H^*(\lambda_t) + \frac{1}{\beta}\ln \hat\rho_S(t)\right]\right\}. \label{eq def F quantum}
\end{align}
Note that all quantities are state functions. Also the definition of work is formally identical to 
Sec.~\ref{sec classical system bath theory}, Eq.~(\ref{eq def work}), 
\begin{align}
 W(t)	&=	\int_0^t ds \mbox{tr}_S\left\{\frac{d\hat H(\lambda_s)}{ds}\hat \rho_S(s)\right\}	\label{eq def work quantum}	\\
	&=	\mbox{tr}_{SB}\{\hat\rho_\text{tot}(t)\hat H_\text{tot}(\lambda_t)\} - \mbox{tr}_{SB}\{\hat\rho_\text{tot}(0)\hat H_\text{tot}(\lambda_0)\},	\nonumber
\end{align}
and the heat flux is again fixed by the first law $Q(t) = \Delta U(t) - W(t)$. 

Equipped with these definitions, we define the EP 
\begin{equation}\label{eq 2nd law quantum}
 \Sigma(t) \equiv \beta[W(t) - \Delta F(t)]
\end{equation}
as usual and ask when can we ensure its positivity? Again, in complete analogy to Eq.~(\ref{eq ent prod Seifert arb IS}) 
one can show that 
\begin{equation}
 \begin{split}\label{eq ent prod quantum arb IS}
  & \beta[W(t) - \Delta F(t)] = \\
  & D\left[\hat\rho_\text{tot}(t)\left\|\hat\pi_\text{tot}(\lambda_t)\right]\right. - D\left[\hat\rho_{S}(t)\left\|\hat\pi_S(\lambda_t)\right]\right.	\\
  & -D\left[\hat\rho_\text{tot}(0)\left\|\hat\pi_\text{tot}(\lambda_0)\right]\right. + D\left[\hat\rho_{S}(0)\left\|\hat\pi_S(\lambda_0)\right]\right.,
 \end{split}
\end{equation}
where $D[\hat\rho\|\hat\sigma] \equiv \mbox{tr}\{\hat\rho(\ln\hat\rho-\ln\hat\sigma)\}$ is the quantum relative 
entropy and $\hat\pi_\text{tot}(\lambda_t)$ the global Gibbs state and 
$\hat\pi_S(\lambda_t) = \mbox{tr}_B\{\hat\pi_\text{tot}(\lambda_t)\}$. 
Eq.~(\ref{eq ent prod quantum arb IS}) can be derived by using that the von Neumann entropy of the global state 
$S[\hat\rho_\text{tot}(t)] = -\mbox{tr}_{SB}\{\hat\rho_\text{tot}(t)\ln\hat\rho_\text{tot}(t)\}$ is conserved and by 
using the relation 
$\ln[\frac{\C Z^*(\lambda_t)}{\C Z_\text{tot}(\lambda_t)}\frac{\C Z_\text{tot}(\lambda_0)}{\C Z^*(\lambda_0)}] = \ln\frac{\C Z_B}{\C Z_B} = 0$, 
where the partition functions are defined analogously to Eq.~(\ref{eq HMF}). Notice that this identity requires the bath 
Hamiltonian to be undriven. 

We now note that due to the monotonicity of relative entropy~\cite{UhlmannCMP1977, OhyaPetzBook1993} the first line in 
Eq.~(\ref{eq ent prod quantum arb IS}) is never negative, while the second line is never positive. Hence, positivity 
of the EP~(\ref{eq 2nd law quantum}) is ensured if 
\begin{equation}\label{eq cond positivity 2nd law}
 D\left[\hat\rho_\text{tot}(0)\left\|\hat\pi_\text{tot}(\lambda_0)\right]\right. - D\left[\hat\rho_{S}(0)\left\|\hat\pi_S(\lambda_0)\right]\right. = 0.
\end{equation}
Two important classes of initial states for which this is the case are: 

\emph{Class 1 (global Gibbs state).} If the initial composite system-bath state is a Gibbs state 
$\hat\pi_\text{tot}(\lambda_0)$, we immediately see that Eq.~(\ref{eq cond positivity 2nd law}) is fulfilled and 
$\beta[W(t) - \Delta F(t)] \ge 0$ holds true. For a cyclic process, in which the system Hamiltonian is the same at the 
initial and final time, positivity of Eq.~(\ref{eq 2nd law quantum}) follows alternatively from the approach in 
Ref.~\cite{UzdinSaarPRX2018}. 

\emph{Class 2 (commuting initial state).} We consider initial states of the form 
\begin{equation}\label{eq commuting initial state}
 \hat\rho_\text{tot}(0) = \sum_k p_k(0)\hat\Pi_k \hat\rho_{B|k}(\lambda_0),
\end{equation}
where the $\hat\Pi_k = |k\rangle\langle k|$ are orthogonal rank-1 projectors in the system space fulfilling the 
commutation relations 
\begin{equation}\label{eq commuting condition}
 [\hat\Pi_k,\hat H^*(\lambda_0)] = [\hat\Pi_k,\hat H_\text{tot}(\lambda_0)] = 0 ~ \forall k.
\end{equation}
This is ensured when $[\hat H(\lambda_0),\hat V] = 0$. 
The state of the bath conditioned on the system state $\hat\Pi_k$ reads 
\begin{equation}\label{eq cond bath state quantum}
 \hat\rho_{B|k}(\lambda_0) = \frac{\mbox{tr}_S\{\hat\Pi_k\hat\pi_\text{tot}(\lambda_0)\}}{\mbox{tr}_{SB}\{\hat\Pi_k\hat\pi_\text{tot}(\lambda_0)\}} = \frac{\lr{k|\hat\pi_\text{tot}(\lambda_0)|k}}{\lr{k|\hat\pi_S(\lambda_0)|k}}.
\end{equation}
Since the $p_k(0)$ are allowed to be arbitrary probabilities, Eq.~(\ref{eq commuting initial state}) is the direct 
quantum analogue of the initial states considered in the classical setting in Sec.~\ref{sec classical system bath theory}. 
Using condition~(\ref{eq commuting condition}) it becomes a task of straightforward algebra to show that 
Eq.~(\ref{eq cond positivity 2nd law}) holds. 

We remark that all considerations above can be also extended to a time-dependent coupling Hamiltonian, i.e., by allowing 
$\hat V = \hat V(\lambda_t)$ to depend on time. Again, the problem is then that the work~(\ref{eq def work quantum}) 
cannot be computed based on the knowledge of the system state $\hat\rho_S(t)$ alone. Furthermore, it is worth to point 
out that positivity of the second law~(\ref{eq 2nd law quantum}) with the \emph{nonequilibrium} free energy represents 
a stronger inequality than the bound for the dissipated work derived in Ref.~\cite{CampisiTalknerHaenggiPRL2009}
from a fluctuation theorem using the equilibrium free energy.

\subsection{Breakdown of the results from Sec.~\ref{sec classical system bath theory}}
\label{sec breakdown}

The positivity of $\Sigma(t)$ could be established for initial global Gibbs states or for commuting initial states. 
Without any driving ($\dot\lambda_t = 0$) these states are not very interesting as they remain invariant in time. Hence, 
we only consider the driven situation. Clearly, the analogue of Eq.~(\ref{eq 2nd law quantum}) at the rate level is 
$\beta[\dot W(t) - d_t F(t)]$. Unfortunately, this does not coincide with the quantum 
counterpart of Eq.~(\ref{eq ent prod abstract}). To see this, suppose that 
\begin{equation}
 \dot\Sigma(t) = -\left.\frac{\partial}{\partial t}\right|_{\lambda_t} D[\hat\rho_S(t)\|\hat\pi_S(\lambda_t)].
\end{equation}
This can be rewritten as 
\begin{equation}
 \begin{split}
  \dot\Sigma(t) =&~	\frac{d}{dt}\left\{ S[\hat\rho_S(t)] - \beta \lr{\hat H^*(\lambda_t)}\right\}	\\
		  &+	\beta \mbox{tr}\left\{\hat\rho_S(t)\frac{d\hat H^*(\lambda_t)}{dt}\right\}.
 \end{split}
\end{equation}
Unfortunately, the analogy with Sec.~\ref{sec classical system bath theory} stops here because the last term cannot 
be identified with the work done on the quantum system and hence, $\int_0^t ds\dot\Sigma(s) \neq \Sigma(t)$. 
In fact, 
\begin{equation}
 \frac{\partial\hat H^*(\lambda_t)}{\partial t} \neq \frac{\partial\hat H(\lambda_t)}{\partial t}
\end{equation}
unless in the ``classical'' (and for us uninteresting) limit $[H(\lambda_t),V] = 0$. 

To conclude, for quantum systems the EP rate cannot be expressed in terms of a relative entropy describing the 
irreversible relaxation to the equilibrium state, which would be desirable because an analogue of 
Lemma~\ref{lemma Markov contractivity} holds also in the quantum case~\cite{SpohnJMP1978}. Thus, the very existence of 
a general relation between EP and non-Markovianity as established for previous setups seems questionable at the moment. 
This conclusion can be drawn without touching upon the difficult question of how to extend many of the mathematical 
results of Sec.~\ref{sec mathematical results} to the quantum case.

\section{Applications}
\label{sec applications}

After having established the general theory in the last four sections, we now consider various examples and 
applications. However, it is not our intention here to cover every aspect of our theory. We rather prefer to focus on 
simple models, whose essence is easy to grasp and which illuminate certain key aspects of our framework, thereby also 
shedding light on some misleading statements made in the literature.

\subsection{Time-dependent instantaneous fixed points for an undriven ergodic Markov chain}
\label{sec steady states}

For the formal development of our theory it was of crucial importance to know under which conditions we could ensure 
that there is a well-defined IFP $\pi_\alpha(\lambda_t)$ for the coarse-grained dynamics, which follows from an 
underlying steady state of the microdynamics. Especially for driven systems this was hard to establish because even 
when we start with the initial steady state $\pi_{x}(\lambda_0)$, the driving will take it out of that state 
such that $p_x(t) \neq \pi_x(\lambda_t)$ in general. One might wonder whether additional conditions, such as 
1-Markovianity or ergodicity, help to ensure that $\pi_\alpha(\lambda_t)$ is an IFP of the mesodynamics, but we will 
here show that this is not the case. 

As a counterexample we consider a simple three-state system described by a three-by-three rate matrix $W(\lambda_t)$. 
Imagine that the system started in $\C A(0) \subset\C A_\pi(\lambda_0)$, i.e., the initial microstates were conditionally 
equilibrated. The system is then subjected to an arbitrary driving protocol $\lambda_t$ up to some time $t^*$. Afterwards, 
we keep the protocol fixed, i.e., $\lambda_t = \lambda_{t^*}$ for all $t\ge t^*$. Clearly, at time $t^*$ the microstates 
will in general not be conditionally equilibrated, i.e., $\C A(t)\nsubseteq \C A_\pi(\lambda_{t^*})$. 

Now, for definiteness we choose the full rate matrix describing the evolution of the probability vector 
$\bb p(t) = [p_1(t),p_2(t),p_3(t)]$ for $t\ge t^*$ to be 
\begin{equation}
 W(\lambda_{t^*}) = \left(\begin{array}{ccc}
            -1-e^{-\epsilon/2}	&	1			&	e^{\epsilon/2} \\
	    1			&	-1-e^{-\epsilon/2}	&	e^{\epsilon/2} \\
            e^{-\epsilon/2}	&	e^{-\epsilon/2}		&	-2 e^{\epsilon/2} \\
           \end{array}\right).
\end{equation}
It obeys local detailed balance~(\ref{eq local detailed balance}) if we parameterize the inverse temperature and 
energies as $\beta E_1 = \beta E_2 = 0$ and $\beta E_3 = \epsilon$ and furthermore we have set any kinetic 
coefficients in the rates equal to one. As a partition we choose $\chi_\alpha = \{1\}$ and $\chi_{\alpha'} = \{2,3\}$ and 
in the long time limit the mesostates will thermalize appropriately for any initial state, 
\begin{equation}\label{eq example 1 inststst}
 \binom{\pi_\alpha}{\pi_{\alpha'}} = \lim_{t\rightarrow\infty}\binom{p_\alpha(t)}{p_{\alpha'}(t)} = \frac{1}{e^{-\epsilon}+2}\binom{1}{1+e^{-\epsilon}},
\end{equation}
i.e., the rate matrix $W(\lambda_{t^*})$ is \emph{ergodic}. 

As emphasized above, the conditional microstates need not be in equilibrium initially and we parametrize them by 
$p_{2|\alpha'}(t^*) = \gamma$, $p_{3|\alpha'}(t^*) = 1-\gamma$ ($\gamma\in[0,1]$). In principle it is possible to 
analytically compute the generator~(\ref{eq meso generator ME}) for the ME at the mesolevel, but we refrain from showing 
the resulting very long expression. Instead, we focus on Fig.~\ref{fig plot ex 1}. It clearly shows that the IFP of the 
dynamics is given by Eq.~(\ref{eq example 1 inststst}) only if we choose 
$p_{2|\alpha'}(t^*) = \pi_{2|\alpha'}(\lambda_{t^*})$ and $p_{3|\alpha'}(t^*) = \pi_{3|\alpha'}(\lambda_{t^*})$ 
[implying $\gamma = \gamma_\text{eq} \equiv e^\epsilon/(1+e^\epsilon)$], 
i.e., if the microstates are conditionally equilibrated in agreement with Theorem~\ref{thm steady state weak}. We have 
also checked that the time-dependent rates of the generator~(\ref{eq meso generator ME}) are always positive for this 
example (not shown here for brevity) and hence, the dynamics is 1-Markovian. 

\begin{figure}
 \centering\includegraphics[width=0.44\textwidth,clip=true]{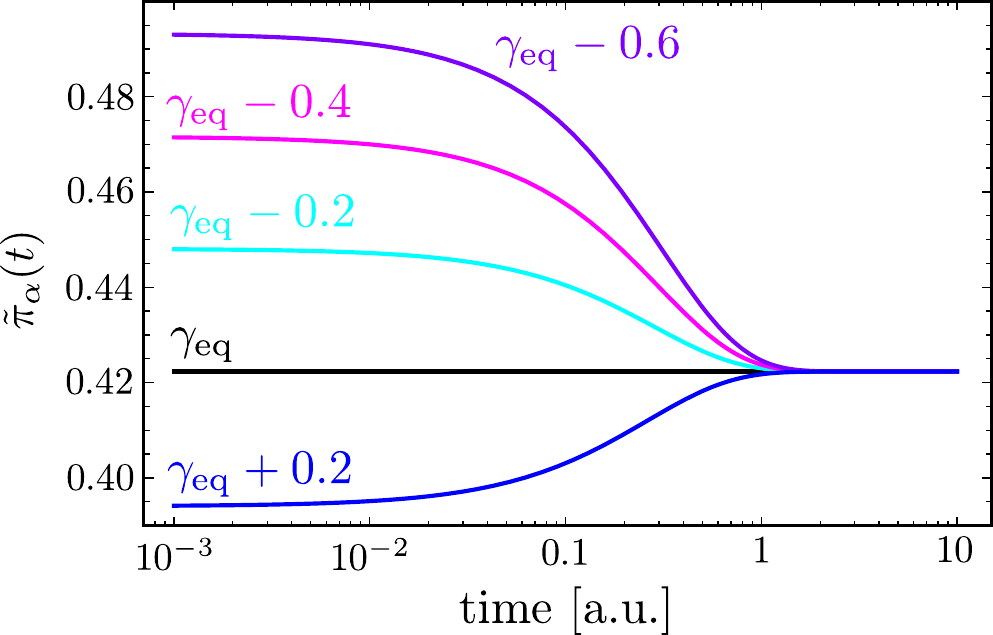}
 \label{fig plot ex 1} 
 \caption{Plot of the changing IFP, denoted here by $\tilde\pi_\alpha(t)$, over time $t$ in logarithmic scale (for the 
 plot we set the initial time $t^* = 0$). The figure shows that $\tilde\pi_\alpha(t)\neq \pi_\alpha(\lambda_{t^*})$ 
 unless we choose $\gamma = \gamma_\text{eq}$. In the long-time limit the IFP coincides with the equilibrium 
 distribution~(\ref{eq example 1 inststst}). We set $\epsilon = 1$ which implies $\gamma_\text{eq} \approx 0.73$. }
\end{figure}

This example proves that ergodicity does not imply that $\pi_\alpha(\lambda_t)$ is the IFP of the reduced dynamics, 
as claimed in Ref.~\cite{SpeckSeifertJSM2007} for arbitrary non-Markovian dynamics. Even 1-Markovianity together with 
ergodicity is not sufficient to ensure this statement.

\subsection{Markovianity without time-scale separation}
\label{sec strong lumpability without TSS}

We give a simple example of a physically relevant and lumpable Markov process although TSS does not apply. For this 
purpose consider the following rate matrix 
\begin{equation}\label{eq spin valve rate matrix}
 W = 
 \left(\begin{array}{ccc}
        -2\gamma_\text{in}	&	\gamma_\text{out}				&	\gamma_\text{out}	\\
        \gamma_\text{in}	&	-\gamma_\text{out}-\bar\gamma_\text{flip}	&	\gamma_\text{flip}			\\
        \gamma_\text{in}	&	\bar\gamma_\text{flip}				&	-\gamma_\text{out}-\gamma_\text{flip}	\\
       \end{array}\right)
\end{equation}
describing the time evolution of a probability vector $\bb p(t) = [p_0(t),p_\uparrow(t),p_\downarrow(t)]$. 
This ME describes a quantum dot in the ultrastrong Coulomb blockade regime coupled to a metallic lead taking 
the spin degree of freedom into account. Then, $p_{0/\uparrow/\downarrow}(t)$ are the probabilities to find 
the dot at time $t$ in a state with zero electrons, an electron with spin up or an electron with spin down, 
respectively. If the metallic lead has a finite magnetization, the rates for hopping in ($\gamma_\text{in}$) 
and out ($\gamma_\text{out}$) of the quantum depend on the spin, which can be derived from first 
principles~\cite{BraunKoenigMartinekPRB2004} and has interesting thermodynamic 
applications~\cite{StrasbergEtAlPRE2014}. But if the lead has zero magnetization as considered here, the 
dynamics of the spin degree of freedom do not matter. Hence, if we consider the partition $\chi_0=\{0\}$ and 
$\chi_1=\{\uparrow,\downarrow\}$, it is not hard to deduce that 
\begin{equation}\label{eq spin valve ME}
 \frac{\partial}{\partial t}\binom{p_0(t)}{p_1(t)} = 
 \left(\begin{array}{cc}
        -2\gamma_\text{in}	&	\gamma_\text{out}	\\
        2\gamma_\text{in}	&	-\gamma_\text{out}	\\
       \end{array}\right) \binom{p_0(t)}{p_1(t)}
\end{equation}
where $p_1(t) = p_\uparrow(t) + p_\downarrow(t)$. Thus, the coarse-grained dynamics is Markovian 
for all times $t$ and all micro initial conditions $[p_0(0),p_\uparrow(0),p_\downarrow(0)]$ although TSS does not apply. 
Notice that the IFP of Eq.~(\ref{eq spin valve ME}) coincides with the marginalized IFP of 
Eq.~(\ref{eq spin valve rate matrix}) and hence, we have $\dot\Sigma(t) \ge 0$. Moreover, as long as the structure of 
the rate matrix~(\ref{eq spin valve rate matrix}) is preserved, we could have even allowed for arbitrary 
time-dependencies in the rates.

\subsection{Classical Brownian motion}
\label{sec Brownian motion}

We here present an example which exhibits negative EP rates and link their appearance to the spectral features of 
the environment. This is done by considering the important class of driven, classical Brownian motion models (also 
called Caldeira-Leggett or independent oscillator models). The global Hamiltonian with mass-weighted coordinates reads 
\begin{align}\label{eq Brownian motion Hamiltonian}
 H(\lambda_t)	&=	\frac{1}{2}[p^2 + \omega^2(\lambda_t)x^2],	\\
 V + H_B	&=	\frac{1}{2}\sum_k\left[p_k^2 + \nu_k^2\left(x_k - \frac{c_k}{\nu_k^2}x\right)^2\right],
\end{align}
and its study has attracted considerable interest in strong coupling thermodynamics~\cite{MartinezPazPRl2013, 
PucciEspositoPelitiJSM2013, StrasbergEtAlNJP2016, FreitasPazPRE2017, StrasbergEspositoPRE2017, AurellEnt2017, 
PerarnauLlobetEtAlPRL2018, HsiangEtAlPRE2018}. 
The Hamiltonian describes a central oscillator with position $x$ and momentum $p$ linearly coupled to a set of bath 
oscillators with positions $x_k$ and momenta $p_k$. The frequency of the central oscillator can be driven and we 
parametrize it as $\omega(\lambda_t) = \omega_0 + g\sin(\omega_Lt)$. Furthermore, $c_k$ and $\nu_k$ are the 
system-bath coupling constants and the frequencies of the bath oscillators. It turns out that all the information 
about the bath (except of its temperature) can be encoded into a single function known as the spectral density of the 
bath. It is defined in general as $J(\omega) \equiv \frac{\pi}{2}\sum_k\frac{c_k^2}{\nu_k}\delta(\omega-\nu_k)$ 
and we parametrize it as 
\begin{equation}\label{eq SD non-Markovian}
 J(\omega) = \frac{\lambda_0^2\gamma\omega}{(\omega^2-\omega_1^2)^2 + \gamma^2\omega^2}.
\end{equation}
Here, $\lambda_0$ controls the overall coupling strength between the system and the bath and $\gamma$ changes the 
shape of the SD from a pronounced peak around $\omega_1$ for small $\gamma$ to a rather unstructured and flat SD for 
large $\gamma$. Thus, intuitively one expects that a smaller $\gamma$ corresponds to stronger non-Markovianity 
although this intuition can be misleading too~\cite{StrasbergEspositoPRL2018}. 

\begin{figure}
 \centering\includegraphics[width=0.47\textwidth,clip=true]{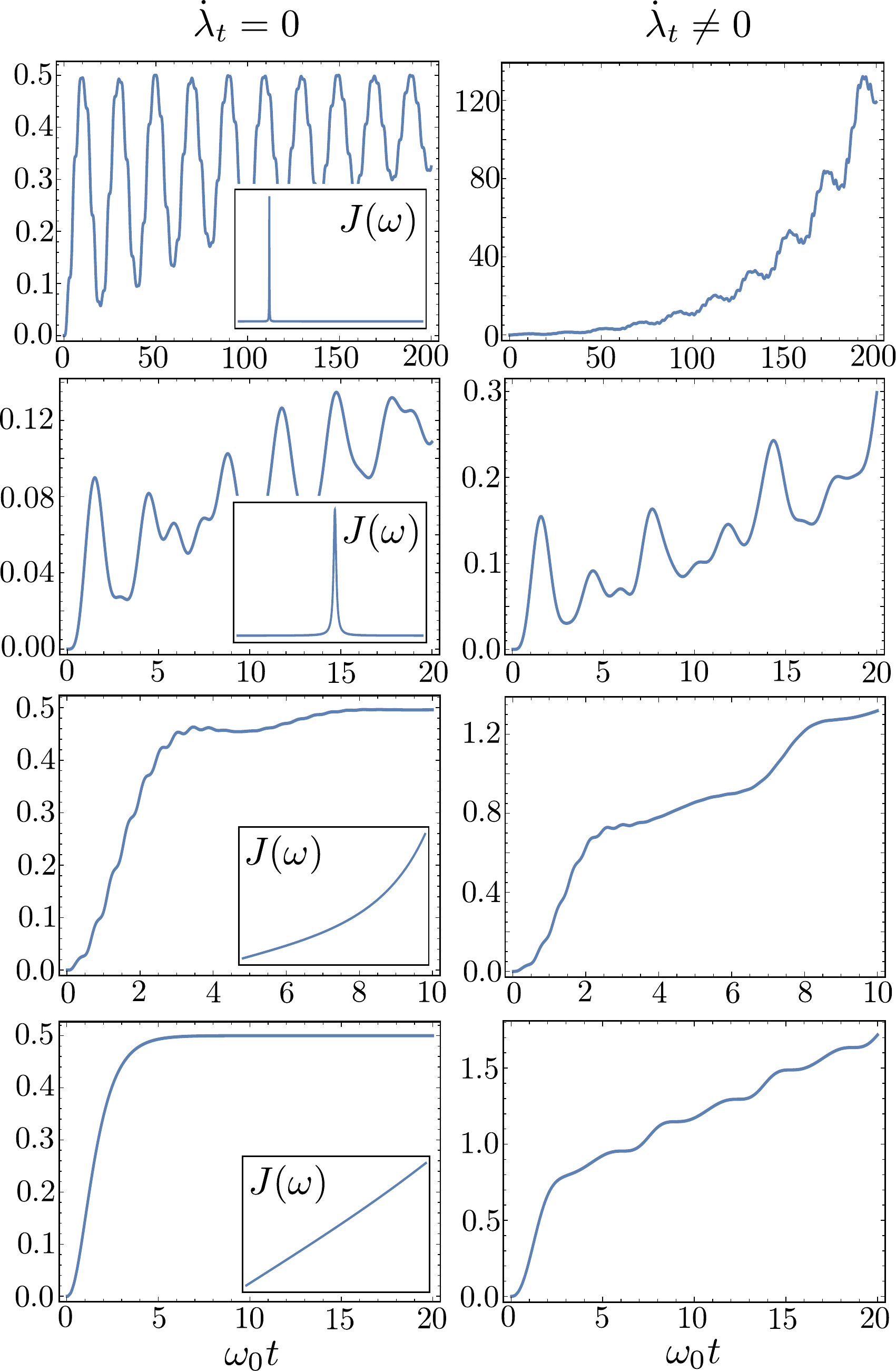}
 \label{fig plot ex 3} 
 \caption{Plot of the dimensionless entropy production $\Sigma(t)$ ($k_B\equiv 1$) over the dimensionless time 
 $\omega_0 t$ for different parameters. For the driving we chose $g = 0$ and $g = 0.3\omega_0$ for the left or right 
 column, respectively, and $\omega_L = \omega_0$. We changed the shape of the spectral density $J(\omega)$ in each row, 
 which is depicted for $\omega\in[0,6\omega_0]$ as a small inset (note that the vertical scaling is different in each 
 inset). Specifically, the parameters $(\lambda_0,\gamma,\omega_1)$ are $(0.316\omega_0,0.01,1)\omega_0$ (top), 
 $(3.16\omega_0,0.1,3.16)\omega_0$ (second row),  $(100\omega_0,1,10)\omega_0$ (third row), 
 $(500\omega_0,10,31.6)\omega_0$ (bottom). The system was prepared according to 
 Eq.~(\ref{eq stationary preparation class}) with initial mean values $\lr{x}(0) = (\sqrt{\beta}\omega_0)^{-1}$, 
 $\lr{p_x}(0) = 0$ and covariances $C_{xx}(0) = (\beta\omega_0^2)^{-1}, C_{p_xp_x}(0) = \beta^{-1}$ and 
 $C_{xp_x}(0) = 0$. Note that this specific choice corresponds to equilibrated covariances, but the mean values are 
 out of equilibrium. The general features of the plot, however, do not change too much for different non-equilibrium 
 initial states. Finally, we set $\omega_0 = 1$ and $\beta = 1$. See also Ref.~\cite{StrasbergEspositoPRE2017} for 
 details of the computation. }
\end{figure}

The dynamics of the model is exactly described by the generalized Langevin equation (see, e.g.,~\cite{WeissBook2008}) 
\begin{equation}\label{eq Langevin eq general}
 \ddot x(t) + \omega_0^2(t) x(t) + \int_0^t ds \Gamma(t-s)\dot x(s) = \xi(t)
\end{equation}
with the friction kernel 
\begin{equation}
 \Gamma(t) \equiv \int_0^\infty d\omega \frac{2}{\pi\omega} J(\omega) \cos(\omega t)
\end{equation}
and the noise $\xi(t)$, which -- when averaged over the initial state of the bath -- obeys the statistics 
\begin{equation}
 \lr{\xi(t)}_B = 0, ~~~ \lr{\xi(t)\xi(s)}_B = \frac{1}{\beta}\Gamma(t-s).
\end{equation}
To compute the thermodynamic quantities introduced in Sec.~\ref{sec classical system bath theory} we need the state of 
the system $\rho_S(t)$. It can be computed with the method explained in Sec.~IV of Ref.~\cite{StrasbergEspositoPRE2017}, 
which we will not repeat here. Instead, we focus on the explanation of the numerical observations only. 

Fig.~\ref{fig plot ex 3} gives illustrative examples of the time-evolution of the EP $\Sigma(t) \ge 0$ defined in 
Eq.~(\ref{eq ent prod Seifert}) for various situations. In total, we plot it for four different parameters 
characterizing the spectral density, always for the same initial condition of the system, but for the case of an 
undriven (left column) or a driven (right column) process. The parameters are chosen from top to bottom such that 
the spectral density resembles more and more an Ohmic spectral density $J(\omega) \sim \omega$, which usually gives 
rise to Markovian behaviour. In fact, this standard intuition is nicely confirmed in Fig.~\ref{fig plot ex 3} by 
observing that negative EP rates are much larger and much more common at the top. The plot at bottom indeed corresponds 
to the Markovian limit in which the bath is conditionally equilibrated throughout (this is similar to the limit 
of TSS treated in Sec.~\ref{sec time scale separation}, see also Ref.~\cite{StrasbergEspositoPRE2017} for additonal 
details). It is worthwhile to repeat that a negative EP rate in the left column of Fig.~\ref{fig plot ex 3} 
indicates non-Markovian behaviour in a rigorous sense, whereas for the right column this is only true in a weaker sense, 
but it unambiguously shows that the bath cannot be adiabatically eliminated.

\subsection{Quantum dynamics under the initial product state assumption}
\label{sec quantum example}

We have shown in Sec.~\ref{sec thermo quantum} that the definition~(\ref{eq ent prod abstract}) of the EP rate for 
classical systems does not properly generalize to the quantum case. Part of the problem could be that we started from an 
initially correlated state, which complicates the treatment of the dynamics of the quantum system significantly. 
Therefore, one often resorts to the initial product state assumption 
$\hat\rho_\text{tot}(0) = \hat\rho_S(t) \otimes \hat\rho_B$, where $\hat\rho_S(t)$ is arbitrary and $\hat\rho_B$ fixed 
(usually taken to be the Gibbs state of the bath)~\cite{RivasHuelgaPlenioRPP2014, BreuerEtAlRMP2016, 
BreuerPetruccioneBook2002, DeVegaAlonsoRMP2017, EspositoLindenbergVandenBroeckNJP2010}. It is then interesting to ask 
which general statements connecting Markovianity, the notion of an IFP and EP rates can be made in this case. 
The following simple example shows which statements do \emph{not} hold in this case. 

A single fermionic mode (such as a quantum dot in the Coulomb blockade regime) tunnel-coupled to a bath of free fermions 
(describing, e.g., a metallic lead) can be modeled by the single resonant level Hamiltonian (assuming spin polarization) 
\begin{equation}
 \hat H_\text{tot} = \epsilon_0\hat d^\dagger\hat d + \sum_k \left(t_k\hat d\hat c_k^\dagger + t_k^*\hat c_k\hat d^\dagger + \epsilon_k\hat c_k^\dagger\hat c_k\right).
\end{equation}
Here, $\hat d^{(\dagger)}$ and $\hat c_k^{(\dagger)}$ are fermionic annihilation (creation) operators, $\epsilon_0$ is 
the real-valued energy of the quantum dot, $t_k$ is a complex tunnel amplitude and $\epsilon_k$ is the real-valued 
energy of a bath fermion. 

To describe the dynamics of the open system we use the Redfield ME~\cite{BreuerPetruccioneBook2002, DeVegaAlonsoRMP2017} 
\begin{align}
 \frac{\partial}{\partial t}\hat\rho_S(t)	&=	-i[\hat H,\hat\rho_S(t)]	\label{eq Redfield ME}	\\
						&	-\int_0^t ds\mbox{tr}_B\left\{[\hat V,[\hat V(s-t),\hat\rho_S(t)\otimes\hat\pi_B]]\right\}.	\nonumber
\end{align}
Here, the system and interaction Hamiltonian are $\hat H = \epsilon_0\hat d^\dagger\hat d$ and 
$\hat V = \sum_k(t_k\hat d\hat c_k^\dagger + t_k^* \hat c_k\hat d^\dagger)$. Furthermore, 
$\hat V(t) = e^{i(\hat H+\hat H_B)t/\hbar}\hat Ve^{-i(\hat H+\hat H_B)t/\hbar}$ denotes the interaction picture with 
$\hat H_B = \sum_k \epsilon_k \hat c_k^\dagger \hat c_k$. We assumed the initial 
system-bath state to be $\hat\rho_S(0)\otimes\hat\pi_B$ where $\hat\rho_S(0)$ is arbitrary and $\hat\pi_B$ the 
grand-canonical equilibrium state with respect to $\hat H_B$ and the particle number operator 
$\hat N_B = \sum_k \hat c_k^\dagger\hat c_k$. Without loss of generality we set the chemical potential to zero 
($\mu = 0$). The Redfield equation~(\ref{eq Redfield ME}) directly results from 
a perturbative expansion of the exact time-convolutionless ME and it gives accurate results for sufficiently 
small tunneling amplitudes $t_k$ and a relatively high bath temperature. 

Following standard procedures, we rewrite Eq.~(\ref{eq Redfield ME}) as 
\begin{align}
 \frac{\partial}{\partial t}\hat\rho_S(t)	=&	-i\epsilon(t)[\hat d^\dagger\hat d,\hat\rho_S(t)]	\\
						&	+\gamma_\text{out}(t)\left(\hat d\hat\rho_S(t)\hat d^\dagger - \frac{1}{2}\{\hat d^\dagger\hat d,\hat\rho_S(t)\}\right)	\nonumber	\\
						&	+\gamma_\text{in}(t)\left(\hat d^\dagger\hat\rho_S(t)\hat d - \frac{1}{2}\{\hat d\hat d^\dagger,\hat\rho_S(t)\}\right),	\nonumber
\end{align}
where $\{\cdot,\cdot\}$ denotes the anti-commutator and 
$\epsilon(t) \equiv \epsilon_0-\Delta_\text{in}(t)-\Delta_\text{out}(t)$ is a time-dependent renormalized system energy.
In detail, we have introduced the quantities 
\begin{align}
 \gamma_\text{in}(t)	&\equiv	\int_0^t d\tau\int_{-\infty}^\infty d\omega \frac{J(\omega)}{\pi}f(\omega) \cos[(\omega-\epsilon_0)\tau],	\label{eq rate in}	\\
 \Delta_\text{in}(t)	&\equiv	\int_0^t d\tau\int_{-\infty}^\infty d\omega \frac{J(\omega)}{2\pi}f(\omega) \sin[(\omega-\epsilon_0)\tau],	\\
 \gamma_\text{out}(t)	&\equiv	\int_0^t d\tau\int_{-\infty}^\infty d\omega \frac{J(\omega)}{\pi}[1-f(\omega)] \cos[(\omega-\epsilon_0)\tau],	\label{eq rate out}	\\
 \Delta_\text{out}(t)	&\equiv	\int_0^t d\tau\int_{-\infty}^\infty d\omega \frac{J(\omega)}{2\pi}[1-f(\omega)] \sin[(\omega-\epsilon_0)\tau],
\end{align}
where $f(\omega) \equiv (e^{\beta\omega}+1)^{-1}$ denotes the Fermi function for $\mu = 0$ and 
$J(\omega) \equiv 2\pi \sum_k |t_k|^2\delta(\omega-\epsilon_k)$ is the spectral density of the bath. 
If there are no initial coherences in the quantum dot present, we can conclude without any further approximation 
that the full dynamics of the quantum dot is captured by the rate ME 
\begin{equation}\label{eq ME SRL}
 \frac{\partial}{\partial t}\binom{p_1(t)}{p_0(t)} = 
 \left(\begin{array}{cc}
        -\gamma_\text{out}(t)	&	\gamma_\text{in}(t)	\\
        \gamma_\text{out}(t)	&	-\gamma_\text{in}(t)	\\
       \end{array}\right)\binom{p_1(t)}{p_0(t)},
\end{equation}
where $p_1(t)$ [$p_0(t)$] describes the probability to find the dot in the filled [empty] state at time $t$. 

\begin{figure}
 \centering\includegraphics[width=0.40\textwidth,clip=true]{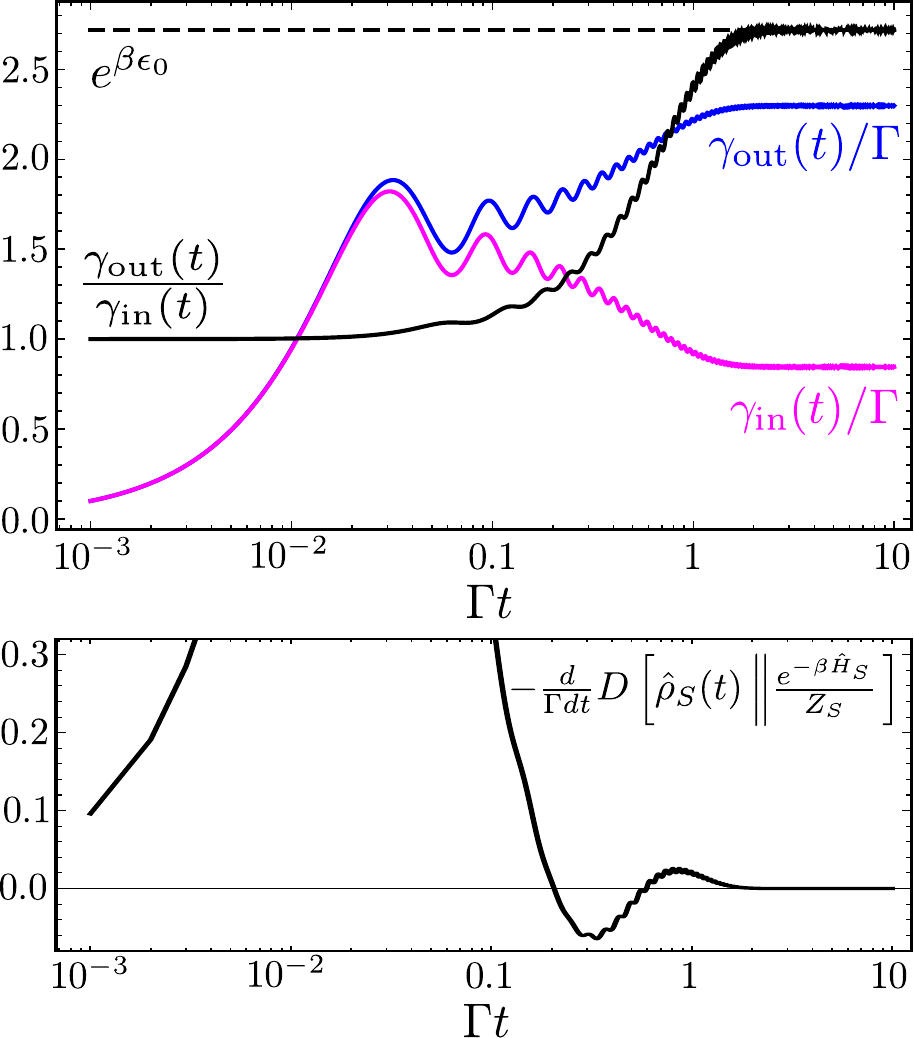}
 \label{fig plot ex 2} 
 \caption{\bb{Top:} Plot of the (dimensionless) rates $\gamma_\text{in}(t)/\Gamma$ and $\gamma_\text{out}(t)/\Gamma$ 
 defined in Eqs.~(\ref{eq rate in}) and~(\ref{eq rate out}), their ratio $\gamma_\text{out}(t)/\gamma_\text{in}(t)$ and 
 the expected local detailed balance ratio $e^{\beta\epsilon_0}$ over dimensionless time $\Gamma t$ in logarithmic scale. 
 For the plot we parametrized the bath spectral density as $J(\omega) = \Gamma$ for $\omega/\Gamma\in[-100,+100]$ and 
 zero outside. The dot energy and inverse temperature of the bath are set to $\epsilon_0 = \beta = 1$. \bb{Bottom:} For 
 the same parameters we plot an often used canditate for the EP rate over dimensionless time $\Gamma t$ in logarithmic 
 scale for the initial state $p_1(0) = 0.1, p_0(0)= 0.9$. }
\end{figure}

We now investigate the IFP of the dynamics. In Fig.~\ref{fig plot ex 2} (top) we plot the time 
evolution of the rates $\gamma_\text{in}(t)$ and $\gamma_\text{out}(t)$ as well as their ratio. We see that for long 
times they become stationary and their ratio fulfills local detailed balance~(\ref{eq local detailed balance}), which 
implies that the steady state is a Gibbs state and hence, the system properly thermalizes. However, for short times, the 
ratio does not fulfill local detailed balance and hence, the IFP is not the Gibbs state. Furthermore, as the rates are 
positive all the time, the dynamics is clearly 1-Markovian. This proves that a 1-Markovian time-evolution, which yields 
the correct long-time equilibrium state, can nevertheless have a time-dependent IFP, even if the underlying Hamiltonian 
is time-independent. This clearly shows that 1-Markovian evolution does not imply a time-invariant IFP as claimed 
in the literature [see, e.g., below Eq.~(47) in Ref.~\cite{DeVegaAlonsoRMP2017} or Eq.~(9) in 
Ref.~\cite{ThomasEtAlPRE2018}]. 

In addition, Fig.~\ref{fig plot ex 2} (bottom) also shows the time evolution of 
\begin{equation}
 \dot\sigma(t) \equiv -\frac{\partial}{\partial t}D[\hat\rho_S(t)\|e^{-\beta\hat H_S}/Z_S]
\end{equation}
In the weak coupling limit it is tempting to identifiy $\dot\sigma(t)$ as the EP rate because the global equilibrium 
state can be approximated by $\hat\pi_{SB} \approx e^{-\beta\hat H_S}/Z_S\otimes e^{-\beta\hat H_B}/Z_B$. However, one 
should be cautious here as this is not an exact result and the initial product state assumption does not fit into the 
description used in Secs.~\ref{sec classical system bath theory} and~\ref{sec thermo quantum}. The transient dynamics 
is indeed dominated by the build-up of system-bath correlations and an exact treatment needs to take them into 
account~\cite{EspositoLindenbergVandenBroeckNJP2010}. Therefore, outside the specific limit of the Born-Markov secular 
master equation, where $\dot\sigma(t)$ can be related to the actual EP rate~\cite{SpohnLebowitzAdvChemPhys1979, 
SpohnJMP1978}, the quantity $\dot\sigma(t)$ lacks a clear connection to a consistent thermodynamic framework. In addition, 
Fig.~\ref{fig plot ex 2} clearly demonstrates that $\dot\sigma(t) < 0$ is possible although the dynamics is 1-Markovian. 
For these reasons the claimed connections between a negative ``entropy production'' rate $\dot\sigma(t)$ and 
non-Markovianity in Refs.~\cite{ArgentieriEtAlEPL2014, BhattacharyaEtAlPRA1017, MarcantoniEtAlSR2017, 
PopovicVacchiniCampbellPRA2018} require a careful reassessment.

\section{Summary and outlook}
\label{sec summary and outlook}

\subsection{Summary}
\label{sec summary}

A large part of this paper was devoted to study the instantaneous thermodynamics at the rate level for an arbitrary 
classical system coupled to a single heat bath. Quite remarkably, the definition of the EP rate~(\ref{eq 2nd law intro}) 
for a weakly coupled Markovian system can be carried over to the strong-coupling and non-Markovian situation if 
we replace the Gibbs state with the correct equilibrium state $\pi_\alpha(\lambda_t)$, described, e.g., by the 
Hamiltonian of mean force~\cite{KirkwoodJCP1935}. The EP rate then reads 
\begin{equation}\label{eq EP rate generalized}
 \dot\Sigma(t) \equiv -\left.\frac{\partial}{\partial t}\right|_{\lambda_t} D[p_\alpha(t)\|\pi_\alpha(\lambda_t)].
\end{equation}
Starting from this definition together with an unambiguous definition for work [Eqs.~(\ref{eq work bipartite}) 
and~(\ref{eq work rate Hamiltonian})], we recovered the previously proposed definitions in 
Refs.~\cite{SeifertPRL2016, MillerAndersPRE2017, StrasbergEspositoPRE2017}. Most importantly, we were able to connect 
the abstract concept of (non-) Markovianity to the physical observable consequence of having a negative EP rate 
$\dot\Sigma(t) < 0$. We can summarize our finding as follows: 

\begin{thm*}
 If the dynamics are undriven ($\dot\lambda_t = 0$), any appearance of $\dot\Sigma(t) < 0$ unambiguously reveals that 
 the dynamics is non-Markovian. If the dynamics is driven ($\dot\lambda_t \neq 0$), any appearance of $\dot\Sigma(t) < 0$ 
 unambiguously reveals that the dynamics is non-Markovian \bb{or} that $\pi_\alpha(\lambda_t)$ cannot be an IFP of the 
 dynamics. This implies that TSS does not apply. 
\end{thm*}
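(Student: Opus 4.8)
The plan is to read this summary as the contrapositive of the results already established—the logical engine being Theorem~\ref{thm ent prod}, supplied with the IFP characterizations of Theorems~\ref{thm steady state weak} and~\ref{thm TSS}, and realized physically in Theorems~\ref{thm ent prod bipartite case} (coarse-grained ME) and~\ref{thm ent prod HMF} (Hamiltonian setting). Since the substantive work is done, the task is only to assemble the implications in the right order and negate them. Throughout I would read ``non-Markovian'' as ``not 1-Markovian''; because full Markovianity in the sense of Eq.~(\ref{eq cond Markovianity}) implies 1-Markovianity, a failure of the latter is \emph{a fortiori} a failure of the former, so this reading can only strengthen the conclusion.

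First I would dispatch the undriven case. Here the standing assumption $\C A(0)\subset\C A_\pi$ is in force, so Theorem~\ref{thm steady state weak} guarantees—independently of any dynamical property—that $\pi_\alpha$ is an IFP of the mesodynamics for all $t$. Thus the sole hypothesis of Theorem~\ref{thm ent prod} is automatically met, and were the dynamics 1-Markovian on an interval $I$ we would necessarily have $\dot\Sigma(t)\ge 0$ there. Taking the contrapositive, any observed $\dot\Sigma(t)<0$ excludes 1-Markovianity, hence excludes Markovianity. This is the first assertion.

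Next I would treat the driven case, where the genuinely new feature is that the IFP property is no longer automatic. Theorem~\ref{thm ent prod} yields $\dot\Sigma(t)\ge 0$ only when \emph{both} (i) the dynamics is 1-Markovian \emph{and} (ii) $\pi_\alpha(\lambda_t)$ is an IFP. Negating the conclusion gives precisely the disjunction in the statement: $\dot\Sigma(t)<0$ forces the failure of (i) or of (ii). The closing clause then follows from Theorem~\ref{thm TSS}: time-scale separation makes the process lumpable—hence Markovian, in particular 1-Markovian—while $p_x(t)\in\C A_\pi$ for all $t$, and by Theorem~\ref{thm steady state strong} lumpability makes $\pi_\alpha(\lambda_t)$ the marginal IFP. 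TSS therefore secures both (i) and (ii), so $\dot\Sigma(t)<0$ is incompatible with TSS.

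The only real subtlety—and the step I expect to require care rather than be mechanical—is the driven case: a single observation of $\dot\Sigma(t)<0$ cannot separate the two failure modes, so I must be explicit that it is the \emph{conjunction} of 1-Markovianity and the IFP property that is negated, yielding only the disjunction. I would also check that the auxiliary hypotheses under which Theorem~\ref{thm ent prod} was proved—chiefly invertibility of $G_{t,0}$, needed even to define the generator of the time-local ME~(\ref{eq ME meso general})—are not silently violated; under the standing assumptions of Secs.~\ref{sec coarse-grained dissipative dynamics} and~\ref{sec classical system bath theory} they hold, so the assembly goes through.
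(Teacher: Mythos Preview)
Your proposal is correct and mirrors the paper's own logic: the starred theorem is a summary with no separate proof, obtained exactly as you describe by taking the contrapositive of Theorem~\ref{thm ent prod} and feeding in the IFP guarantees (Theorem~\ref{thm steady state weak} for the undriven case, Theorem~\ref{thm steady state strong}/\ref{thm TSS} for TSS), as packaged in Theorems~\ref{thm ent prod bipartite case} and~\ref{thm ent prod HMF}. Your care in distinguishing the disjunction in the driven case and in reading ``non-Markovian'' as ``not 1-Markovian'' is exactly right.
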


Especially for the undriven case, it was important to study the question when is the equilibrium state 
$\pi_\alpha(\lambda_t)$ also an IFP of the dynamics. To the best of our knowledge, this was not yet studied thoroughly. 
In particular, a 1-Markovian evolution of the system does \emph{not} imply that $\pi_\alpha(\lambda_t)$ is an 
instantaneous fixed point of the dynamics. This is the reason why a 1-Markovian evolution alone is not sufficent to imply 
that the entropy production rate is always positive. Fig.~\ref{fig overview} shows the mathematical implications 
and equivalences worked out in this paper. 

\begin{figure*}
 \centering\includegraphics[width=0.85\textwidth,clip=true]{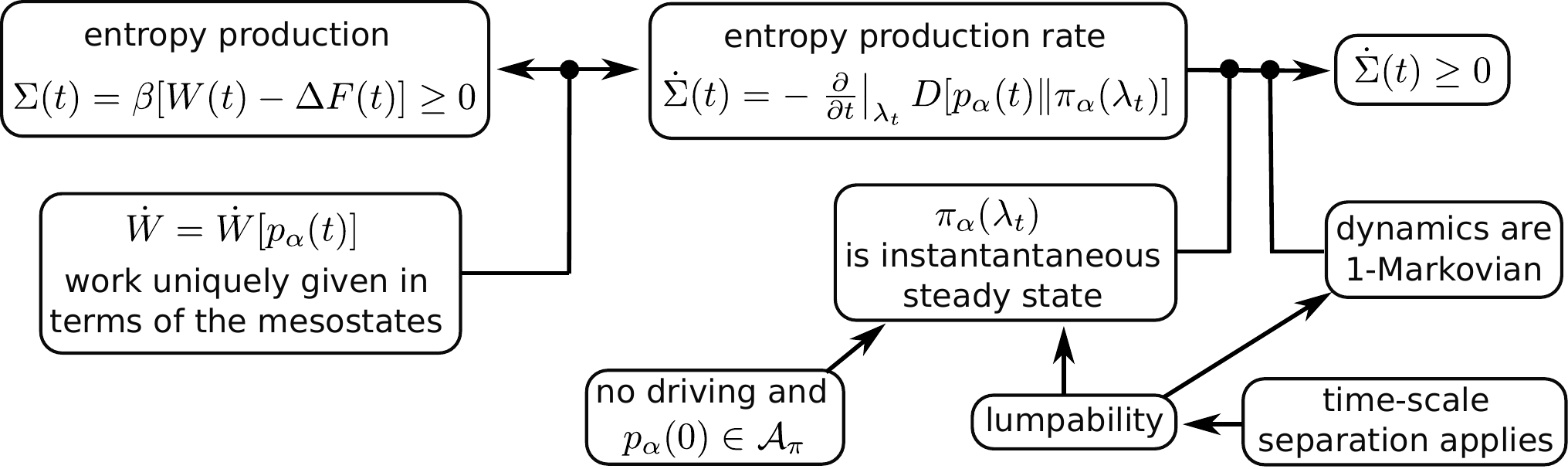}
 \label{fig overview} 
 \caption{Overview of the results from 
 Secs.~\ref{sec mathematical results},~\ref{sec coarse-grained dissipative dynamics} 
 and~\ref{sec classical system bath theory} (the notation is chosen as in Secs.~\ref{sec mathematical results} 
 and~\ref{sec coarse-grained dissipative dynamics}, but the findings are identical to 
 Sec.~\ref{sec classical system bath theory}). The arrows indicate implications in a mathematical sense. 
 Some implications depend on certain conditions, which are marked by a line attached with a circle to the 
 respective arrow. }
\end{figure*}

We then left the classical regime and provided a thermodynamic framework for a strongly coupled, driven \emph{quantum} 
system immersed in an arbitrary heat bath in Sec.~\ref{sec thermo quantum}. Inspired by the classical treatment and 
backed up by equilibrium considerations using the quantum Hamiltonian of mean force~\cite{HaenggiIngoldTalknerNJP2008, 
GelinThossPRE2009, HsiangHuEntropy2018}, we defined internal energy $U$, system entropy $S$ and free energy $F$ [Eqs.~(\ref{eq def U quantum}) to~(\ref{eq def F quantum})] for a quantum system arbitrarily far from equilibrium. 
Remarkably, the basic definitions are formally identical to the classical case albeit they were critically debated in 
Refs.~\cite{HaenggiIngoldTalknerNJP2008,GelinThossPRE2009}. 
Nevertheless, they ensure that the first and second law as known from phenomenological non-equilibrium 
thermodynamics, $\Delta U = Q + W$ and $\Sigma = \beta(W-\Delta F) = \Delta S - \beta Q \ge 0$, also hold in the 
quantum regime. Thus, at the integrated level the quantum nature of the interaction becomes manifest only by 
realizing that we can treat a smaller class of admissible initially correlated states. At the rate level, however, we 
showed that the quantum generalization of Eq.~(\ref{eq EP rate generalized}) does not coincide with the entropy 
production rate $\dot\Sigma(t) = \beta[\dot W(t) - d_t F(t)]$. Thus, at present it seems that there is no rigorous 
connection between negative entropy production rates and non-Markovianity. 

To support the latter statement we also investigated in Sec.~\ref{sec quantum example} what happens for initially 
decorrelated states if we use the conventional definition of entropy production rate [i.e., the quantum counterpart 
of Eq.~(\ref{eq 2nd law intro})] valid in the limit of the Born-Markov-secular approximation~\cite{SpohnJMP1978, 
SpohnLebowitzAdvChemPhys1979, LindbladBook1983, BreuerPetruccioneBook2002, KosloffEntropy2013}. Unfortunately, 
outside this limit this definition does not provide an adequate candidate for an entropy production rate and even for 
a weakly coupled and 1-Markovian system it can be transiently negative. From the perspective of open quantum system 
theory, this behaviour is caused by the initial build-up of system-environment correlations, which -- even in the weak 
coupling limit -- cannot be neglected and need to be taken into account in any formally exact thermodynamic 
framework~\cite{EspositoLindenbergVandenBroeckNJP2010}. 

Table~\ref{table quantum vs classical} summarizes what is known (and what not) about the thermodynamic description 
of a driven system coupled to a single heat bath for the classical (abbreviated CM) and the quantum (QM) case, 
respectively. 

\begin{table}[h]
 \centering
  \begin{tabular}{l|c|c}
   				&	CM   &   QM			\\
   \hline 
   Consistent with equilibrium thermodynamics$^{(a)}$  &   \checked   &    \checked   \\
   Nonequilibrium first law                            &   \checked   &    \checked   \\
   Nonequilibrium second law                           &   \checked   &    \checked   \\
   Recovery of weak-coupling limit                     &   \checked   &    \checked   \\
   Jarzynski-Crooks work fluctuation theorem$^{(b)}$   &   \checked   &    \checked   \\
   Entropy production fluctuation theorem$^{(c)}$      &   \checked   &    \lightning  \\
   Arbitrary initial system states                     &   \checked   &    \lightning  \\
   Consistent with TSS                                 &   \checked   &    \lightning  \\
   Connection to non-Markovianity                      &   \checked   &    \lightning  \\
  \end{tabular}
  \caption{\label{table quantum vs classical} Current state-of-the-art of strong coupling thermodynamics for a 
  single heat bath. The \lightning-symbol indicates only that it is \emph{currently} not known how to establish the 
  corresponding quantum version. Remarks: (a) We here mean that the standard textbook relations between the partition 
  function and internal energy, entropy and free energy are recovered at equilibrium. (b) A work fluctuation theorem 
  of the ``Jarzynski-Crooks'' type starts with a process in equilibrium and contains the \emph{equilibrium} free 
  energies in the expression. (c) An entropy production (or ``integral'') fluctuation theorem allows to start in a 
  non-equilibrium state and contains the \emph{nonequilibrium} free energies. 
  }
\end{table}

\subsection{Outlook}
\label{sec outlook}

After having established a general theoretical description involving a lot of mathematical details, we here take the 
freedom to be less precise in order to discuss various consequences of our findings and to point out interesting open 
research avenues. 

First of all, the field of strong coupling and non-Markovian thermodynamics is far from being settled and many different 
approaches have been put forward. Therefore, one might wonder whether the definitions we have used here are the 
``correct'' ones or whether one should not start with a completely different set of definitions. We believe that the 
definitions we have used possess a certain structural appeal: we could establish a first and second law as known from 
phenomenological non-equilibrium thermodynamics and in the limit of TSS or at equilibrium, our definitions coincide with 
established results from the literature. Furthermore, the fact that in the classical case we could give to the 
appearance of a negative EP rate a clear dynamical meaning adds further appeal to the definitions used here. 

On the other hand, this last point is lost for quantum systems leaving still a larger room of ambiguity there. 
In this respect, it is also worth to point out that for strongly coupled, non-Markovian systems it was also possible to 
find definitions which guarantee an always positive EP rate even in presence of multiple heat baths. One possibility 
is to redefine the system-bath partition~\cite{StrasbergEspositoPRE2017, StrasbergEtAlNJP2016, NewmanMintertNazirPRE2017, 
SchallerEtAlPRB2018, StrasbergEtAlPRB2018, RestrepoEtAlNJP2018}, which reverses the strategy of 
Sec.~\ref{sec coarse-grained dissipative dynamics}: instead of looking at the mesostates only when starting from a 
consistent description in terms of the microstates, one starts with a mesoscopic description and ends up with a 
consistent description in a larger space, i.e., one effectively finds the microstates from 
Sec.~\ref{sec coarse-grained dissipative dynamics}. Alternatively and without enlarging the state space, Green's 
functions techniques can be used for simple models to define an always positive EP 
rate~\cite{EspositoOchoaGalperinPRL2015, BruchEtAlPRB2016, LudovicoEtAlPRB2016, HaughianEspositoSchmidtPRB2018} or the 
Polaron transformation can be useful when dealing with particular strong coupling situations~\cite{SchallerEtAlNJP2013, 
KrauseEtAlJCP2015, GelbwaserKlimovskyAspuruGuzikJPCL2015, WangRenCaoSciRep2015, FriedmanAgarwallaSegalNJP2018}. 

Applying our present framework in context of \emph{multiple} heat baths poses a formidable challenge as it remains 
unclear what the correct reference state $\pi_\alpha(\lambda_t)$ should be. While it is known how to extend the second 
law~(\ref{eq 2nd law intro}) to multiple heat baths if the Born-Markov secular approximation is 
applied~\cite{SpohnLebowitzAdvChemPhys1979}, this approximation can be unjustified even at weak 
coupling~\cite{MitchisonPlenioNJP2018}. Furthermore, the correct choice of initial state plays a crucial role as it 
can lead to different thermodynamic definitions; compare, e.g., with the initial product state assumption used in 
Ref.~\cite{EspositoLindenbergVandenBroeckNJP2010}. At the end, we believe that the most meaningful thermodynamic 
description will indeed depend on the question which degrees of freedom we can measure and control in an experiment. 
However, at least at steady state many of the different approaches coincide because the system-bath boundary then 
usually contributes only a time-independent additive constant to the description. 

Within the framework we have used here, we can get also more insights by viewing our findings in light of the recent 
endeavour to find a meaningful quantifier of non-Markovianity for quantum systems~\cite{RivasHuelgaPlenioRPP2014, 
BreuerEtAlRMP2016}. At least for classical, undriven systems it seems reasonable to measure the degree of 
non-Markovianity via the quantity 
\begin{equation}\label{eq quantifier NM}
 \C N \equiv \max_{p_x(0)\in\C A_\pi}\int_{\dot\Sigma(t) < 0} \left|\dot\Sigma(t)\right| dt \ge 0.
\end{equation}
The larger $\C N$, the stronger the system behaves non-Markovian. This quantifier shares structural similarity with the 
BLP quantifier~\cite{BreuerLainePiiloPRL2009} and a non-zero value could be likewise interpreted as information backflow 
from the bath to the system. Thus, our findings show that due to memory effects $\dot\Sigma(t)$ looses its property of a 
Lyapunov function. Of course, $\C N$ presents just one out of a 
multitude of possible non-Markovianity quantifiers~\cite{RivasHuelgaPlenioRPP2014, BreuerEtAlRMP2016}, but it has the 
outstanding advantage that it is clearly linked to an important and meaningful physical quantitiy. Its comparison with 
other measures therefore deserves further attention. 

To close this paper, we ask for which problems non-Markovian effects could be beneficial in a thermodynamic sense. 
This question constitutes in principle a vast field on its own, which we only want to briefly touch. A central benefit 
of non-Markovian dynamics is that new state transformations become possible, which are not realizable with a Markovian 
finite time dynamics.\footnote{The question whether a given initial state $p_\alpha(0)$ can be transformed into 
a given final state $p_\alpha(t)$ by a Markovian ME is known as the ``embedding problem''. For a recent account of this 
field see Ref.~\cite{LencastreEtAlPRE2016}. The problem was also studied quantum mechanically in 
Ref.~\cite{WolfEtAlPRL2008}. } We here want to give a simple example of physical and thermodynamic relevance to illustrate 
the main point. This example is the erasure of a single bit of information. 

Erasing a single bit of information is related to Landauer's famous principle~\cite{LandauerIBM1961} and it is nowadays 
possible to measure the minuscule thermodynamic changes associated to this transformation~\cite{OrlovEtAlJJAP2012, 
BerutEtAlNature2012, JunGavrilovBechhoeferPRL2014, BerutPetrosyanCilibertoJSM2015, GavrilovBechhoeferPRL2016, 
HongEtAlSciAdv2016, YanEtAlPRL2018}. Theoretically, the process of erasure is usually modeled with a Markovian two-state 
system and optimal protocols have been investigated in Refs.~\cite{DianaBagciEspositoPRE2013, ZulkowskiDeWeesePRE2014}. 
Let us now illustrate which benefits non-Markovian dynamics can add. We denote the two states of the bit by ``0'' and 
``1'' and model the dynamics by the ME 
\begin{equation}
 \frac{\partial}{\partial t}\binom{p_1(t)}{p_0(t)} = 
 \left(\begin{array}{cc}
        -\gamma_{01}(t)	&	\gamma_{10}(t)	\\
        \gamma_{01}(t)	&	-\gamma_{10}(t)	\\
       \end{array}\right) \binom{p_1(t)}{p_0(t)}.
\end{equation}
Since we have not made any assumptions about the time-dependent rates $\gamma_{01}(t)$ and $\gamma_{10}(t)$, this model 
is general and could be obtained directly from Eq.~(\ref{eq ME meso general}). Note that the origin of the 
time-dependence of the rates does not need to come from any driving, cf. Eqs.~(\ref{eq ME meso general}) 
or~(\ref{eq ME SRL}). From $p_0(t) + p_1(t) = 1$ we obtain a 
linear, inhomogeneous differential equation with time-dependent coefficients for the probability to be in state zero. 
It reads $\dot p_0(t) = \gamma_{01}(t) - [\gamma_{10}(t)+\gamma_{01}(t)] p_0(t)$ with the formal solution 
\begin{align}
 p_0(t)	=&~	\exp\left[-\int_0^t ds [\gamma_{10}(s)+\gamma_{01}(s)]\right] p_0(0)    \label{eq solution bit}	\\
		&+	\int_0^t ds \exp\left[-\int_s^t du [\gamma_{10}(u)+\gamma_{01}(u)]\right] \gamma_{01}(s).  \nonumber
\end{align}
For definiteness we choose to erase the bit such that the probability $p_0(t)$ to find the bit in state zero is as large 
as possible at time $t$. 

Now, as a proof of principle, let us assume that $\gamma_{01}(t) \ge 0$ for all times $t$, but $\gamma_{10}(t)$ can be 
negative for certain times, which clearly indicates non-Markovian behaviour. Furthermore, we denote the fact that 
$p_0(t)$ depends on the whole history of $\gamma_{10}(t)$ by $p_0(t) = p_0[t;\{\gamma_{10}(t)\}]$. Next, we recall the 
well-known inequality $\int_0^t ds f(s) \le \int_0^t ds |f(s)|$ for any time-dependent function $f(t)$, which implies 
\begin{equation}
 \exp\left[-\int_0^t ds f(s)\right] \ge \exp\left[-\int_0^t ds |f(s)|\right].
\end{equation}
Because the two terms in Eq.~(\ref{eq solution bit}) are separately positive, this inequality implies 
\begin{equation}
 p_0[t;\{\gamma_{10}(t)\}] \ge p_0[t;\{|\gamma_{10}(t)|\}]
\end{equation}
for any initial state and independent of the precise form of the rates. In fact, if for certain times 
$\gamma_{10}(t) < 0$ we have a strict inequality: $p_0[t;\{\gamma_{10}(t)\}] > p_0[t;\{|\gamma_{10}(t)|\}]$. 
This shows that non-Markovian effects can help erase a bit faster in finite time. 

To conclude, we believe that our work paves the way for a rigorous understanding of finite-time thermodynamics away 
from the conventional Markovian assumption. Because our understanding of finite-time processes has drastically improved 
during the last years~\cite{DeffnerCampbellJPA2017}, exploring their thermodynamic implications opens up a new 
and exciting research field.

\subsection*{Acknowledgements}

This research is funded by the European Research Council project NanoThermo (ERC-2015-CoG Agreement No. 681456).


\bibliography{/home/philipp/Documents/references/books,/home/philipp/Documents/references/open_systems,/home/philipp/Documents/references/thermo,/home/philipp/Documents/references/info_thermo,/home/philipp/Documents/references/general_QM,/home/philipp/Documents/references/math_phys}

\appendix

\section{Weak lumpability}
\label{sec app weak lumpability}

The notion of lumpability required the coarse-grained Markov chain to be Markovian for any initial microstate. 
One might wonder what can be said about the dynamics if there is at least one initial microstate which leads to a 
Markov chain at the mesolevel. For this purpose Kemeny and Snell introduce the concept of weak 
lumpability (Sec.~6.4. in Ref.~\cite{KemenySnellBook1976}): 

\begin{mydef}[Weak lumpability]\label{def weak lumpability}
 A Markov chain is weakly lumpable with respect to a partition $\boldsymbol\chi$ if there exists at least one initial 
 distribution $p_{x}(0)$ such that the lumped process is a Markov chain. The TM can then depend on $p_{x}(0)$. 
\end{mydef}

The fact that the TMs for a weakly lumpable process can depend on the initial microstate $p_x(0)$ is also apparent in 
Eq.~(\ref{eq TM mesolevel}). Furthermore, it is again clear that a weakly lumpable process with respect to $p_x(0)$ 
for a given TM $T_\tau$ and partition $\boldsymbol\chi$, is also a weakly lumpable process with respect to $p_x(0)$ 
for all larger times, i.e., for all $T_{n\tau} = (T_\tau)^n$ with $n>1$ and the same partition $\boldsymbol\chi$. 

The concept of weak lumpability is especially useful when the underlying Markov chain is regular\footnote{In 
Sec.~\ref{sec steady states} we called this property ergodicity, which is more familiar for a physicist. Here, instead,
we follow the terminology of Ref.~\cite{KemenySnellBook1976}. }: 

\begin{mydef}[Regular Markov chain]
 A Markov chain is called regular if there exists an $n\in\mathbb{N}$ such that all elements of the matrix $T^n_{\tau}$ 
 are strictly positive. 
\end{mydef}

A regular Markov chain ensures that the system reaches its steady state 
$\boldsymbol\pi = \lim_{n\rightarrow\infty} T^n_{\tau} \bb p(0)$ for any initial distribution $\bb p(0)$ and hence, 
it has a unique steady state. Kemeny and Snell then prove the following~\cite{KemenySnellBook1976}: 

\begin{thm}\label{thm weak lumpability}
 Assume that a regular Markov chain with steady state $\pi_x$ is weakly lumpable with respect to the partition 
 $\boldsymbol\chi$ for some initial distribution $p_x(0)$. Then, the Markov chain is also weakly lumpable for the 
 initial distribution $\pi_x$ with the same transition probabilities, which are determined by 
 \begin{equation}\label{eq TM weak lumpability}
  G_{\tau}(\alpha|\beta) = \sum_{x_\alpha,y_\beta} T_{\tau}(x_\alpha|y_\beta) \pi_{y|\beta}.
 \end{equation}
\end{thm}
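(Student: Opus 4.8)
The plan is to combine three ingredients: a shift-invariance property of weak lumpability, the convergence to the steady state guaranteed by regularity, and a continuity argument on the joint mesostate distributions. Throughout I assume, as is standard in the Kemeny--Snell framework, that whenever the lumped process is a Markov chain it is time-homogeneous, with some transition matrix $G_\tau$.

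First I would establish a shift property for the \emph{initial} distribution. Since the microchain started from $p_x(0)$ is weakly lumpable, the lumped process $(A_0,A_1,\dots)$ is a homogeneous Markov chain with transition matrix $G_\tau$. For any $n$ the microstate distribution at time $n\tau$ equals $\bb p(n\tau)=(T_\tau)^n\bb p(0)$, and because the underlying chain is Markovian and homogeneous, the lumped process obtained by starting afresh from $\bb p(n\tau)$ has exactly the same law as the tail $(A_n,A_{n+1},\dots)$. A tail of a homogeneous Markov chain is again Markov with the \emph{same} transition matrix $G_\tau$; hence the microchain started from $\bb p(n\tau)$ is weakly lumpable with the same $G_\tau$, for every $n$.

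Next I would pass to the limit. Weak lumpability from an initial microdistribution $q_x$ with transition matrix $G_\tau$ is equivalent to the factorization
\begin{equation}
 P_q(A_0=\alpha_0,\dots,A_k=\alpha_k) = q^{\rm meso}_{\alpha_0}\prod_{i=1}^{k} G_\tau(\alpha_i|\alpha_{i-1})
\end{equation}
holding for all $k$ and all mesotrajectories, where $q^{\rm meso}_{\alpha}=\sum_{x_\alpha} q_{x_\alpha}$. Both sides are linear in $q_x$: the left side is a sum over compatible microtrajectories weighted by products of the fixed $T_\tau$, and the right side is linear through the factor $q^{\rm meso}_{\alpha_0}$ multiplying the constant $\prod_i G_\tau$. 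Each such identity is therefore a closed linear condition on the initial distribution. These conditions hold for every $q=\bb p(n\tau)$ by the previous step, and regularity gives $\bb p(n\tau)\to\bs\pi$. By continuity the factorization then holds for $q=\pi_x$ with the \emph{same} $G_\tau$, which is precisely the assertion that the chain is weakly lumpable for the initial distribution $\pi_x$ with unchanged transition probabilities.

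Finally I would read off the transition matrix at $\pi_x$. Specializing the one-step factorization to the stationary start gives $G_\tau(\alpha|\beta)=P_\pi(A_1=\alpha\mid A_0=\beta)=\big(\sum_{x_\alpha,y_\beta}T_\tau(x_\alpha|y_\beta)\pi_{y_\beta}\big)/\pi_\beta$, and since $\pi_{y_\beta}/\pi_\beta=\pi_{y|\beta}$ this is exactly Eq.~(\ref{eq TM weak lumpability}). The main obstacle is the limiting step: I must ensure the transition matrix does not drift with $n$, which is why the shift argument is carried out so that a single fixed $G_\tau$ appears for all $\bb p(n\tau)$; once that is secured, regularity together with the linearity of the factorization conditions makes the passage to $\pi_x$ routine. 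A secondary point needing care is the tacit homogeneity of the lumped chain, which I would invoke from the standing Kemeny--Snell conventions rather than re-derive.
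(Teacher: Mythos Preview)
The paper does not actually give a proof of this theorem: it simply attributes the result to Kemeny and Snell and cites their book, without reproducing any argument. So there is no in-paper proof to compare your proposal against.

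That said, your argument is correct and is essentially the classical one. The shift step---passing from $\bb p(0)$ to $\bb p(n\tau)$---uses only the homogeneity and Markov property of the microchain, so the lumped tail inherits the same transition matrix $G_\tau$. The limiting step is sound because each finite-dimensional factorization identity is an affine (in fact linear) equality in the initial microdistribution, hence preserved under the limit $\bb p(n\tau)\to\bs\pi$ guaranteed by regularity. The identification of $G_\tau$ via the one-step conditional at the stationary start is then immediate. Your caveat about tacitly assuming time-homogeneity of the lumped chain is appropriate: this is indeed part of the Kemeny--Snell convention for ``the lumped process is a Markov chain,'' and without it the shift argument would only yield a family of possibly $n$-dependent transition matrices, for which a separate stabilization argument would be needed before taking the limit.
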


Thus, Theorem~\ref{thm weak lumpability} says that for a regular and weakly lumpable Markov chain we can always use 
the conditional steady state $\pi_{x|\alpha}$ to construct the TM at the mesolevel and do not need to use 
$p_{x|\alpha}(0)$ as in Eq.~(\ref{eq TM mesolevel}). This is advantageous to say something about the IFP of undriven 
processes: 

\begin{thm}\label{thm steady state weak appendix}
 Consider an undriven stochastic process described by the ME~(\ref{eq ME meso general}), i.e., we assume $G_{t,0}^{-1}$ 
 to exist for all admissible initial states $\C A(0)$ and all times $t$. If the stochastic process is weakly lumpable 
 for an underlying regular Markov chain with respect to an admissible initial state $p_x(0)\in\C A(0)$, 
 then $\pi_\alpha$ is a IFP of the stochastic process at the mesolevel. 
\end{thm}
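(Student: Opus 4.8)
The plan is to mirror the proof of Theorem~\ref{thm steady state weak}, replacing its hypothesis $\C A(0)\subset\C A_\pi$ by weak lumpability, which -- through Theorem~\ref{thm weak lumpability} -- supplies the needed information at the level of the conditional steady state. The entire argument reduces to establishing the fixed-point relation $\sum_\beta G_{t,0}(\alpha|\beta)\pi_\beta = \pi_\alpha$ for the propagator of the mesodynamics, after which the generator follows by the same differentiation step as before.

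First I would invoke Theorem~\ref{thm weak lumpability}: since the underlying chain is regular and weakly lumpable with respect to the admissible state $p_x(0)\in\C A(0)$, it is also weakly lumpable with respect to the steady state $\pi_x$, and -- crucially -- with the \emph{same} transition probabilities. By Eq.~(\ref{eq TM weak lumpability}) these are $G_{\tau}(\alpha|\beta) = \sum_{x_\alpha,y_\beta} T_{\tau}(x_\alpha|y_\beta)\pi_{y|\beta}$, i.e., built from the conditional steady state $\pi_{y|\beta}$ rather than from the admissible conditional $p_{y|\beta}(0)$. This is the decisive step: weak lumpability guarantees that the mesolevel Markov chain generated by $V(\lambda_t,t)$ coincides with the one whose transition probabilities are computed from $\pi_{y|\beta}$. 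Since the process is undriven, the transition matrix is time-homogeneous and the identification extends from the elementary step $\tau$ to the full propagator $G_{t,0}(\alpha|\beta) = \sum_{x_\alpha,y_\beta} T_{t,0}(x_\alpha|y_\beta)\pi_{y|\beta}$ for all $t$.

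Next I would check that this propagator preserves the marginalized microstate steady state. Using $\pi_{y|\beta}\pi_\beta = \pi_{y_\beta}$ together with the microlevel stationarity $\sum_y T_{t,0}(x|y)\pi_y = \pi_x$, one computes
\begin{equation}
 \sum_\beta G_{t,0}(\alpha|\beta)\pi_\beta = \sum_{x_\alpha}\sum_{\beta,y_\beta} T_{t,0}(x_\alpha|y_\beta)\pi_{y_\beta} = \sum_{x_\alpha}\pi_{x_\alpha} = \pi_\alpha.
\end{equation}
Because $G_{t,0}$ is assumed invertible, this also yields $\sum_\beta (G_{t,0}^{-1})_{\alpha,\beta}\pi_\beta = \pi_\alpha$, exactly as in Eq.~(\ref{eq invertible steady state}). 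Inserting both relations into the definition~(\ref{eq meso generator ME}) of the generator and transcribing the chain of equalities~(\ref{eq help thm steady state weak}) gives $\sum_\beta V_{\alpha,\beta}(\lambda_t,t)\pi_\beta = 0$, so that $\pi_\alpha$ is an IFP of the mesodynamics.

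The hard part will be the second step: justifying that the generator $V(\lambda_t,t)$, which is literally constructed from the \emph{admissible} conditional $p_{x|\alpha}(0)$, may nonetheless be evaluated using $\pi_{x|\alpha}$. This is not a triviality of the construction but a consequence of the ``same transition probabilities'' clause of Theorem~\ref{thm weak lumpability}, which hinges on both regularity and weak lumpability; without that clause the propagators built from $p_{x|\alpha}(0)$ and $\pi_{x|\alpha}$ would differ and $\pi_\alpha$ would fail to be preserved. Once that identification is granted, everything else is a verbatim copy of the undriven argument already given for Theorem~\ref{thm steady state weak}.
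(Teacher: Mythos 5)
Your proposal is correct and follows essentially the same route as the paper: invoke Theorem~\ref{thm weak lumpability} to conclude that the mesolevel transition probabilities can be computed with the conditional steady state $\pi_{y|\beta}$, deduce $\sum_\beta G_{t,0}(\alpha|\beta)\pi_\beta = \pi_\alpha$ from microlevel stationarity, combine with invertibility to get Eq.~(\ref{eq invertible steady state}), and then repeat the chain of equalities~(\ref{eq help thm steady state weak}). The paper compresses the first of these steps into ``it becomes clear that''; your write-up merely makes that step explicit.
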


\begin{proof} 
 Using the insights from Theorem~\ref{thm weak lumpability}, it becomes clear that 
 $\sum_\beta G_{t,0}(\alpha|\beta)\pi_\beta = \pi_\alpha$ for all times $t$. Together with the invertibility condition 
 we also get Eq.~(\ref{eq invertible steady state}) from the main text. These two relations were all we needed to ensure 
 that Eq.~(\ref{eq help thm steady state weak}) holds. 
\end{proof}

To conclude, as most physically relevant Markov chains are regular, the concept of weak lumpability helps us to deal 
with initial conditions, where the microstates have not reached a conditional steady state. Together with 
Theorem~\ref{thm ent prod} this would then imply an always positive EP rate because a weakly lumpable process with 
respect to an admissible initial state $p_x(0)\in\C A(0)$ is also 1-Markovian with respect to that state. 

However, a weakly lumpable process still requires the whole hierarchy to fulfill the Markov 
condition~(\ref{eq cond Markovianity}) and also on physical grounds we expect that it is a good approximation to 
assume that the conditional initial microstates are at steady state. If this is the case, then the notion of weak 
lumpability does not seem to add any further insights into the theory of 
Secs.~\ref{sec coarse-grained dissipative dynamics} and~\ref{sec classical system bath theory}.

\section{Instantaneous fixed points and time-local master equation}
\label{sec app IFP}

Formally exact time-local ME can be derived in different ways. One particular construction was given in 
Eq.~(\ref{eq meso generator ME}), but another possibility is given by the time-convolutionless 
ME~\cite{BreuerPetruccioneBook2002, DeVegaAlonsoRMP2017, FulinskiKramarczyk1968, ShibataTakahashiHashitsumeJSP1977} 
and see Ref.~\cite{AnderssonCresserHallJMO2007} for yet another way of construction. We will here show that, as long as 
the inverse of the TM $G_{t,0}$ defined in Eq.~(\ref{eq TM mesolevel general}) exists, the generators all coincide. 
Hence, the IFP computed with any of those time-local MEs is the same and therefore the IFP is a well-defined concept. 

To see this, let us denote by $V^{(1)}(t)$ and $V^{(2)}(t)$ the generators of an exact time-local ME derived in two 
different ways (we suppress the depencence on $\lambda_t$ here for simplicity). Because both are assumed to be formally 
exact for any admissible initial condition, we have 
\begin{equation}
 \sum_\beta[V_{\alpha,\beta}^{(1)}(t)-V_{\alpha,\beta}^{(2)}(t)] p_\beta(t) = 0
\end{equation}
for any mesostate $p_\beta(t)$, which is reachable from the class of admissible initial states $\C A(0)$. This equation 
also holds for any linear combination of such states, i.e., 
\begin{equation}
 \sum_i \mu_i \sum_\beta[V_{\alpha,\beta}^{(1)}(t)-V_{\alpha,\beta}^{(2)}(t)] p^{(i)}_\beta(t) = 0
\end{equation}
with $\mu_i\in\mathbb{R}$. We now use that $G_{t,0}$ is invertible for any finite $t$, which implies in particular that 
the dimension of the image of $G_{t,0}$ cannot decrease. But since the class of admissible initial states spans the 
entire vector space including all probability distributions $p_\beta(t)$, we can always choose 
\begin{equation}
 \sum_i \mu_i p^{(i)}_\beta(t) = \delta_{\beta,\beta'}
\end{equation}
for any $\beta'$. This implies that $V_{\alpha,\beta'}^{(1)}(t)-V_{\alpha,\beta'}^{(2)}(t) = 0$ 
for any $\alpha$ and $\beta'$. Hence, $V^{(1)}(t) = V^{(2)}(t)$.

\section{Fixed points of coarse-grained Hamiltonian dynamics}
\label{sec app Hamiltonian dynamics}

In this appendix we rederive those results from Sec.~\ref{sec mathematical results}, which will be of relevance for 
Sec.~\ref{sec classical system bath theory}. Let us start with an arbitrary Hamiltonian $H(\lambda_t)$ and an arbitrary 
fixed partition $\boldsymbol\chi$. The ``master equation'' corresponding to this Hamiltonian is the Liouville equation 
\begin{equation}\label{eq Liouville equation}
 \frac{\partial}{\partial t}\rho(x;t) = \{H(x;\lambda_t),\rho(x;t)\},
\end{equation}
where $\{\cdot,\cdot\}$ denotes the Poisson bracket and $\rho(x;t)$ is the probability distribution defined on the 
phase space consisting of the collection of all positions $\bb q$ and momenta $\bb p$. For simplicity and analogy with 
the main text we denote a point in phase space by $x = (\bb q,\bb p)$. 
For any partition $\boldsymbol\chi$, the mesostates are defined as 
\begin{equation}
 \rho(\alpha;t) = \int_{\chi_\alpha} dx \rho(x;t),
\end{equation}
where $\alpha$ can be continuous (e.g., if we trace out a bath) or discrete (e.g., if we lump the motion of a particle 
in a double well potential into two states ``left'' and ``right''). Furthermore, it turns out to be convenient to denote 
the dynamical map generated by Eq.~(\ref{eq Liouville equation}) over a finite time interval by $\Phi_{t,0}$, i.e., 
\begin{equation}
 \rho(x;t) = \int dx' \Phi_{t,0}(x|x')\rho(x';0),
\end{equation}
similar to the time-evolution operator in quantum mechanics. 

Clearly, as in Sec.~\ref{sec mathematical results} for a given conditional initial microstate $\rho(x|\alpha;0)$, 
$\Phi_{t,0}$ induces a map at the mesolevel, 
\begin{align}
 \rho(\alpha;t)			&=	\int d\beta \C G_{t,0}(\alpha|\beta)\rho(\beta;0),	\\
 \C G_{t,0}(\alpha|\beta)	&\equiv \int_{\chi_\alpha} dx \int_{\chi_\beta} dx' \Phi_{t,0}(x|x') \rho(x'|\beta;0).
\end{align}
Using the procedure outlined in Sec.~\ref{sec time dependent MEs} or the time-convolutionless 
ME~\cite{FulinskiKramarczyk1968, ShibataTakahashiHashitsumeJSP1977, BreuerPetruccioneBook2002, DeVegaAlonsoRMP2017}, 
we write the time evolution of the mesostate again in terms of a formally exact ME [cf.~Eq.~(\ref{eq ME meso general})] 
\begin{equation}\label{eq ME continuous}
 \frac{d}{dt}\rho(\alpha;t) = \C V(\lambda_t,t)\rho(\alpha;t).
\end{equation}

We note that the Hamiltonian dynamics generated by $\Phi_{t,s}$ ($t\ge s$) are Markovian. It is therefore possible to 
straightforwardly extend the definition of lumpability to Hamiltonian dynamics for any propagator $\Phi_{t,s}$. 
We will here choose a version for the infinitesimal propagator $\Phi_{t+\delta t,t}$ which is most useful in the 
following. 

\begin{mydef}[Lumpability -- continuous version]\label{def strong lumpability cont}
 The dynamics generated by Eq.~(\ref{eq Liouville equation}) is lumpable with respect to the partition 
 $\boldsymbol\chi$ if for every initial distribution $\rho(x;0)$ the lumped process is Markovian and the generator 
 $\C V(\lambda_t,t)$ in Eq.~(\ref{eq ME continuous}) does not depend on $\rho(x;0)$. 
\end{mydef}

We now formulate the analogue of Theorem~\ref{thm steady state strong}: 

\begin{thm}
 If the stochastic process is lumpable as in Definition~\ref{def strong lumpability cont} for some 
 time-interval $I$, then the IFP of the system is given by the marginal global Gibbs state 
 with respect to $H(x;\lambda_t)$ for all $t\in I$, i.e., $\C V(\lambda_t,t)\pi(\alpha;\lambda_t) = 0$ with 
 \begin{equation}
  \pi(\alpha;\lambda_t) = \int_{\chi_\alpha} dx \frac{e^{-\beta H(x;\lambda_t)}}{Z(\lambda_t)}.
 \end{equation}
\end{thm}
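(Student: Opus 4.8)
The plan is to transcribe the proof of Theorem~\ref{thm steady state strong} into the continuous phase-space setting, with the rate matrix $W(\lambda_t)$ replaced by the Liouvillian $\C L(\lambda_t)\equiv\{H(\cdot;\lambda_t),\,\cdot\,\}$ and sums over microstates replaced by phase-space integrals. The one genuinely new input I need is the continuous analogue of the microlevel fixed-point relation $W(\lambda_t)\bs\pi(\lambda_t)=0$. This is immediate here: because $\pi(x;\lambda_t)=e^{-\beta H(x;\lambda_t)}/Z(\lambda_t)$ is a function of $H(x;\lambda_t)$ alone, one has $\{H(x;\lambda_t),\pi(x;\lambda_t)\}=0$, so the global Gibbs state is a stationary state of the frozen-$\lambda_t$ Liouville flow generated by Eq.~(\ref{eq Liouville equation}).

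First I would record the continuous counterpart of Corollary~\ref{thm cor lumpable ME}. Expanding Definition~\ref{def strong lumpability cont} for the infinitesimal propagator $\Phi_{t+\delta t,t}$ to first order in $\delta t$ shows that lumpability forces the induced mesoscale generator in Eq.~(\ref{eq ME continuous}) to act through a kernel $\C V(\alpha|\beta;\lambda_t)=\int_{\chi_\alpha}dx\,\C L_{x|y_\beta}(\lambda_t)$ that is independent of which representative $y_\beta\in\chi_\beta$ (equivalently, of the conditional microstate $\rho(x|\beta;0)$) is chosen, where $\C L_{x|x'}$ denotes the distributional kernel of $\C L(\lambda_t)$.

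With these two ingredients the argument parallels the three-line display in the proof of Theorem~\ref{thm steady state strong}. Since lumpability makes $\C V(\lambda_t)$ independent of the initial microstate, I am free to evaluate it on the conditional Gibbs state $\pi(x|\beta;\lambda_t)=\pi(x;\lambda_t)/\pi(\beta;\lambda_t)$; then the full state carrying the meso-marginal $\pi(\beta;\lambda_t)$ is exactly the global Gibbs state $\pi(x;\lambda_t)$. Using the representative-independence to pull the cell integral $\int_{\chi_\alpha}dx$ through the bracket, the action of $\C V(\lambda_t)$ on $\pi(\alpha;\lambda_t)$ collapses to the meso-marginal of $\C L(\lambda_t)\pi(x;\lambda_t)$, which vanishes by the fixed-point relation above; hence $\C V(\lambda_t,t)\pi(\alpha;\lambda_t)=0$. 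An equivalent and more transparent phrasing avoids the kernel altogether: freeze the protocol at $\lambda_t$, feed the global Gibbs state into the (lumpable, hence microstate-independent) mesodynamics, and observe that since $\pi(x;\lambda_t)$ is stationary under $\C L(\lambda_t)$ its marginal $\pi(\alpha;\lambda_t)$ is stationary as well.

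The main obstacle I anticipate is not conceptual but technical: because $\C L(\lambda_t)$ is a first-order differential operator, its ``kernel'' $\C L_{x|x'}$ involves derivatives of delta functions, so writing the lumpability condition in the form above and commuting the cell integral $\int_{\chi_\alpha}dx$ past the Poisson bracket requires an integration by parts and control of the boundary flux on $\partial\chi_\alpha$. This is precisely why I would prefer to close the argument through the ``freeze-and-evolve the Gibbs state'' route, which needs only the genuine stationarity of $\pi(x;\lambda_t)$ under the frozen Hamiltonian flow together with the microstate-independence supplied by lumpability, thereby sidestepping the distributional manipulations entirely. I would also emphasize that lumpability is indispensable here: without it the induced meso-generator would depend on the conditional microstate, and substituting the conditional Gibbs state would no longer be legitimate.
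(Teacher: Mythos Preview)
Your proposal is on the right track and shares the core observation with the paper, namely that $\{H(x;\lambda_t),\pi(x;\lambda_t)\}=0$ because the Gibbs state is a function of $H$ alone. However, the route you take is more laborious than necessary, and your preferred ``freeze-and-evolve'' alternative still hides the very difficulty you are trying to avoid.

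Your main plan is to transcribe Theorem~\ref{thm steady state strong} by first establishing a continuous analogue of Corollary~\ref{thm cor lumpable ME}. As you yourself note, the Liouvillian acts as a first-order differential operator, so writing a kernel condition ``$\int_{\chi_\alpha}dx\,\C L_{x|y_\beta}$ is independent of $y_\beta$'' and commuting cell integrals past Poisson brackets drags in derivatives of delta functions and boundary fluxes on $\partial\chi_\alpha$. Your fallback, ``freeze the protocol at $\lambda_t$ and feed in the Gibbs state,'' does not actually sidestep this: to conclude anything about the generator $\C V(\lambda_t,t)$ of the \emph{original} driven dynamics from a computation with the \emph{frozen} dynamics, you need to know that the lumped generator at time $t$ depends only on the instantaneous $\C L(\lambda_t)$---which is precisely the continuous Corollary~\ref{thm cor lumpable ME} you were trying to avoid proving.

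The paper bypasses all of this by exploiting a feature specific to the Hamiltonian setting that was unavailable in the discrete rate-matrix case: the full flow $\Phi_{t,0}$ is \emph{invertible} and maps probability densities to probability densities. So instead of freezing anything, the paper keeps the original driven dynamics and simply chooses the initial state $\rho(x;0)=\Phi_{t,0}^{-1}\pi(x;\lambda_t)$. By lumpability (independence of $\C V(\lambda_t,t)$ from the initial distribution) this is a legitimate choice; at time $t$ the full microstate is exactly $\pi(x;\lambda_t)$, and the instantaneous time derivative of its marginal is $\int_{\chi_\alpha}dx\,\{H(x;\lambda_t),\pi(x;\lambda_t)\}=0$. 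No continuous Corollary~\ref{thm cor lumpable ME}, no kernel manipulations, no boundary terms. The moral: in the Hamiltonian setting, reversibility of the microdynamics lets you place \emph{any} desired microstate at time $t$ by backward evolution, so the phrase ``I am free to evaluate it on the conditional Gibbs state'' becomes literally true without further work.
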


\begin{proof}
 By assumption, the dynamics of the system is generated by 
 \begin{equation}
  \begin{split}
   \frac{d}{dt}\pi(\alpha;\lambda_t)	&=	\C V(\lambda_t,t)\pi(\alpha;\lambda_t)	\\
					&=	\int_{\chi_\alpha} dx\{H(x;\lambda_t),\pi(\alpha;\lambda_t)\rho(x|\alpha;t)\},
  \end{split}
 \end{equation}
 where $\rho(x|\alpha;t)$ is so far an unknown conditional microstate and where we used that the reduced dynamics from 
 Eq.~(\ref{eq ME continuous}) is formally exact and thus, they coincide with the coarse-grained global dynamics. 
 
 Next, by assumption of lumpability, we know that $\C V(\lambda_t,t)$ is the same \emph{for any initial state}. 
 Let us choose the particular initial state 
 \begin{equation}
  \rho(x;0) = \Phi_{t,0}^{-1}\pi(x;\lambda_t),
 \end{equation}
 which is obtained by evolving the Gibbs state $\pi(x;\lambda_t)$ at time $t$ backward in time. Since the global 
 dynamics is Hamiltonian, we remark that the inverse of $\Phi_{t,0}$ exists and maps well-defined probability 
 distribution onto well-defined probability distributions. But for this choice we clearly have 
 \begin{equation}
  \frac{d}{dt}\pi(\alpha;\lambda_t) = \int_{\chi_\alpha} dx\{H(x;\lambda_t),\pi(x;\lambda_t)\} = 0.
 \end{equation}
 This implies the theorem. 
\end{proof}

In principle, of course, we expect the concept of lumpability to be of limited use for Hamiltonian dynamics. 
However, we can also establish the first part of Theorem~\ref{thm steady state weak} for Hamiltonian dynamics: 

\begin{thm}
 Consider an undriven Hamiltonian. If the set of admissible initial states obeys 
 $\C A(0) \subset\C A_\pi$, then $\pi(\alpha)$ is a IFP of the stochastic process at the mesolevel. 
\end{thm}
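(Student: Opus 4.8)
The plan is to follow the proof of Theorem~\ref{thm steady state weak} line by line, replacing the sums over discrete microstates by phase-space integrals and invoking the single dynamical fact special to Hamiltonian flow: the global Gibbs state $\pi(x) = e^{-\beta H(x)}/Z$ is an exact stationary state of the Liouville equation~(\ref{eq Liouville equation}). This holds because any function of $H$ Poisson-commutes with $H$, so $\{H,\pi(x)\} = 0$ and hence $\Phi_{t,0}\pi(x) = \pi(x)$ for all $t$. As in the main text, I assume $\C G_{t,0}$ to be invertible on the admissible initial states (undriven, so I drop the $\lambda_t$ dependence), which is what makes the time-local generator $\C V(t)$ in Eq.~(\ref{eq ME continuous}) well defined.

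The crucial step is to show that, starting from the coarse-grained Gibbs state, we remain in it, i.e.\ $\int d\beta\,\C G_{t,0}(\alpha|\beta)\pi(\beta) = \pi(\alpha)$ for all $t$. First I would note that the hypothesis $\C A(0)\subset\C A_\pi$ fixes the conditional microstate entering the definition of $\C G_{t,0}$ to be the conditional Gibbs state $\pi(x|\beta) = \pi(x)/\pi(\beta)$ for $x\in\chi_\beta$. Therefore
\begin{equation}
 \int d\beta\,\C G_{t,0}(\alpha|\beta)\pi(\beta) = \int_{\chi_\alpha}dx\int dx'\,\Phi_{t,0}(x|x')\pi(x'),
\end{equation}
since $\pi(x'|\beta)\pi(\beta) = \pi(x')$ reconstructs the full Gibbs state and integrating over all mesostates $\beta$ covers the entire phase space. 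Using the invariance $\Phi_{t,0}\pi = \pi$, the right-hand side collapses to $\int_{\chi_\alpha}dx\,\pi(x) = \pi(\alpha)$, as desired.

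With this identity in hand the remainder is purely algebraic. Invertibility of $\C G_{t,0}$ immediately gives the inverse relation $\int d\beta\,(\C G_{t,0}^{-1})(\alpha|\beta)\pi(\beta) = \pi(\alpha)$, the continuous analogue of Eq.~(\ref{eq invertible steady state}). Inserting this into the definition~(\ref{eq meso generator ME}) of the generator and repeating the chain of equalities in Eq.~(\ref{eq help thm steady state weak}) verbatim yields $\int d\beta\,\C V_{\alpha,\beta}(t)\pi(\beta) = 0$, i.e.\ $\C V(t)\pi(\alpha) = 0$, which is precisely the statement that $\pi(\alpha)$ is an IFP.

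I expect the only genuine subtlety — as opposed to the discrete case — to be of a technical, measure-theoretic nature: ensuring that $\Phi_{t,0}$ is a well-defined invertible propagator on distributions and that the interchange of the $x'$-integration with the $\beta$-partition is legitimate. Both are guaranteed by the fact that the Hamiltonian flow is a volume-preserving diffeomorphism (Liouville's theorem), so that $\Phi_{t,0}^{-1} = \Phi_{0,t}$ exists and maps probability densities to probability densities. This is exactly the point already invoked in Appendix~\ref{sec app Hamiltonian dynamics} for the lumpable case, so no new machinery is needed.
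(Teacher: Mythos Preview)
Your proposal is correct and follows exactly the route the paper takes: the paper's proof simply reads ``The proof is identical to the first part of the proof of Theorem~\ref{thm steady state weak}'', and you have carried out precisely that translation to the continuous Hamiltonian setting, even spelling out explicitly why $\C A(0)\subset\C A_\pi$ together with $\Phi_{t,0}\pi=\pi$ gives $\int d\beta\,\C G_{t,0}(\alpha|\beta)\pi(\beta)=\pi(\alpha)$, which the paper leaves implicit.
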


The proof is identical to the first part of the proof of Theorem~\ref{thm steady state weak}. Furthermore, as 
in Theorem~\ref{thm steady state general} this also holds for the time-dependent case whenever 
$\C A(t)\subset\C A_\pi(\lambda_t)$ or $\pi(x;\lambda_t) \in \C A(t)$. 

Finally, Theorem~\ref{thm ent prod} then follows analogously by replacing the discrete relative entropy by its 
differential version and by noting that a proper generalization of Lemma~\ref{lemma Markov contractivity} holds also for 
infinite dimensions~\cite{SpohnJMP1978}.

\end{document}